\PassOptionsToPackage{table}{xcolor}
\documentclass[a4paper,USenglish,cleveref,autoref,thm-restate]{lipics-v2021}

\usepackage{mathtools}      %
\usepackage{dsfont}
\usepackage{xfrac}
\usepackage{pifont}
\usepackage{ifthen}
\usepackage{tikz}
\usetikzlibrary{math,calc,positioning,arrows,arrows.meta,shapes.geometric,fit,backgrounds}
\usepackage{placeins}       %

\usepackage[disable]{todonotes}
\newcommand{\twrchanged}[1]{{{#1}}}

\usepackage{fontawesome5}   %

\newcommand{\leanref}[2]{}

\newcommand{\co}{\mathbb{C}}
\newcommand{\z}{\mathbb{Z}}
\newcommand{\natur}{\mathbb{N}}
\newcommand{\seq}{\text{Seq}}
\newcommand{\Rot}{R_{\infty}}
\newcommand{\ctl}{\mathrm{ctl}}
\let\bra\relax
\let\ket\relax
\DeclarePairedDelimiter\bra{\langle}{|}
\DeclarePairedDelimiter\ket{|}{\rangle}

\providecommand{\gen}[1]{\ensuremath{\left\langle #1 \right\rangle}}
\newcommand{\id}{I}
\newcommand{\beforeq}{\preccurlyeq}
\newcommand{\highlim}{B_{\mathrm{high}}}
\newcommand{\rootlim}{B_{\mathrm{root}}}
\newcommand{\Yes}{\ding{51}}
\newcommand{\No}{\ding{55}}

\tikzstyle{leaf}=[draw, rectangle,minimum size=4.mm, inner sep=3pt]
\tikzstyle{var}=[circle,draw=black!70,solid,thick,minimum size=6mm]
\tikzstyle{bdd}=[regular polygon, regular polygon sides=3, draw=black!70,solid,thick,inner sep=0.5mm]
\tikzstyle{n}=[->,loosely dashed,thick]
\tikzstyle{p}=[->,solid,thick]
\tikzstyle{b}=[->,densely dashdotted,ultra thick]
\tikzstyle{e0}[0]=[dashed,thick,bend right=#1,->,thick]
\tikzstyle{e1}[0]=[solid, bend left =#1,->,thick]
\tikzstyle{lbl}=[draw,fill=white,inner sep=2pt, minimum size=0cm,line width=.5pt]

\def\rotateclockwise#1{
  \newdimen\xrw
  \pgfextractx{\xrw}{#1}
  \newdimen\yrw
  \pgfextracty{\yrw}{#1}
  \pgfpoint{\yrw}{-\xrw}
}

\def\rotatecounterclockwise#1{
  \newdimen\xrcw
  \pgfextractx{\xrcw}{#1}
  \newdimen\yrcw
  \pgfextracty{\yrcw}{#1}
  \pgfpoint{-\yrcw}{\xrcw}
}

\def\outsidespacerpgfclockwise#1#2#3{
  \pgfpointscale{#3}{
    \rotateclockwise{
      \pgfpointnormalised{
        \pgfpointdiff{#1}{#2}}}}
}

\def\outsidespacerpgfcounterclockwise#1#2#3{
  \pgfpointscale{#3}{
    \rotatecounterclockwise{
      \pgfpointnormalised{
        \pgfpointdiff{#1}{#2}}}}
}

\def\outsidepgfclockwise#1#2#3{
  \pgfpointadd{#2}{\outsidespacerpgfclockwise{#1}{#2}{#3}}
}

\def\outsidepgfcounterclockwise#1#2#3{
  \pgfpointadd{#2}{\outsidespacerpgfcounterclockwise{#1}{#2}{#3}}
}

\def\outside#1#2#3{
  ($ (#2) ! #3 ! -90 : (#1) $)
}

\def\cornerpgf#1#2#3#4{
  \pgfextra{
    \pgfmathanglebetweenpoints{#2}{\outsidepgfcounterclockwise{#1}{#2}{#4}}
    \let\anglea\pgfmathresult
    \let\startangle\pgfmathresult

    \pgfmathanglebetweenpoints{#2}{\outsidepgfclockwise{#3}{#2}{#4}}
    \pgfmathparse{\pgfmathresult - \anglea}
    \pgfmathroundto{\pgfmathresult}
    \let\arcangle\pgfmathresult
    \ifthenelse{180=\arcangle \or 180<\arcangle}{
      \pgfmathparse{-360 + \arcangle}}{
      \pgfmathparse{\arcangle}}
    \let\deltaangle\pgfmathresult

    \newdimen\x
    \pgfextractx{\x}{\outsidepgfcounterclockwise{#1}{#2}{#4}}
    \newdimen\y
    \pgfextracty{\y}{\outsidepgfcounterclockwise{#1}{#2}{#4}}
  }
  -- (\x,\y) arc [start angle=\startangle, delta angle=\deltaangle, radius=#4]
}

\def\corner#1#2#3#4{
  \cornerpgf{\pgfpointanchor{#1}{center}}{\pgfpointanchor{#2}{center}}{\pgfpointanchor{#3}{center}}{#4}
}

\def\hedgeiii#1#2#3#4{
  \outside{#1}{#2}{#4} \corner{#1}{#2}{#3}{#4} \corner{#2}{#3}{#1}{#4} \corner{#3}{#1}{#2}{#4} -- cycle
}

\def\hedgem#1#2#3#4{
  
  \outside{#1}{#2}{#4}
  \pgfextra{
    \def\hgnodea{#1}
    \def\hgnodeb{#2}
  }
  foreach \c in {#3} {
    \corner{\hgnodea}{\hgnodeb}{\c}{#4}
    \pgfextra{
      \global\let\hgnodea\hgnodeb
      \global\let\hgnodeb\c
    }
  }
  \corner{\hgnodea}{\hgnodeb}{#1}{#4}
  \corner{\hgnodeb}{#1}{#2}{#4}
  -- cycle
}

\def\hedgeii#1#2#3{
  \hedgem{#1}{#2}{}{#3}
}

\title{Generalized LIMDDs: Succinctness and Canonicity for Decision Diagrams Modulo a Group}
\titlerunning{Generalized LIMDDs}

\author{Arend-Jan Quist}{Leiden Institute of Advanced Computer Science, Leiden University, Leiden, The Netherlands}{}{https://orcid.org/0000-0002-6501-2112}{}
\author{Alexis de Colnet}{Leiden Institute of Advanced Computer Science, Leiden University, Leiden, The Netherlands}{}{https://orcid.org/0000-0002-7517-6735}{}
\author{Thomas Reps}{Department of Computer Sciences, University of Wisconsin-Madison, Madison, Wisconsin, USA}{}{https://orcid.org/0000-0002-5676-9949}{}
\author{Alfons Laarman}{Leiden Institute of Advanced Computer Science, Leiden University, Leiden, The Netherlands}{}{https://orcid.org/0000-0002-2433-4174}{}

\authorrunning{A.-J. Quist, A. de Colnet, T. Reps, and A. Laarman}

\Copyright{Arend-Jan Quist, Alexis de Colnet, Thomas Reps, and Alfons Laarman}

\ccsdesc[500]{Theory of computation~Formal languages and automata theory}
\ccsdesc[500]{Theory of computation~Logic and verification}
\ccsdesc[300]{Theory of computation~Data structures design and analysis}
\ccsdesc[300]{Theory of computation~Quantum computation theory}

\keywords{decision diagrams, minimization modulo a group, canonical form, succinctness, knowledge compilation, interactive theorem proving}

\nolinenumbers

\begin{document}
\maketitle

\begin{abstract}
  A reduced ordered decision diagram is the minimal automaton of a
  function on words of fixed length: its nodes are the residuals, and reduction is
  the Myhill--Nerode quotient. We study what happens when that quotient is coarsened
  by a group. 
  Fix a group $G$ acting on residuals, merge two nodes when their
  residuals lie in one $G$-orbit, and record the group element on the edge. For
  functions $\{0,1\}^n\to\mathbb{C}$ and $G$ the Pauli group this is the Local
  Invertible Map Decision Diagram. We take $G$ from the two-parameter family
  generated by $C^{\leq k}R_q$, the phase rotation of order $q$ with up to $k$
  control qubits, with and without the bit flip $X$. We show that this gives exponential succinctness improvements compared to Pauli-LIMDD, and we determine the succinctness order of the family completely. The separating objects are hypergraph states, which can always be efficiently represented by some member of the family. We settle the tractability of five
  queries and eight transformations, which is invariant across the family, and shows the same behavior as Pauli-LIMDD. We give a five-rule reduction
  system whose normal forms are unique for every member of the family, and we show that this canonical form is computable in polynomial time in the size of the LIMDD.
  We show that, when coarsening beyond the (anti-)diagonal groups, the calculation of a minimal sized normal form turns out to be non-local and it might to rebuild the whole diagram. 
\end{abstract}

\clearpage

\section{Introduction}\label{sec:intro}

\twrchanged{
Ordered Binary Decision Diagrams~\cite{bryant86} (OBDDs) are a widely used data
structure for representing a pseudo-Boolean function
$f\colon\{0,1\}^n\to D$ in a compressed form.
An OBDD can be considered to be a minimal automaton that maps words of length $n$
to a value in $D$.
An OBDD can be represented as a directed acyclic graph (dag) that has a unique normal form, and hence
equality can be tested by a pointer comparison.
The \emph{residuals} of $f$ are the sub-functions (sub-dags) obtained by fixing a prefix.
The size is the number of distinct residuals, so lower bounds can be obtained via
counting arguments.
The normal form can be computed bottom-up, one layer at a time, because residual equality
at a node is decided from residual equality at its children.
}

\twrchanged{
Several generalizations of OBDDs have been introduced by putting a
\emph{label}, drawn from some label-domain $\mathcal{L}$ on each edge.
The semantics of a diagram with root edge $e$, called its \emph{state}
and denoted by $\ket e$, is a sequence defined as follows, where
$\ket{0}$ and $\ket{1}$ are the vectors $[1~0]^t$ and $[0~1]^t$,
respectively, and $\otimes$ denotes Kronecker product.
\begin{itemize}
    \item an edge $e$ into $v$ represents $\ket e=label(e) \cdot \ket v$;
    \item a node $v$ with $0$-edge $e_0$ and $1$-edge $e_1$ represents
        $\ket v=\ket0\otimes\ket{e_0}+\ket1\otimes\ket{e_1}$;
    \item the leaf represents the value $1$.
\end{itemize}
The state represented by a diagram is the state represented by its (labeled) root edge.

Two nodes can be merged whenever the states they represent agree
\emph{up to a label}, and labels that relate them (to some common
state) are recorded on the respective incoming edges.
Thus, the richer $\mathcal L$ is, the more nodes collapse, and
the smaller the diagram.
Writing $f,f'$ for the states of two nodes, $\mathcal{L}$-labeled decision diagrams
can be classified as follows:

\begin{center}\small
\begin{tabular}{lll}
  \textbf{merge two nodes when~~~~~~~} & \textbf{label set $\mathcal L$~~~~~~~} & \textbf{diagram}\\\hline
  $f=f'$
    & $\{1\}$
    & OBDDs~\cite{bryant86}\\
  $f=\lambda\cdot f'$
    & $\co\setminus\{0\}$
    & EVDDs~\cite{lai1994evbdd,qmdd}\\
  $f=\lambda\cdot P_1\otimes\dots\otimes P_n\cdot f'$
    & Pauli LIMs
    & Pauli-LIMDDs~\cite{limdd}\\
  $f=\lambda\cdot g\cdot f'$, $g\in\gen G$
    & $G$-LIMs
    & $\gen G$-LIMDDs (this paper)
\end{tabular}
\end{center}

\noindent
This paper takes on the general case in the last row where the label set is a group $\gen G$.
}
Two nodes now merge when their residuals lie in one $G$-orbit.
The edge records the group element that relates them.
Quotienting by orbit equality instead of equality can only shrink the diagram.
The price we pay is that a node now denotes an orbit, not a residual.
So the normal form needs a unique identifier per orbit, and a reduction rule
that computes it.

One instance is already understood.
Take $D=\co$, so that $f$ can be an $n$-qubit quantum state, and let $G$ be the
Pauli group.
This yields the Pauli-\twrchanged{Local Invertible Map Decision Diagram (LIMDD)}~\cite{limdd}.
Pauli-LIMDDs succinctly represent stabilizer states ---a classically simulatable class~\cite{gottesman1997stabilizer,aaronson2008improved} that is ubiquitous in error correction~\cite{calderbank1997quantum,fowler2012surface}---
while no \twrchanged{EVDD} does~\cite{limdd}.
This expressivity of LIMDD makes it a powerful tool for quantum program analysis, simulation, verification, and synthesis~\cite{vinkhuijzen2023efficient,limdd,hong2026quantum}.

In related work, tensor networks integrate the stabilizer formalism in the same
spirit~\cite{masot2024stabilizer,nakhl2024stabilizer,mello2024clifford,lami2024quantum}.
Tensor trains also admit a
canonical form~\cite{perez2007matrix,vidal2003efficient,schollwock2011density}, but
computing it is expensive, thus only done on demand, %
whereas a decision diagram is canonical by construction~\cite{quist2026tensor}.

\paragraph*{The gap.}
Pauli-LIMDD pays a cubic price to regain canonicity by computing stabilizer groups~\cite{limdd}.
Most work since has gone into reducing that cost:
three independent implementations~\cite{hong2025limtdd,vinkhuijzen2023efficient,sanders2026faster} propose (complete and incomplete) heuristics, and a knowledge compilation map~\cite{vinkhuijzen2024a} settled the tractability of query and manipulation operations. The effort has thus concentrated on making the LIM machinery faster while holding the group
fixed~\cite{sanders2026faster}.

The opposite route is to make the overhead pay off
better by coarsening through a \emph{larger} group $G$ that induces more sharing.
Whether this route leads anywhere is open. Rotation LIMDDs, the case
$k=0$, were explored in~\cite{hong2025limtdd,hong2025advancingLIMDD} with good
practical results but without asymptotic separations. So it is not known whether
enlarging the group buys anything provable, what it costs in canonicity, or whether
it forfeits the tractability of operations. %

We take this second route, which raises three questions about larger groups $G$:
(i) do $G$-LIMDDs still have unique normal forms computable in polynomial time?
(ii) how much succinctness does $G$ buy?
(iii) what queries and operations remain tractable on $G$-LIMDDs?

\paragraph*{Our approach.}
 
We answer the three questions for a
two-parameter family of groups that contains the Pauli case and moves away from it
in two independent directions.
We focus on this family of groups because they allow for unique normal forms computable in polynomial time, 
a property that is essential for the efficiency of LIMDDs.
These groups are (anti-)diagonal and that explains the choice for these groups:
for groups beyond (anti-)diagonal groups we provide a counter example that shows that the normal form 
computation is not local anymore and might require to rebuild the whole diagram.

Let
$R_q=\left(\begin{smallmatrix}1&0\\0&e^{2\pi i/q}\end{smallmatrix}\right)$ be the
phase rotation of order $q$, and let $C^hR_q$ be $R_q$ guarded by $h$ control
qubits, so $C^0R_q=R_q$ and $C^1R_2=CZ$. The set $C^{\leq k}R_q$ collects $C^hR_q$
for all $h\leq k$, and $\gen{C^{\leq k}R_q}$ is generated by placing these matrices
on all choices of qubits. We call $k$ the \emph{control arity} and $q$ the \emph{root order}. The second group within the family adds the bit flip $X$. Up to scalars
$\gen{R_2,X}$ is the Pauli group, so $k=0$, $q=2$ with $X$ is the original Pauli-LIMDD~\cite{limdd},
and $k=0$ with general $q$ gives the rotation diagrams of~\cite{hong2025limtdd}.
Raising $k$ widens the maps from $1$-local to $(k+1)$-local, i.e., gates that act on $k+1$ qubits. Raising $q$ leaves
the class of (controlled) $Z$ gates and generalizes it to general $Z$-rotations.

The two dials are chosen so that the state classes one actually wants land inside the
family. Raising $q$ reaches the diagonal levels of the Clifford
hierarchy~\cite{gottesman1999demonstrating,cui2017diagonal}; raising $k$ reaches the
multi-qubit phase gates of commuting (IQP) circuits~\cite{shepherd2009temporally,bremner2011classical}
and of hypergraph states~\cite{rossi2013quantum}; and adding $X$ retains the stabilizer
states~\cite{gottesman1997stabilizer} that the Pauli case already covered.
\Cref{sec:results} makes each of these precise.
In fact, 
hypergraph states have linear-size diagrams in the family,
and the correspondence is exact: the linear-sized diagrams of the family are, up to
normalization, precisely the hypergraph states.

Moreover, raising the control arity, or multiplying the root order, provides an exponential gain in succinctness, and adding
the bit flip brings a third exponential separation. The quantum Fourier transform matrix is a linear-size tower over plain rotations, without controls.
None of these generalizations costs tractability: the frontier of tractable queries and
transformations is invariant across the whole family. Moreover, for every member of the family, a canonical form can be calculated in polynomial time in the size of the diagram.

In summary, we determine the succinctness order of the family completely
(\Cref{fig:succinctness_picture}). We also give a five-rule reduction system with
unique normal forms for every member of the family, and we settle when the
normal form is computable in polynomial time: for every member of the family,
with bit flips and without, and for every $q$. We settle five
queries and eight transformations that are tractable across the whole family. We show that beyond the (anti-)diagonal
label groups, calculating a normal form that is minimal stops being a local computation. \Cref{sec:prelim}
fixes the definitions, and \Cref{sec:results} states every result of the paper
in full. Full proofs are in later sections.

\subsection{Notations and definitions}\label{sec:prelim}

We write $[n]$ for $\{1,\dots,n\}$ and $[a,b]$ for $\{a,a+1,\dots,b\}$. For a set
$S$ we write $\binom{S}{k}$ for the set of its $k$-element subsets. We put
$\omega_q=e^{2\pi i/q}$.

An $n$-qubit quantum state is a nonzero vector in $\co^{2^n}$. We index its
entries by bit strings, so a state is a function $\{0,1\}^n\to\co$ that assigns an
\emph{amplitude} to every input. We write $\ket x$ for the state that is $1$ on
$x$ and $0$ elsewhere, and $\ket\phi=\sum_x\phi(x)\ket x$ for a general state.
Splitting off the first bit gives the two \emph{cofactors} $\phi_0$ and $\phi_1$
of $\phi$, so that $\ket\phi=\ket0\otimes\ket{\phi_0}+\ket1\otimes\ket{\phi_1}$.

\begin{figure}[b!]
    \centering
    \begin{subfigure}[b]{0.32\textwidth}
    \centering
    \begin{tikzpicture}[scale=0.3,every path/.style={>=latex},inner sep=1pt,minimum size=0.3cm,line width=1pt,node distance=.8cm,thick,font=\scriptsize]
        \node[] (a0) {};
        \node[draw,circle,below = .4cm of a0] (a1) {$3$};
        \node[below = .5cm of a1 ] (a2) {};
        \node[draw,circle,left = of a2] (a2l) {$2$};
        \node[draw,circle,right = of a2] (a2r) {$2$};
        \node[draw,circle,below = .5cm of a2l] (a3l) {$1$};
        \node[draw,circle,below = .5cm of a2r] (a3r) {$1$};
        \node[below = .5cm  of a3l] (a4l) {};
        \node[draw,circle,right = of a4l] (a4) {$0$};
        \draw[e1] (a0) edge node[right] {$\frac{1}{2\sqrt{2}}I^{\otimes3}$} (a1);
        \draw[e0=20] (a1) edge (a2l);
        \draw[e1=20] (a1) edge node[above right,pos=.7] {~$i(R_3)^2\otimes X$} (a2r);
        \draw[e0=20] (a2l) edge (a3l);
        \draw[e1=20] (a2l) edge node[right,pos=.3] {~$X$} (a3l);
        \draw[e0=20] (a2r) edge (a3r);
        \draw[e1=20] (a2r) edge node[right,pos=.3] {~$R_3$} (a3r);
        \draw[e0=20] (a3l) edge (a4);
        \draw[e1=20] (a3l) edge node[above right,pos=.4] {~$i$} (a4);
        \draw[e0=20] (a3r) edge (a4);
        \draw[e1=20] (a3r) edge node[right,pos=.3] {~$-1$} (a4);
    \end{tikzpicture}\vspace{-1em}
    \caption{}\label{fig:limdd-a}
    \end{subfigure}
    \hfill
    \begin{subfigure}[b]{0.28\textwidth}
    \centering
    \begin{tikzpicture}[scale=0.3,every path/.style={>=latex},inner sep=1pt,minimum size=0.3cm,line width=1pt,node distance=.8cm,thick,font=\scriptsize]
        \node[] (a0) {};
        \node[draw,circle,below = of a0] (a1) {$v$};
        \node[below = of a1 ] (a2) {};
        \node[draw,circle,left = of a2] (a2l) {$v_0$};
        \node[draw,circle,right = of a2] (a2r) {$v_1$};
        \draw[e1] (a0) edge node[right] {$R$} (a1);
        \draw[e0] (a1) edge node[left,pos=.4] {~$A$} (a2l);
        \draw[e1] (a1) edge node[right,pos=.4] {~$B$} (a2r);
    \end{tikzpicture}
    \caption{}\label{fig:limdd-b}
    \end{subfigure}
    \hfill
    \begin{subfigure}[b]{0.34\textwidth}
    \centering
    \begin{tikzpicture}[scale=0.3,every path/.style={>=latex},inner sep=1pt,minimum size=0.3cm,line width=1pt,node distance=.8cm,thick,font=\scriptsize]
        \node[] (a0) {};
        \node[draw,circle,below = of a0] (a1) {$v'$};
        \node[below = of a1 ] (a2) {};
        \node[draw,circle,left = of a2] (a2l) {$v_0$};
        \node[draw,circle,right = of a2] (a2r) {$v_1$};
        \draw[e1] (a0) edge node[right] {$R(I\otimes A)$} (a1);
        \draw[e0] (a1) edge node[left,pos=.4] {~$I^{\otimes k}$} (a2l);
        \draw[e1] (a1) edge node[right,pos=.4] {~$A^{-1}B$} (a2r);
    \end{tikzpicture}
    \caption{}\label{fig:limdd-c}
    \end{subfigure}
    \caption{\textbf{\subref{fig:limdd-a})}: a $\gen{R_3,X}$-LIMDD of
    $\frac{1}{2\sqrt2}\bigl(\ket{000}+i\ket{001}+i\ket{010}+\ket{011}-i\ket{100}+i\ket{101}-i\ket{110}+i\omega_3^2\ket{111}\bigr)$.
    Nodes carry their index; $0$-edges are dotted and unlabelled edges carry the
    identity.\\ \textbf{\subref{fig:limdd-b}), \subref{fig:limdd-c})}: the
    $0$-edge normal form. The two segments represent the same state, so every
    $0$-edge label may be assumed to be the identity. Here $v,v'$ have index $k+1$
    and $v_0,v_1$ index~$k$.}
    \label{fig:limdd}
\end{figure}

\paragraph*{Decision diagrams and their edge labels.}

We follow the presentation of quantum decision diagrams
in~\cite{vinkhuijzen2024a}, in which the set that edge labels are drawn from is a
\emph{parameter} of the data structure. Fix a set $\mathcal L$ of invertible
matrices. An \emph{$\mathcal L$-labeled decision diagram} is a finite rooted
directed acyclic graph with one root and one leaf. Every node $v$ carries an index
$id(v)\in[0,n]$, the root has index $n$, and the leaf has index $0$. The root has
one incoming edge with no source, the \emph{root edge}. Every non-leaf node has two
outgoing edges, the \emph{$0$-edge} and the \emph{$1$-edge}. Every edge $e$ carries
a label $label(e)\in\mathcal L$ or the constant $0$. No node is skipped on a path,
so $(v,w)\in E$ implies $id(v)=id(w)+1$.
\twrchanged{
  (Technically, $\mathcal L$-labeled decision diagrams generalize quasi-OBDDs
  \cite[p.\ 50-51]{Book:Wegener00}.)
}
Let $G$ be a group of invertible $2^n\times2^n$ matrices. A \emph{$G$-LIM} is a matrix
$\lambda\cdot g$ with $\lambda\in\co\setminus\{0\}$ and $g\in\gen G_n$, and a
\emph{$\gen G$-LIMDD} is a diagram whose labels are $G$-LIMs. The scalar must be
nonzero, since a LIM is invertible by definition. The label $0$, used by rules R1
and R3 of \Cref{sec:canonicity}, is therefore not a LIM but a separate case, and no
inverse of it is ever taken. Taking the trivial group recovers the
\emph{edge-valued} decision diagram (EVDD), whose labels are the nonzero scalars
alone; taking the Pauli group recovers the LIMDD of~\cite{limdd}, which makes every
stabilizer state linear in size, something no EVDD achieves~\cite{limdd}. The label
group is thus a parameter of the data structure, and it is the parameter we vary:
\Cref{sec:results} takes $\gen G$ from a two-parameter family containing the Pauli
case. Algebraic and multi-terminal
\twrchanged{
OBDDs~\cite{bahar1997algebric,clarke1993spectral}
}
go the other way and keep the labels trivial, putting the amplitudes on several
leaves instead; one leaf suffices here because every edge already carries a scalar.

The \emph{size} of a diagram is its number of nodes, and its \emph{width} at level
$j$ is the number of nodes of index $j$. A \emph{Tower}-LIMDD is one in which both
edges of every node point to the same child, so it has exactly $n+1$ nodes. Two
nodes $u,v$ are \emph{$\gen G$-equivalent} when $\ket u=\ell\ket v$ for some
$G$-LIM $\ell$, and the \emph{stabilizer}
$Stab(v):=\{g\in\gen G\mid g\ket v=\ket v\}$ is a subgroup of $\gen G$.
\Cref{fig:limdd} shows an example, together with the $0$-edge normal form
established below.

\paragraph*{Gates and groups.}

It remains to say which groups $\gen G$ we put in the label slot. We use the Pauli
matrices $I$, $X=\left(\begin{smallmatrix}0&1\\1&0\end{smallmatrix}\right)$,
$Y=\left(\begin{smallmatrix}0&-i\\i&0\end{smallmatrix}\right)$,
$Z=\left(\begin{smallmatrix}1&0\\0&-1\end{smallmatrix}\right)$, the Hadamard matrix
$H=\tfrac1{\sqrt2}\left(\begin{smallmatrix}1&1\\1&-1\end{smallmatrix}\right)$, and
the phase rotations
$R(\theta)=\left(\begin{smallmatrix}1&0\\0&e^{i\theta}\end{smallmatrix}\right)$ and
$R_q:=R(2\pi/q)=\left(\begin{smallmatrix}1&0\\0&\omega_q\end{smallmatrix}\right)$
for $q\in\natur$, so that $R_1=I$ and $R_2=Z$. For a one-qubit matrix $U$ and
$h\geq0$, the matrix
$C^hU=\operatorname{diag}(1,\dots,1,U)$ of size $2^{h+1}\times2^{h+1}$ applies $U$
to its last qubit exactly when its first $h$ qubits are all $1$; thus $C^0U=U$ and
$C^1Z=CZ$. In particular
$C^hR_q=\operatorname{diag}(1,\dots,1,\omega_q)$, and on basis states
$C^hR_q\ket{x_1\dots x_{h+1}}=\omega_q^{\prod_{j=1}^{h+1}x_j}\ket{x_1\dots x_{h+1}}$.

We place such a matrix on chosen qubits of a larger register. Let $\seq(n,m)$ be
the set of non-repeating sequences of length $m$ over $[n]$. For a
$2^m\times2^m$ matrix $g$ and $S=(a_1,\dots,a_m)\in\seq(n,m)$ we write
$g_{n,S}=P_S^\dagger(g\otimes I^{\otimes(n-m)})P_S$, where
$P_S\ket{x_1\dots x_n}=\ket{x_{\pi(1)}\dots x_{\pi(n)}}$ for any permutation $\pi$
of $[n]$ with $\pi(a_j)=j$; the action of $\pi$ outside $S$ is irrelevant. We drop
the subscript $n$ when it is clear from the context. For example
$(C^2U)_{5,(2,3,5)}$ is the five-qubit matrix in which qubits $2$ and $3$ control
$U$ on qubit $5$. Since $C^hR_q$ is invariant under permuting the qubits the gate is applied to, we
write $(C^{\ell-1}R_q)_{(a_1,\dots,a_\ell)}$ for a \emph{set}
$a=\{a_1<\dots<a_\ell\}$ and call $\ell=|a|$ its \emph{arity}.

For a $2^m\times2^m$ matrix $U$ and $n\geq m$, let $\gen U_n$ be the group
generated by all $U_{n,S}$ with $S\in\seq(n,m)$, and likewise
$\gen{U_1,\dots,U_\ell}_n$ for several generators. We drop $n$ when it is clear
from the context. This paper studies the two families
\begin{equation*}
  \gen{C^{\leq k}R_q}:=\gen{R_q,CR_q,\dots,C^kR_q},
  \qquad
  \gen{C^{\leq k}R_q,X}:=\gen{R_q,CR_q,\dots,C^kR_q,X},
\end{equation*}
for $k\in\z_{\geq0}$ and $q\in\z_{\geq1}$, with $\gen{C^{\leq0}R_q}=\gen{R_q}$. We
call $k$ the \emph{control arity} and $q$ the \emph{root order}. Up to scalars,
$\gen{C^{\leq0}R_2,X}=\gen{Z,X}$ is the Pauli group, so the original
LIMDD~\cite{limdd} is the case $k=0$, $q=2$ with $X$.

The bit flip does not commute with the rotations, but it moves past them at a
bounded cost. This identity underlies every statement about $\gen{C^{\leq k}R_q,X}$
in this paper, so we record it first.

\begin{lemma}[commutation]\label{lem:commute}
  Let $a=\{a_1<\dots<a_\ell\}\subseteq[n]$ and $j\in[\ell]$. Then
  \begin{equation}\label{eq:commute}
    X_{(a_j)}(C^{\ell-1}R_q)_{(a)}
    =\left[(C^{\ell-1}R_q)_{(a)}\right]^{q-1}
     (C^{\ell-2}R_q)_{(a\setminus\{a_j\})}\,X_{(a_j)},
  \end{equation}
  where $C^{-1}R_q:=\omega_q$. In particular, for $\ell=1$ and any $d$,
  \begin{equation}\label{eq:R_q--X__commutation}
    X_{(a)}\left[(R_q)_{(a)}\right]^{d}=\omega_q^{d}\left[(R_q)_{(a)}\right]^{-d}X_{(a)} .
  \end{equation}
  Consequently the group $\gen{C^{\leq k}R_q}$ is normal in $\gen{C^{\leq k}R_q,X}$; i.e., $ghg^{-1}\in\gen{C^{\leq k}R_q}$ for all $g\in\gen{C^{\leq k}R_q,X}$ and $h\in\gen{C^{\leq k}R_q}$.
\end{lemma}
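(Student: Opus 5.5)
Both sides of \eqref{eq:commute} are products of diagonal matrices with the permutation $X_{(a_j)}$, so the plan is to compare their actions on a computational basis state $\ket x$, $x\in\{0,1\}^n$. Write $y$ for $x$ with bit $a_j$ flipped, let $p=\prod_{i\in a}x_i$, and let $p'=\prod_{i\in a\setminus\{a_j\}}x_i$ (with $p'=1$ when $\ell=1$). By the basis-state formula for $C^hR_q$, the left side sends $\ket x$ to $\omega_q^{p}\ket y$. For the right side, first apply $X_{(a_j)}$ to get $\ket y$. The phase gate on $a\setminus\{a_j\}$ contributes $\omega_q^{p'}$, because those bits are unchanged; when $\ell=1$ this factor is the scalar $\omega_q$, matching the convention $C^{-1}R_q:=\omega_q$. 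The gate $[(C^{\ell-1}R_q)_{(a)}]^{q-1}$ contributes $\omega_q^{(q-1)p'(1-x_{a_j})}=\omega_q^{-p'(1-x_{a_j})}$, since on $y$ the bit $a_j$ equals $1-x_{a_j}$. The total exponent is therefore $p'-p'(1-x_{a_j})=p'x_{a_j}=p$, so the two sides agree on every basis state.

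For \eqref{eq:R_q--X__commutation}, the plan is to run the same basis-state check directly. The left side gives $\omega_q^{dx}\ket{1-x}$ on qubit $a$. The right side gives $\omega_q^{d}\omega_q^{-d(1-x)}\ket{1-x}=\omega_q^{dx}\ket{1-x}$. Alternatively, one can take the $\ell=1$ instance of \eqref{eq:commute}, namely $XR_q=\omega_qR_q^{-1}X$, and iterate it $d$ times, using $X^2=I$ and treating negative $d$ via inverses.

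For normality, it suffices to check conjugation of generators by generators, because conjugation by a fixed $g$ is a homomorphism and $\gen{C^{\leq k}R_q}$ is closed under products and inverses. Conjugating by a rotation generator is trivial, since all of $\gen{C^{\leq k}R_q}$ is diagonal and hence commutative. For $g=X_{(b)}$ with $X^{-1}=X$, consider $X_{(b)}hX_{(b)}$ for a generator $h=(C^{\ell-1}R_q)_{(a)}$ with $\ell\le k+1$. If $b\notin a$, the two operators act on disjoint qubits and commute. If $b=a_j$, \eqref{eq:commute} rewrites the conjugate as $[(C^{\ell-1}R_q)_{(a)}]^{q-1}(C^{\ell-2}R_q)_{(a\setminus\{a_j\})}$. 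This is a product of generators of arity at most $\ell$, together possibly with the scalar $\omega_q$ when $\ell=1$.

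The main obstacle is that scalar $\omega_q$. Conjugating $R_q$ by $X$ yields $\omega_qR_q^{-1}$, which does not literally lie in the matrix group $\gen{C^{\leq k}R_q}$. I expect the intended reading to be up to scalars, or equivalently at the level of $G$-LIMs, where $\lambda\cdot g$ already absorbs nonzero scalars. With that reading, the normality claim follows immediately from the generator computation above, and I would state this convention explicitly in the proof.
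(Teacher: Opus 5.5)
Your proof is correct and follows essentially the paper's route. Both verify \eqref{eq:commute} on basis states and then conjugate generators by $X_{(b)}$; the only difference is that the paper first rewrites the identity as $(C^{\ell-1}R_q)_{(a)}X_{(a_j)}(C^{\ell-1}R_q)_{(a)}=(C^{\ell-2}R_q)_{(a\setminus\{a_j\})}X_{(a_j)}$ and splits on $x_{a_j}$, while you compare exponents directly.

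Your caveat about the scalar is justified, and it is a point the paper's own proof passes over. Every element of $\gen{C^{\leq k}R_q}$ has diagonal entry $1$ at $\ket{0\cdots0}$, whereas $X_{(b)}(R_q)_{(b)}X_{(b)}=\omega_q[(R_q)_{(b)}]^{-1}$ has entry $\omega_q$ there. So for $q\geq2$ normality holds only up to scalars, or for the enlarged diagonal group $\gen{C^{\leq k}R_q,\omega_q I}\cong\z_q^{M+1}$ that the paper itself works with in \Cref{app:group}.
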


\paragraph*{The \texorpdfstring{$0$}{0}-edge normal form.}

Every $0$-edge label may be assumed to be $I^{\otimes n}$. If a node $v$ has
$0$-edge label $A\neq0$, then
$A\ket{v_0}\otimes\ket0+B\ket{v_1}\otimes\ket1=(I\otimes A)(\ket0\otimes\ket{v_0}+\ket1\otimes A^{-1}B\ket{v_1})$,
so replacing $B$ by $A^{-1}B$, setting the $0$-edge label to the identity, and
multiplying every incoming edge label by $I\otimes A$ leaves the represented state
unchanged. In case $X\not\in\gen{G}$ and $A=0$, we may assume the 1-edge label to be $B=I^{\otimes n}$ by a similar reason.

Finally we record why we close the generators under \emph{all} arities up to $k$,
rather than fixing the arity at $k$. Consider $\gen{CZ}$, which is not
arity-closed. The states $\ket\phi=\ket0+\ket1$ and $\ket\psi=\ket0-\ket1$ are not
$\gen{CZ}$-equivalent, because $\gen{CZ}$ acts trivially on one qubit. Prepending a
qubit makes them equivalent, because
$\ket1\otimes\ket\phi=CZ(\ket1\otimes\ket\psi)$. So two nodes that cannot be
merged may become mergeable once a node is placed above them. A diagram can then
shrink when a qubit is added, which defeats both a local $\mathsf{makeEdge}$ and
any lower-bound argument by layers. Arity-closed groups do not have this defect.\footnote{Interestingly, in the canonical form, all edge labels have full arity. I.e., every edge label is in $\gen{C^kR_q}$ or, if $X\in\gen{G}$, in $\gen{C^kR_q,X}$, as shown in \Cref{sec:canonicity}.}

\begin{lemma}[arity closure]\label{lem:arity}
  Let $G=\gen{C^{\leq k}R_q}$ or $G=\gen{C^{\leq k}R_q,X}$, and let $u,w$ be nodes
  of index $n-1$ that are not $\gen G$-equivalent. Then no two nodes of index $n$
  whose cofactors are $u$ and $w$, respectively, become $\gen G$-equivalent.
\end{lemma}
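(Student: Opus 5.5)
The plan is to show that every element of $\gen G_n$ splits along the top qubit into a block form whose blocks are again $G$-LIMs on $n-1$ qubits. Concretely, I claim that every $g\in\gen G_n$ can be written as
\[
  g=\bigl(X_{(1)}\bigr)^{c}\,\bigl(\ket0\!\bra0\otimes\mu_0 g_0+\ket1\!\bra1\otimes\mu_1 g_1\bigr)
\]
with $c\in\{0,1\}$ ($c=0$ when $X\notin G$), $g_0,g_1\in\gen G_{n-1}$ and $\mu_0,\mu_1$ powers of $\omega_q$. Granting this, suppose two nodes $v,v'$ of index $n$ are $\gen G$-equivalent, say $\ket{v'}=\lambda g\ket v$, where $v$ has cofactor $u$ and $v'$ has cofactor $w$ in the same position. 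Write $\ket v=\ket0\otimes A\ket{v_0}+\ket1\otimes B\ket{v_1}$ with $A,B$ LIMs; a nonzero label is needed here, since a $0$ label has no corresponding node.

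Comparing cofactors gives, for each $b$, $\ket{v'_b}$ (up to the label on the edge of $v'$) $=\lambda\mu\, g'\,\ket{v_{b\oplus c}}$ (up to the label on the edge of $v$). Here $g'\in\gen G_{n-1}$, and $\mu$ is a power of $\omega_q$. Since the labels are themselves $G$-LIMs, composing them yields $\ket{w}=\ell\ket{u}$ for some $G$-LIM $\ell$ on $n-1$ qubits. This contradicts the assumption that $u$ and $w$ are not $\gen G$-equivalent. When $X\in G$ and $c=1$, the cofactors are swapped. I would interpret ``respectively'' as allowing this: the claim then says that the unordered pairs of cofactor orbits must agree, and the same computation matches $u$ with the swapped cofactor. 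I would state this case explicitly.

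The key step is the block decomposition, which I would prove on generators and then close under products. It suffices to check generators because the set of matrices of the form above is closed under multiplication. To see this, note that $X_{(1)}$ conjugates a block-diagonal matrix to the one with the blocks swapped, so products reduce to multiplying blocks, and the $\gen{C^{\leq k}R_q}$ part is normal by \Cref{lem:commute}. The generators fall into three cases.
\begin{itemize}
  \item A generator $(C^{h}R_q)_{(a)}$ with $1\notin a$ acts as the same gate on both blocks, and that gate lies in $\gen G_{n-1}$.
  \item A generator $(C^{h}R_q)_{(a)}$ with $1\in a$ acts as the identity on the $0$-block. On the $1$-block it acts as $(C^{h-1}R_q)_{(a\setminus\{1\})}$, shifted down one index. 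This has arity $h\le k$, so it lies in $\gen G_{n-1}$ precisely because the generating set is closed under all arities up to $k$. This is where the arity-closure hypothesis is used, as the $\gen{CZ}$ example shows. For $h=0$ the gate degenerates to the scalar $\omega_q$ (the convention $C^{-1}R_q=\omega_q$), which I absorb into $\mu_1$.
  \item $X_{(j)}$ with $j\neq1$ acts blockwise, and $X_{(1)}$ gives the swap factor.
\end{itemize}

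I expect the main obstacle to be bookkeeping rather than depth. The scalars must be tracked correctly: a $G$-LIM permits an arbitrary nonzero scalar, so the powers $\mu_b$ and the labels' scalars are harmless. The swap case must also be stated cleanly. Finally, normality from \Cref{lem:commute} lets me move every $X_{(1)}$ to the front so that a single $c$ suffices.
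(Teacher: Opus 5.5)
Your proposal is correct and is essentially the paper's proof. The paper applies the layer split of \Cref{lem:split} to see that $g$ acts on the two cofactors by $g''g'''$ and by $\tilde g\,g''g'''$, and both lie in $\gen G$ on qubits $2,\dots,n$ because arity closure gives $\tilde g\in\gen{C^{\leq k-1}R_q}\subseteq\gen G$. That is exactly your block decomposition, except that the paper obtains it from the standard form \eqref{eq:standard-form} rather than from generators closed under products. Your explicit treatment of the swap case $c=1$ clarifies a point the paper's one-line proof passes over: when $X\in G$ and the children are not ordered by R2, the same-position reading genuinely fails (take $\ket{v'}=X_{(1)}\ket v$ with $\ket v=\ket0\otimes\ket u+\ket1\otimes\ket w$). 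So the unordered-pair version you state is the right one.
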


The proof is a direct application of the layer split of \Cref{sec:machinery}
and is proved in \Cref{app:prelim}.

\subsection{Results}\label{sec:results}

\subsubsection{The succinctness order.}
A class $L$ of representations of quantum states is \emph{fully expressive} when
every state has a representation in $L$. For fully expressive $L_1,L_2$ we write
$L_2\leq_sL_1$, and say $L_2$ is \emph{at least as succinct as} $L_1$, when there
is a polynomial $p$ such that every $C_1\in L_1$ has an equivalent $C_2\in L_2$
with $|C_2|\leq p(|C_1|)$. We write $L_2<_sL_1$ when $L_2\leq_sL_1$ but
$L_1\not\leq_sL_2$, and we call $L_1$ and $L_2$ \emph{incomparable} when both $L_2\nleq_sL_1$ and 
$L_1\nleq_sL_2$. We say $L_2$ is \emph{exponentially more succinct} than $L_1$ when
$L_2<_sL_1$ and there is a family of states whose representations in $L_2$ are
exponentially smaller than any of their representations in $L_1$.

The succinctness order of the family (\Cref{sec:succinctness}) measures what the
coarsening buys. For all $h<k$, there are states with a linear-size
Tower-$\gen{C^{\leq k}R_q}$-diagram whose $\gen{C^{\leq h}R_q}$-diagram has size
$2^{\Omega(n)}$ under every variable order. Moreover, for all $p$ that do not divide $q$, there are states with a linear-size
Tower-$\gen{C^{\leq k}R_p}$-diagram whose $\gen{C^{\leq k}R_q}$-diagram has size
$2^{\Omega(n)}$ under every variable order. So control arity and root order each
induce a strict exponential hierarchy, and orders neither of which divides the other are
incomparable. The separating objects are hypergraph states.

\begin{figure}[tb]
\centering
\colorlet{cq}{green!45!black}
\colorlet{ck}{blue!65!black}
\colorlet{cx}{red!78!black}
\begin{tikzpicture}[
  qn/.style={font=\tiny,fill=white,inner sep=1pt},
  qedge/.style={-latex,semithick,cq},
  kedge/.style={-latex,semithick,ck},
  xedge/.style={-latex,semithick,cx},
  weak/.style={densely dashed},
  gbox/.style={draw=black!40,dashed,rounded corners=2pt},
]

\def\boxlist{%
  b0/0/0/{}/{}/black!4/0/1.7,%
  b1/0.4/3.0/{C^{\leq1}}/{}/black!4/0/1.7,%
  b2/0.8/6.0/{C^{\leq2}}/{}/black!4/0/1.7,%
  x0/5.7/1.3/{}/{,X}/cx!4/1/2.15,%
  x1/6.1/4.3/{C^{\leq1}}/{,X}/cx!4/1/2.15,%
  x2/6.5/7.3/{C^{\leq2}}/{,X}/cx!4/1/2.15%
}

\foreach \bn/\ox/\oy/\pre/\suf/\fc/\xf/\s in \boxlist {
  \path[gbox,fill=\fc] (\ox-0.8,\oy-0.3) rectangle (\ox+2*\s+0.8,\oy+1.7);
  \node[qn] (\bn-R2)  at (\ox,\oy)             {$\gen{\pre R_2\suf}$};
  \node[qn] (\bn-R3)  at (\ox+\s,\oy)          {$\gen{\pre R_3\suf}$};
  \node[qn] (\bn-R5)  at (\ox+2*\s,\oy)        {$\gen{\pre R_5\suf}$};
  \node[qn] (\bn-R4)  at (\ox+0.5*\s,\oy+0.8) {$\gen{\pre R_4\suf}$};
  \node[qn] (\bn-R6)  at (\ox+1.5*\s,\oy+0.8) {$\gen{\pre R_6\suf}$};
  \node[qn] (\bn-Rot) at (\ox+2*\s,\oy+2.3)    {$\gen{\pre\Rot\suf}$};
}
\node[qn] (evdd)  at (1.7,-1.0)  {EVDD};
\node[qn] (xonly) at (7.85,0.3) {$\gen X$};

\foreach \bn/\ox/\oy/\pre/\suf/\fc/\xf/\s in \boxlist {
  \draw[qedge] (\bn-R2) -- (\bn-R4);
  \draw[qedge] (\bn-R2) -- (\bn-R6);
  \draw[qedge] (\bn-R3) -- (\bn-R6);
  \draw[qedge] (\bn-R4) -- (\bn-Rot);
  \draw[qedge] (\bn-R6) -- (\bn-Rot);
  \draw[qedge] (\bn-R5) -- (\bn-Rot);
}
\foreach \n in {R2,R3,R4,R5,R6,Rot}{
  \draw[kedge]            (b0-\n) -- (b1-\n);
  \draw[kedge]            (b1-\n) -- (b2-\n);
  \draw[kedge] (x0-\n) -- (x1-\n);
  \draw[kedge] (x1-\n) -- (x2-\n);
  \draw[xedge]            (b0-\n) -- (x0-\n);
  \draw[xedge]            (b1-\n) -- (x1-\n);
  \draw[xedge]            (b2-\n) -- (x2-\n);
}
\draw[qedge] (evdd) -- (b0-R2);
\draw[qedge] (evdd) -- (b0-R3);
\draw[qedge] (evdd) -- (b0-R5);
\draw[qedge] (xonly) -- (x0-R2);
\draw[qedge] (xonly) -- (x0-R3);
\draw[qedge] (xonly) -- (x0-R5);
\draw[xedge] (evdd) -- (xonly);

\foreach \bn/\ox/\oy/\pre/\suf/\fc/\xf/\s in \boxlist {
  \node[qn] at (\bn-R2.center)  {$\gen{\pre R_2\suf}$};
  \node[qn] at (\bn-R3.center)  {$\gen{\pre R_3\suf}$};
  \node[qn] at (\bn-R5.center)  {$\gen{\pre R_5\suf}$};
  \node[qn] at (\bn-R4.center)  {$\gen{\pre R_4\suf}$};
  \node[qn] at (\bn-R6.center)  {$\gen{\pre R_6\suf}$};
  \node[qn] at (\bn-Rot.center) {$\gen{\pre\Rot\suf}$};
}
\node[qn] at (evdd.center)  {EVDD};
\node[qn] at (xonly.center) {$\gen X$};
\end{tikzpicture}
\caption{Succinctness map of the LIMDD family in its three directions: root
order $q$ (green), control arity $k$ (blue), and the bit
flip $X$ to the right (red); the dashed boxes collect the finite root orders,
with $\Rot$ drawn above each of them. An arrow $L_1\rightarrow L_2$ means that $L_2$ is
exponentially more succinct than $L_1$; every arrow drawn is a proved exponential
separation, and classes joined by no path are incomparable.
In the diagonal column (left), green arrows are
\Cref{thm:hypergraphstate_is_tower,theorem:prime-hierarchy,cor:continuous} and blue
arrows \Cref{thm:hypergraphstate_is_tower,theorem:prime-hierarchy}; red arrows are
\Cref{cor:X}. In the bit-flip column (right), the green and blue arrows and the
incomparabilities are \Cref{cor:Xhierarchy}, proved by biasing the amplitudes of the same
hypergraph family (\Cref{thm:Xcolumn}); the arrows into the $\Rot$ row of either column use
the coefficient form of the argument (\Cref{rem:Xhonest}).}
\label{fig:succinctness_picture}
\end{figure}

A \emph{hypergraph} is a pair $G=(V,E)$ in which every $e\in E$ is a subset of $V$.
Multi-edges are allowed, so $E$ is a multiset. A \emph{$k$-hypergraph} is one in
which every edge has size at most $k$. 

\begin{definition}\label{def:hypergraphstate}
  The \emph{hypergraph $p$-state} of a hypergraph $G=(V,E)$ with $V=\{x_1,\dots,x_m\}$
  is the $m$-qubit state
  \begin{equation*}
    \ket{\phi^G_p}
      =\frac{1}{\sqrt{2^m}}\sum_{x\in\{0,1\}^m}\ \prod_{e\in E}\omega_p^{\prod_{x_i\in e}x_i}\ \ket x
      =\frac{1}{\sqrt{2^m}}\prod_{e\in E}(C^{|e|-1}R_p)_{(e)}\ \bigl(\ket0+\ket1\bigr)^{\otimes m} .
  \end{equation*}
\end{definition}
This definition is 
more general than the usual one~\cite{qu2012encoding,qu2013multipartite,qu2013relationship,rossi2013quantum},
which fixes $p=2$. As an example, 
$\frac{1}{\sqrt{2^5}}(C^1R_p)_{(1,2)}(C^1R_p)^2_{(3,5)}(C^2R_p)^2_{(1,2,4)}(C^4R_p)_{(2,3,4,5)}\ket+^{\otimes 5}$
is the $p$-state for the hypergraph shown Figure~\ref{fig:hypergraph}.

\begin{restatable}{theorem}{thmhgtower}\label{thm:hypergraphstate_is_tower}
  Let $k,p\in\natur$ and let $G$ be a $k$-hypergraph. Then $\ket{\phi^G_p}$ is
  represented by a Tower-$\gen{C^{\leq k-2}R_p}$-LIMDD, for every variable order.
\end{restatable}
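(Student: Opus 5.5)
The plan is a direct inductive construction that peels off one qubit at a time. The key observation is that fixing the top variable of a hypergraph state to $1$ turns every hyperedge through that vertex into a hyperedge of arity one smaller. Fix the variable order and relabel the vertices so that the order is $x_1,\dots,x_m$, with $x_1$ the top (root) variable. The product formula in \Cref{def:hypergraphstate} only involves the sets $e\in E$, and each $C^{|e|-1}R_p$ is symmetric in its qubits. So relabelling the vertices just gives the hypergraph state of an isomorphic hypergraph, and the claim need only be shown for one fixed order. I will prove by induction on $m$ the following statement: for every $k$-hypergraph $G$ on $m$ vertices, the unnormalized state $\sqrt{2^m}\,\ket{\phi^G_p}$ is the state $\ket{v}$ of a node $v$ at the top of a Tower-$\gen{C^{\leq k-2}R_p}$-LIMDD. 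Putting the scalar $1/\sqrt{2^m}$ on the root edge then finishes the proof. The base case $m=0$ is the leaf, which represents $1$.

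For the inductive step, split $E$ into $E_0=\{e\in E: x_1\notin e\}$ and $E_1=\{e\in E: x_1\in e\}$. Let $G'=(V\setminus\{x_1\},E_0)$, which is again a $k$-hypergraph. Reading off the amplitudes $\prod_{e}\omega_p^{\prod_{x_i\in e}x_i}$ gives the two cofactors. Setting $x_1=0$ kills every edge of $E_1$, so the $0$-cofactor is exactly $\sqrt{2^{m-1}}\ket{\phi^{G'}_p}$. Setting $x_1=1$ turns each $e\in E_1$ into $e\setminus\{x_1\}$, so the $1$-cofactor is
\begin{equation*}
  \omega_p^{c}\prod_{e\in E_1,\ |e|\geq2}\bigl(C^{|e|-2}R_p\bigr)_{(e\setminus\{x_1\})}\cdot\sqrt{2^{m-1}}\ket{\phi^{G'}_p},
\end{equation*}
where $c$ is the multiplicity of the singleton edge $\{x_1\}$ in $E$. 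Each remaining factor acts on $|e|-1\leq k-1$ qubits, so it is $C^{h}R_p$ with $h\leq k-2$, which lies in $\gen{C^{\leq k-2}R_p}$. Here the arity closure of the generating set matters: edges of every size $2,\dots,k$ are handled, not only edges of size $k$. Hence the $1$-cofactor is a $\gen{C^{\leq k-2}R_p}$-LIM applied to the $0$-cofactor.

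By induction, the $0$-cofactor is the node $v'$ of a tower for $G'$. I create a node $v$ of index $m$ whose $0$-edge points to $v'$ with label $I^{\otimes(m-1)}$ and whose $1$-edge points to the same $v'$ with the LIM above as label. Then $\ket v=\ket0\otimes\ket{v'}+\ket1\otimes\ell\ket{v'}$, which equals $\sqrt{2^m}\ket{\phi^G_p}$, and both edges of the new node go to the same child, so the diagram stays a tower. The degenerate case $k\leq1$ also fits: all edges then have size at most $1$, the labels are pure scalars $\omega_p^c$, and $\gen{C^{\leq -1}R_p}$ is read as the trivial group up to scalars.

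I expect the main obstacle to be bookkeeping rather than mathematics. First, the convention $C^{-1}R_p:=\omega_p$ must be read so that singleton edges contribute only a scalar, which is allowed in a LIM. Second, the embedding notation $g_{n,S}$ has to be checked to place the reduced gates on the correct qubits of the $(m-1)$-qubit child register. Multi-edges cause no trouble, since repeated factors simply multiply inside the group.
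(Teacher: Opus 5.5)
Your proposal is correct and follows the same route as the paper. Both arguments induct on the number of vertices along the variable order and split $E$ by whether the top vertex lies in an edge. The $1$-cofactor is then the $0$-cofactor (the unnormalized state of $G'$) multiplied by the reduced gates $(C^{|e|-2}R_p)_{(e\setminus\{x_1\})}\in\gen{C^{\leq k-2}R_p}$. So both edges of every node point to one child, and $2^{-m/2}$ goes on the root edge. Your explicit treatment of singleton edges as the scalar $\omega_p^{c}$ is exactly what the paper encodes with the convention $C^{-1}R_p:=\omega_p$.
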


The proof, an induction that splits off one vertex per level, is in
\Cref{sec:towers}.

The correspondence runs in both directions, which is what makes the family a
characterization rather than just a witness.

\begin{theorem}\label{thm:tower_is_hypergraphstate}
  Every Tower-$\gen{C^{\leq k-2}R_p}$-LIMDD in which every edge label has scalar
  component $1$ represents an (unnormalized)
  $k$-hypergraph $p$-state. If the root edge scalar is set to the normalization constant $2^{-n/2}$, the $k$-hypergraph $p$-state is normalized.
\end{theorem}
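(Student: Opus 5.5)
The plan is to describe every label by a \emph{phase polynomial} and then induct up the tower, mirroring in reverse the vertex-splitting induction of \Cref{thm:hypergraphstate_is_tower}.

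\textbf{Labels as phase polynomials.} Every generator $(C^{h}R_p)_{(a)}$ with $|a|=h+1\le k-1$ is diagonal: it sends $\ket x$ to $\omega_p^{\prod_{i\in a}x_i}\ket x$. Hence $\gen{C^{\leq k-2}R_p}$ is abelian and consists of diagonal matrices. Each element is a product of integer powers of generators (inverses being $(p-1)$-st powers). So each $g$ in the group satisfies $g\ket x=\omega_p^{f_g(x)}\ket x$, where $f_g$ is a multilinear polynomial with integer coefficients in $\{0,\dots,p-1\}$ after reducing mod $p$. This polynomial has no constant term and only monomials $\prod_{i\in a}x_i$ with $1\le|a|\le k-1$. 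Since every label has scalar component $1$, a label is exactly such a $g$. I also note that a label equal to $0$ is excluded by this assumption, since $0$ is not a LIM.

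\textbf{Induction up the tower.} Let the tower have nodes $v_0$ (the leaf), $v_1,\dots,v_n$. Node $v_j$ has $0$-edge label $A_j$ and $1$-edge label $B_j$, both pointing to $v_{j-1}$. I do not assume the $0$-edge normal form, so $A_j$ is arbitrary. The claim to prove is
\[
\ket{v_j}=\sum_{x\in\{0,1\}^j}\omega_p^{P_j(x)}\ket x,
\]
where $P_j$ is a multilinear integer polynomial without constant term and with all monomials of degree $\le k$.

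For the leaf, $P_0=0$. For the step, write $x=(b,y)$ with $b$ the top bit. By definition $\ket{v_j}=\ket0\otimes A_j\ket{v_{j-1}}+\ket1\otimes B_j\ket{v_{j-1}}$, so the amplitude at $(b,y)$ is $\omega_p$ raised to
\[
P_{j-1}(y)+f_{A_j}(y)+b\bigl(f_{B_j}(y)-f_{A_j}(y)\bigr).
\]
Multiplying by $b$ raises the degree by at most one, from $\le k-1$ to $\le k$. No constant term appears, because neither $f_{A_j}$ nor $f_{B_j}$ has one. This defines $P_j$.

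\textbf{The root edge.} The root label $L$ contributes $f_L$, which has degree $\le k-1$. The total exponent is then $P(x)=P_n(x)+f_L(x)$. Reduce every coefficient of $P$ modulo $p$ to $c_e\in\{0,\dots,p-1\}$, with monomial $\prod_{i\in e}x_i$ and $|e|\le k$. Take the multi-hypergraph with $c_e$ copies of each such $e$. By \Cref{def:hypergraphstate}, the represented vector is $\sqrt{2^n}\,\ket{\phi^G_p}$, which is an unnormalized $k$-hypergraph $p$-state. For the variable order, I identify the qubit at level $j$ with vertex $x_j$. Since every amplitude has modulus $1$, the norm is $2^{n/2}$. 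Replacing the root scalar by $2^{-n/2}$ therefore yields exactly the normalized $\ket{\phi^G_p}$.

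\textbf{Main obstacle.} The only delicate point is the first step: justifying that arbitrary group words, including inverses and the placement operators $P_S$, reduce to a single phase polynomial of degree $\le k-1$ with no constant term. Commutativity of diagonal matrices, together with the product formula for $C^hR_q$ on basis states, settles this. The rest is bookkeeping of degrees in the induction.
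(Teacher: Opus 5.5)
Your proof is correct and is essentially the paper's argument: an induction up the tower in which each level's $1$-edge label contributes hyperedges through the new top vertex (your monomials $b\cdot x_S$), and the root label contributes edges of arity at most $k-1$. Where the paper carries the hyperedge multiset $E\uplus F$ through the induction, you carry an integer phase polynomial and reduce it mod $p$ at the end; keeping the term $f_{A_j}+b(f_{B_j}-f_{A_j})$ also means you do not first have to pass to the $0$-edge normal form, as the paper does.
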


This result establishes the succinct representation in a data structure that allows for efficient application of operations of two familiar classes of quantum states at
once.
First, the diagrams with a linear number of nodes are exactly the states
produced by a commuting (IQP) circuit before its final Hadamard layer. Such a circuit
is interesting because, while it can be simulated in a class believed to be strictly
weaker than BQP, its output distribution is nonetheless \#P-hard to compute
exactly~\cite{bremner2017achieving}. Formally, an IQP circuit has the form
$H^{\otimes n}DH^{\otimes n}$ with $D$ diagonal~\cite{bremner2017achieving}, and
$D\ket{+}^{\otimes n}$ with $D$ of degree $\leq k+2$ and angles in $(2\pi/q)\z$ is
exactly the hypergraph $q$-state of \Cref{def:hypergraphstate}. Removing the final
Hadamard layer therefore identifies the pre-measurement states of IQP circuits with
the hypergraph states of this diagram family.
Second, the family absorbs an entire diagonal layer at no cost in size: every gate in
such a layer is captured by a single LIM on the root edge of the rotation LIMDD, without creating any new
node. On qubits, a diagonal gate at level $w$ of the Clifford hierarchy is a product
of controlled phase rotations of order dividing $2^w$~\cite{cui2017diagonal}, so for
$2^w \mid q$ it is a single $\gen{C^{\leq w-1}R_q}$-LIM. The levels of this
construction thus form a hierarchy of states --- which we call
\emph{diagonal-hierarchy states} --- generated by these controlled-rotation LIMs. 
This hierarchy is weaker than the full Clifford hierarchy but has been
studied extensively. Pauli-LIMDDs cannot represent these diagonal-hierarchy states
with polynomial size, whereas rotation-LIMDDs represent them efficiently. 
Concretely, since a
stabilizer state is a Pauli tower~\cite{limdd}, which for even $q$ is already a
diagram of this family, and since a LIM on the root edge of the rotation LIMDD creates no node, every state
$D\ket S$ with $\ket S$ stabilizer and $D$ diagonal of level $w$ is an $(n+1)$-node
tower, independently of the $T$-count of $D$. Instances of this construction include
a transversal $T$ or $CCZ$ gate applied to any stabilizer state, as prepared in
magic-state distillation~\cite{bravyi2012magic,paetznick2013universal} and
injection~\cite{bravyi2005universal}.

To prove the $\gen{C^{\leq h}R_q}$ hierarchy 
we use \emph{$k$-hyperized graphs}. The \emph{$k$-hyperization} of a graph
$G'=(V',E')$ is the $k$-hypergraph $G=(V,E)$ obtained by replacing every $v\in V'$
by $k$ vertices $v_1,\dots,v_k$ and extending every edge over all replacements:
\begin{equation*}
  V=\bigcup_{v\in V'}\{v_1,\dots,v_k\},\qquad
  E=\bigcup_{\{v,w\}\in E'}\ \bigcup_{i=1}^{k}
     \bigl\{\ \{v_1,\dots,v_k,w_i\},\ \{v_i,w_1,\dots,w_k\}\ \bigr\}.
\end{equation*}
For instance, for a $k = 3$, each edge $\{v,w\}$ in the graph contributes the following six hyperedges.
\begin{figure}[h]
\begin{subfigure}{0.166\textwidth}
\centering
\begin{tikzpicture}[scale=0.8, every node/.style={scale=0.8}]		
\fill[color=blue!10,line width=4mm,draw=blue!10,rounded
			corners=3pt] (0,-0.5) -- (0,+0.5) -- (1.5,+0.5) -- cycle;
			
\node (v1) at (0,+0.5) {$v_1$};
\node (v2) at (0,0) {$v_2$};
\node (v3) at (0,-0.5) {$v_3$};
\node (w1) at (1.5,+0.5) {$w_1$};
\node (w2) at (1.5,0) {$w_2$};
\node (w3) at (1.5,-0.5) {$w_3$};
\end{tikzpicture}
\end{subfigure}\begin{subfigure}{0.166\textwidth}
\centering
\begin{tikzpicture}[scale=0.8, every node/.style={scale=0.8}]		
\fill[color=blue!10,line width=4mm,draw=blue!10,rounded
			corners=3pt] (0,-0.5) -- (0,+0.5) -- (1.5,0) -- cycle;
			
\node (v1) at (0,+0.5) {$v_1$};
\node (v2) at (0,0) {$v_2$};
\node (v3) at (0,-0.5) {$v_3$};
\node (w1) at (1.5,+0.5) {$w_1$};
\node (w2) at (1.5,0) {$w_2$};
\node (w3) at (1.5,-0.5) {$w_3$};
\end{tikzpicture}
\end{subfigure}\begin{subfigure}{0.166\textwidth}
\centering
\begin{tikzpicture}[scale=0.8, every node/.style={scale=0.8}]				
\fill[color=blue!10,line width=4mm,draw=blue!10,rounded
			corners=3pt] (0,-0.5) -- (0,+0.5) -- (1.5,-0.5) -- cycle;
			
\node (v1) at (0,+0.5) {$v_1$};
\node (v2) at (0,0) {$v_2$};
\node (v3) at (0,-0.5) {$v_3$};
\node (w1) at (1.5,+0.5) {$w_1$};
\node (w2) at (1.5,0) {$w_2$};
\node (w3) at (1.5,-0.5) {$w_3$};
\end{tikzpicture}
\end{subfigure}\begin{subfigure}{0.166\textwidth}
\centering
\begin{tikzpicture}[scale=0.8, every node/.style={scale=0.8}]			
\fill[color=blue!10,line width=4mm,draw=blue!10,rounded
			corners=3pt] (1.5,-0.5) -- (1.5,+0.5) -- (0,+0.5) -- cycle;
			
\node (v1) at (0,+0.5) {$v_1$};
\node (v2) at (0,0) {$v_2$};
\node (v3) at (0,-0.5) {$v_3$};
\node (w1) at (1.5,+0.5) {$w_1$};
\node (w2) at (1.5,0) {$w_2$};
\node (w3) at (1.5,-0.5) {$w_3$};
\end{tikzpicture}
\end{subfigure}\begin{subfigure}{0.166\textwidth}
\centering
\begin{tikzpicture}[scale=0.8, every node/.style={scale=0.8}]			
\fill[color=blue!10,line width=4mm,draw=blue!10,rounded
			corners=3pt] (1.5,-0.5) -- (1.5,+0.5) -- (0,0) -- cycle;
			
\node (v1) at (0,+0.5) {$v_1$};
\node (v2) at (0,0) {$v_2$};
\node (v3) at (0,-0.5) {$v_3$};
\node (w1) at (1.5,+0.5) {$w_1$};
\node (w2) at (1.5,0) {$w_2$};
\node (w3) at (1.5,-0.5) {$w_3$};
\end{tikzpicture}
\end{subfigure}\begin{subfigure}{0.166\textwidth}
\centering
\begin{tikzpicture}[scale=0.8, every node/.style={scale=0.8}]			
\fill[color=blue!10,line width=4mm,draw=blue!10,rounded
			corners=3pt] (1.5,-0.5) -- (1.5,+0.5) -- (0,-0.5) -- cycle;
			
\node (v1) at (0,+0.5) {$v_1$};
\node (v2) at (0,0) {$v_2$};
\node (v3) at (0,-0.5) {$v_3$};
\node (w1) at (1.5,+0.5) {$w_1$};
\node (w2) at (1.5,0) {$w_2$};
\node (w3) at (1.5,-0.5) {$w_3$};
\end{tikzpicture}
\end{subfigure}
\end{figure}

Our separation uses the $k$-hyperization of the $n\times n$ grid graph $G^k_n$. We write
$\ket{\psi^n_{k,p}}$ for the hypergraph state $\ket{\phi^{G^k_n}_p}$; it is
prepared from $\ket+^{\otimes kn^2}$ by $O(kn^2)$ gates $C^{k}R_p$.

\begin{restatable}{theorem}{thmprimehierarchy}\label{theorem:prime-hierarchy}
  Let $n,k,p,h,q\in\natur$. If $h<k$, or if $q$ is not a multiple of $p$, then
  $\ket{\psi^n_{k+1,p}}$ needs a $\gen{C^{\leq h}R_q}$-LIMDD of size
  $2^{\Omega(n)}$, for every variable order.
\end{restatable}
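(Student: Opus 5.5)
The plan is a residual-counting argument: fix a variable order and a level, and exhibit $2^{\Omega(n)}$ prefixes whose residuals lie in pairwise distinct $\gen{C^{\leq h}R_q}$-orbits. Each orbit needs its own node at that level, and by \Cref{lem:arity} this bound is not undone by the levels above. The algebraic tool is a \emph{phase-polynomial normal form}. Every diagonal state with unimodular amplitudes on $\{0,1\}^m$ can be written $\lambda\,\exp\bigl(2\pi i\,f(y)\bigr)$ with $f$ a multilinear polynomial over $\mathbb{R}$ without constant term. By Möbius inversion, each coefficient of $f$ is an integer combination of the values of $f$, so the coefficients are determined modulo $1$. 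In this language, a $\gen{C^{\leq h}R_q}$-LIM multiplies the amplitudes by $\lambda\exp(2\pi i\,g)$, where $g$ has degree $\leq h+1$ and all coefficients in $\tfrac1q\z$. Hence two such states $\exp(2\pi i f)$ and $\exp(2\pi i f')$ are equivalent iff $f-f'$ has, modulo $1$, no monomial of degree $>h+1$ and no coefficient outside $\tfrac1q\z$.

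Next I compute the residuals. Let $A$ be the qubits fixed by a prefix $x$ and $B$ the rest. The residual of $\ket{\psi^n_{k+1,p}}$ is, up to scalar, $\exp\bigl(\tfrac{2\pi i}{p}f_x(y)\bigr)\ket+^{\otimes|B|}$. Every hyperedge $e$ contributes the monomial $\prod_{u\in e\cap A}x_u\cdot\prod_{u\in e\cap B}y_u$. Suppose a grid edge $\{v,w\}$ has all copies $v_1,\dots,v_{k+1}$ in $A$ and all copies $w_1,\dots,w_{k+1}$ in $B$. Then the hyperedges $\{v_i,w_1,\dots,w_{k+1}\}$ together contribute $\tfrac1p\bigl(\sum_i x_{v_i}\bigr)\,y_{w_1}\cdots y_{w_{k+1}}$, a monomial of degree $k+1$. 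Take a matching $M$ of $m=\Omega(n)$ such grid edges and vary $x$ only on the $v$-copies, setting for instance $x_{v_1}\in\{0,1\}$ and the other $v$-copies to $0$. Two prefixes $x\neq x'$ then differ in the coefficient $\pm\tfrac1p$ of some monomial $y_{w_1}\cdots y_{w_{k+1}}$. These monomials are distinct for distinct edges of $M$, and no other hyperedge produces exactly this $B$-monomial with an $x$-dependent coefficient on the chosen copies. The only hyperedges containing $v_1$ together with $w$-copies are the ones just counted, and lower-degree cross terms are irrelevant to this monomial.

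The two hypotheses of the theorem now finish the argument. If $h<k$, the monomial has degree $k+1>h+1$, so it cannot be absorbed. If $p\nmid q$, then $\tfrac1p\notin\tfrac1q\z+\z$, so the coefficient cannot be absorbed whatever $h$ is. Either way the $2^m$ residuals lie in distinct orbits, giving width $2^{\Omega(n)}$.

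The main obstacle is the combinatorial step: for \emph{every} order of the $(k+1)n^2$ qubits, find a level and $\Omega(n)$ vertex-disjoint grid edges $\{v,w\}$ whose $v$-group is entirely before the cut and whose $w$-group is entirely after it. I would first order grid vertices by the position of their first copy and cut where half of them have started. Grid isoperimetry then gives $\Omega(n)$ vertex-disjoint boundary edges, each joining a started vertex to an unstarted one, so the $w$-side lies entirely in $B$. The $v$-side may, however, be only partially in $A$.

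To repair this I would weaken the requirement: it suffices that some copy $v_i$ lies in $A$. The hyperedge $\{v_i,w_1,\dots,w_{k+1}\}$ alone contributes $\tfrac1p x_{v_i}\,y_{w_1}\cdots y_{w_{k+1}}$. The hyperedges containing all $v$-copies only produce monomials that involve the $B$-copies of $v$, so they cannot cancel this term. So "started versus unstarted" is exactly the right split, and the remaining work is a careful check that no other hyperedge yields the same $B$-monomial.
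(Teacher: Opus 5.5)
Your route is essentially the paper's. You use the same cut, the first prefix at which half of the supernodes have a subnode, and a matching between started and unstarted supernodes. You vary one early copy per matched started supernode, and your witness is the degree-$(k+1)$ monomial on $W=\zeta^{-1}(w)$ with coefficient $\frac1p$. Your M\"obius-inversion criterion for equivalence is correct; it repackages the paper's induction over the assignments $\mu_Z$, $Z\subseteq J$.

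There is, however, a genuine gap: the matching has to be \emph{induced}, and yours is only vertex-disjoint. The check you propose (``the only hyperedges containing $v_1$ together with $w$-copies are the ones just counted'') asks the wrong question. The coefficient of $y_W$ in $f_x$ receives a contribution from \emph{every} hyperedge whose part after the cut is exactly $W$. Those are the hyperedges $\{u_i\}\cup W$ for every grid neighbour $u$ of $w$ and every copy $u_i$ before the cut. So the coefficient is $\frac1p\sum_u\sum_{u_i\in A}x_{u_i}$. If $w$ is also adjacent to a second matched started vertex $v'$, the varied bit of $v'$ enters the same coefficient.

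Here is a concrete failure. Take a unit square of the grid with started diagonal $\{v,v'\}$ and unstarted diagonal $\{w,w'\}$, matched as $\{v,w\},\{v',w'\}$. Let $v,v'$ each have at least two copies before the cut, and let their other neighbours be started but unmatched. Under your zero padding, the bits $x_{v_i},x_{v'_j}$ occur in $f_x$ only through $\frac1p(x_{v_i}+x_{v'_j})(y_W+y_{W'})$. Exchanging them gives literally the same residual, so the count $2^m$ is not justified.

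The repair is what the paper does: take a $(T',\bar T')$-\emph{induced} matching via \Cref{lem:matching}. Alternatively, thin your isoperimetric matching greedily. The grid has maximum degree $4$, so each matching edge is joined by a grid edge to at most $6$ others, and an induced sub-matching of at least a seventh of the size survives. Then the partner $v$ is the only matched started neighbour of $w$. The two coefficients differ by exactly $\pm\frac1p$, and your two closing cases go through.

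Separately, both your argument and the paper's tacitly need $p\geq2$ in the branch $h<k$. For $p=1$ the coefficient $\pm\frac1p$ vanishes modulo $1$, and indeed $\ket{\psi^n_{k+1,1}}=\ket{+}^{\otimes(k+1)n^2}$.
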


The proof, in \Cref{sec:lowerbound}, is a counting argument based on an induced matching in the grid graph (\Cref{lem:matching}). An easy corollary of Theorem~\ref{theorem:prime-hierarchy} is an exponentional separation between 
$\gen{C^{\leq k}R_q}$-LIMDDs and $C^{\leq k}\Rot$-LIMDDs, where $\Rot=\{\operatorname{diag}(1,e^{2\pi i\alpha})\mid\alpha\in\mathbb R\}$ and 
$C^{\leq k}\Rot$ consist of the matrices $C^hU$ with $h\leq k$ and $U\in\Rot$.

\begin{restatable}{corollary}{corContinuous}\label{cor:continuous}
  For every $k,q\in\natur$, the class of $C^{\leq k}\Rot$-LIMDDs is exponentially
  more succinct than the class of $\gen{C^{\leq k}R_q}$-LIMDDs.
\end{restatable}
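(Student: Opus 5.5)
We need two things: (a) $C^{\leq k}\Rot$-LIMDDs are at least as succinct as $\gen{C^{\leq k}R_q}$-LIMDDs, and (b) some family of states has polynomial-size $C^{\leq k}\Rot$-LIMDDs but needs exponential size as $\gen{C^{\leq k}R_q}$-LIMDDs.

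For (a), we observe that $R_q=R(2\pi/q)\in\Rot$. Hence every generator $C^hR_q$ with $h\leq k$ lies in $C^{\leq k}\Rot$, and $\gen{C^{\leq k}R_q}\subseteq\gen{C^{\leq k}\Rot}$. Every $\gen{C^{\leq k}R_q}$-LIM is therefore a $C^{\leq k}\Rot$-LIM. Any $\gen{C^{\leq k}R_q}$-LIMDD is already, verbatim, a $C^{\leq k}\Rot$-LIMDD of the same size representing the same state. So the identity map witnesses $\leq_s$ with $p(x)=x$. Both classes are fully expressive, since they contain EVDDs as the trivial-group case.

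For (b), we fix $k,q$ and choose a prime $p$ that does not divide $q$, for instance any prime $p>q$. We use the hyperized grid states $\ket{\psi^n_{k+1,p}}$. Their hyperedges have size at most $k+1$, so \Cref{thm:hypergraphstate_is_tower} applies with $k+1$ in place of $k$. It gives a Tower-$\gen{C^{\leq k-1}R_p}$-LIMDD, which has $kn^2+1$ nodes and works for every variable order.

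We then check that this tower is also a $C^{\leq k}\Rot$-LIMDD. Since $R_p\in\Rot$, we have $\gen{C^{\leq k-1}R_p}\subseteq\gen{C^{\leq k}R_p}\subseteq\gen{C^{\leq k}\Rot}$, so the tower's labels are admissible and its size is polynomial. On the other side, $q$ is not a multiple of $p$, so \Cref{theorem:prime-hierarchy} (with $h=k$) says every $\gen{C^{\leq k}R_q}$-LIMDD for $\ket{\psi^n_{k+1,p}}$ has size $2^{\Omega(n)}$, under every variable order. The number of qubits is $N=(k+1)n^2$, so for fixed $k$ this lower bound is $2^{\Omega(\sqrt N)}$, which is exponential in the polynomial diagram size. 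This gives the separation, and it also gives $L_1\not\leq_s L_2$, completing $<_s$.

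The only point needing care is the indexing shift in (b). The hyperization parameter $k+1$ must produce hyperedges of size exactly $k+1$, so the tower uses at most $k$ controls, well within the allowed arity. We also need a $p$ that does not divide $q$; this always exists, even for $q=1$, where any prime works. Both steps are bookkeeping, so we expect no real obstacle beyond matching the parameters of the two earlier theorems.
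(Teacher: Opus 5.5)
Your argument is the paper's. The easy direction is the label inclusion $\gen{C^{\leq k}R_q}\subseteq\gen{C^{\leq k}\Rot}$. The separation is \Cref{theorem:prime-hierarchy} applied with an order $p\nmid q$, whose witness $\ket{\psi^n_{k+1,p}}$ has a polynomial tower by \Cref{thm:hypergraphstate_is_tower}. Your choice of a \emph{prime} $p\nmid q$ is, if anything, more careful than the paper's ``$p$ coprime to $q$'', which would also admit $p=1$.

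There is one error, in the bookkeeping you yourself singled out as the delicate point. The $(k+1)$-hyperization has hyperedges $\{v_1,\dots,v_{k+1},w_i\}$ of size $k+2$, not $k+1$. So \Cref{thm:hypergraphstate_is_tower} must be applied with $k+2$ in place of its $k$. It yields a Tower-$\gen{C^{\leq k}R_p}$-LIMDD with $(k+1)n^2+1$ nodes, not a $\gen{C^{\leq k-1}R_p}$-tower with $kn^2+1$ nodes.

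Your version cannot be right. A $\gen{C^{\leq k-1}R_p}$-tower would contradict \Cref{theorem:prime-hierarchy} with $h=k-1<k$ and $q=p$, and for $k=0$ it does not even name a group. The proof of that theorem also makes the size visible: the separating hyperedge is $\{c\}\cup J$ with $|J|=k+1$.

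The conclusion survives, since $\gen{C^{\leq k}R_p}\subseteq\gen{C^{\leq k}\Rot}$ still respects the arity budget. In fact the witness uses exactly the full arity $k+1$ that $C^{\leq k}\Rot$ permits. That is why the lower bound against $\gen{C^{\leq k}R_q}$ has to come from the root order ($p\nmid q$) and not from the control arity.
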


The combination of Theorems~\ref{thm:hypergraphstate_is_tower} and~\ref{theorem:prime-hierarchy} yields the succinctness hierarchy of $\gen{C^{\leq k}R_q}$-LIMDDs depicted Figure~\ref{fig:succinctness_picture}. An analogous hierarchy holds for $\gen{C^{\leq k}R_q,X}$-LIMDDs, where the bit flip operation $X$ is allowed. The variant of \Cref{theorem:prime-hierarchy} for $\gen{C^{\leq k}R_q,X}$-LIMDDs leans on a generalization of hypergraph states that we call \emph{$r$-biased} hypergraph states.

\begin{definition}\label{def:biasedhypergraphstate}
  Let $G=(V,E)$ be a hypergraph with $V=\{x_1,\dots,x_m\}$, let $p\in\natur$ and let
  $r\in\co\setminus\{0\}$. The \emph{$r$-biased} hypergraph state of $G$ is
  \begin{align*}
    \ket{\phi^G_{p,r}}&=(1{+}|r|^2)^{-m/2}\prod_{e\in E}(C^{|e|-1}R_p)_{(e)}\ \bigl(\ket0+r\ket1\bigr)^{\otimes m} .
  \end{align*}
  The amplitude at $x$ is $r^{|x|}\prod_{e\in E}\omega_p^{\prod_{x_i\in e}x_i}$, up to
  normalization, with $|x|$ the Hamming weight of $x$. Setting $r=1$ yields
  $\ket{\phi^G_p}$. We write $\ket{\psi^n_{k,p,r}}$ for the $r$-biased hypergraph $p$-state where the hypergraph is the $k$-hyperization of the $n\times n$ grid.
\end{definition}

Since every LIM carries a free scalar, the tower $\gen{C^{\leq k-2}R_p}$-LIMDD of Theorem~\ref{thm:hypergraphstate_is_tower} for $\ket{\phi^G_{p}}$ can be turned into a tower $\gen{C^{\leq k-2}R_p}$-LIMDD for $\ket{\phi^G_{p,r}}$ by changing the scalar of the LIMs on its edges (\Cref{lem:biasedtower}). So biased hypergraph states are easy to represent as $\gen{C^{\leq k}R_p}$-LIMDD, and thus as $\gen{C^{\leq k}R_p,X}$-LIMDD for large enough $k$ and $p$. On the other hand, using $|r|\neq 1$ in $\ket{\psi^n_{k,p,r}}$ disables the bit flip $X$ in a sense. Every cofactor of $\ket{\phi^G_{p,r}}$, in every variable order, has amplitude moduli
$c\,|r|^{|y|}$ --- fixing a prefix only multiplies them all by one $|r|^{|\vec\alpha|}$
--- while a bit flip $X^{(a)}$ on qubits $a$ would change the exponent by $|y|-|y\oplus a|$, which
jumps by $2$ across a flipped coordinate. So $X$ cannot occur, every equivalence between
cofactors is diagonal (\Cref{lem:rigidity}), and we can reuse the argument of Theorem~\ref{theorem:prime-hierarchy}.

\begin{theorem}[restate=thmxcolumn]\label{thm:Xcolumn}
  Let $n,k,h,p,q\in\natur$ and let $r\in\co$ with $r\neq0$ and $|r|\neq1$. If $h<k$
  and $q=p\geq2$, or if $q$ is not a multiple of $p$, then $\ket{\psi^n_{k+1,p,r}}$
  needs a $\gen{C^{\leq h}R_q,X}$-LIMDD of size $2^{\Omega(n)}$, for every variable
  order.
\end{theorem}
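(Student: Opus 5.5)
The plan is to reduce \Cref{thm:Xcolumn} to the counting argument behind \Cref{theorem:prime-hierarchy}. The key step is a rigidity statement that removes the bit flip from play.

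\textbf{Step 1 (rigidity, cf.\ \Cref{lem:rigidity}).} Fix any variable order and a level $j$. Every residual of $\ket{\psi^n_{k+1,p,r}}$ obtained by fixing a prefix $\vec\alpha$ is, by \Cref{def:biasedhypergraphstate}, a function
\[
  y\mapsto c_{\vec\alpha}\,r^{|y|}\,\omega_p^{f_{\vec\alpha}(y)},
\]
where $f_{\vec\alpha}$ is a polynomial of degree at most $k+1$. Suppose $\ket{u}=\lambda\, g X^{(a)}\ket{w}$ for two such residuals, with $g$ diagonal; by \Cref{lem:commute} every element of $\gen{C^{\le h}R_q,X}$ can be written in this form. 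Comparing moduli gives $|r|^{|y|}=|\lambda|\,|r|^{|y\oplus a|}\cdot\mathrm{const}$ for all $y$. Since $|r|\ne1$, the quantity $|y|-|y\oplus a|$ must be constant in $y$. When $a\neq0$ it takes two values differing by $2$ (flip one coordinate of $a$), so $a=0$. Hence every equivalence between residuals lies in the diagonal subgroup $\lambda\cdot\gen{C^{\le h}R_q}$. In particular the number of $\gen{C^{\le h}R_q,X}$-orbits at a level equals the number of $\gen{C^{\le h}R_q}$-orbits there.

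\textbf{Step 2 (reuse the lower bound).} The width of a reduced diagram at a level is at least the number of orbits of residuals at that level. Since all moduli agree up to scalars, $r$ is absorbed into $c_{\vec\alpha}$ and $\lambda$. I would therefore rerun the proof of \Cref{theorem:prime-hierarchy} on the phase parts $\omega_p^{f_{\vec\alpha}}$. First choose a balanced cut of the grid in the given order. Then use the induced matching of \Cref{lem:matching} to obtain $2^{\Omega(n)}$ prefixes $\vec\alpha$ whose phase polynomials differ by sums of hyperedge terms of arity $k+1$ with coefficient $1/p$ (in units of $2\pi$). Finally show that no diagonal element of $\lambda\cdot\gen{C^{\le h}R_q}$ relates two of them. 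A diagonal LIM multiplies amplitudes by $\lambda\,\omega_q^{g(y)}$ with $\deg g\le h+1$. Two such residuals are equivalent only if the difference of their phase functions, which is $\frac1p$ times a polynomial with a nonzero monomial of degree $k+1$, equals $\frac1q g+\mathrm{const}$ modulo $1$. Comparing Möbius/Fourier coefficients over $\z$ of the degree-$(k+1)$ monomials rules this out in two cases. If $h<k$, then $g$ has no degree-$(k+1)$ part, and $\frac1p\not\equiv0$ because $p\ge2$. If $p\nmid q$, then $\frac1p\notin\frac1q\z$.

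\textbf{Main obstacle.} I expect Step 1 to need care in two places. First, $X$ interleaves with the controlled rotations, so I must invoke normality (\Cref{lem:commute}) to collect all $X$'s to the right as a single $X^{(a)}$. Second, the amplitudes must be nonzero everywhere so that the modulus comparison is valid for all $y$; this holds since $r\ne0$. A further subtlety is that the theorem assumes $q=p$ in the case $h<k$. This suggests that the degree argument of Step 2 is only claimed there, and that for $q\ne p$ with $p\mid q$ I should not attempt it. I would simply follow the case split of \Cref{theorem:prime-hierarchy} for the diagonal group, since after Step 1 the statement is exactly that theorem's diagonal count with $r$ absorbed.
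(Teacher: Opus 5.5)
Your argument is correct and has the same skeleton as the paper's proof. \Cref{lem:rigidity} makes every equivalence between the biased cofactors diagonal. The cofactors built in the proof of \Cref{theorem:prime-hierarchy} are then separated by the coefficient of $x_J$, where $J=\zeta^{-1}(j')$.

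The difference is the final algebraic step. The paper goes through \Cref{lem:exponentform}, which keeps the $\z_p$-exponents of the state and the $\z_q$-exponents of the LIM apart. Raising the pointwise identity to the $q$-th power gives $q(g_J-g'_J)=0$ in $\z_p$. Arities can only be compared after identifying $\omega_q$ with $\omega_p$, and this is where the hypothesis $q=p$ comes from.

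You instead read all phases in $\mathbb{R}/\z$ and take the Möbius coefficient at $J$. It is an integer combination of values, so it is well defined mod $1$, and it ignores the constant since $J\neq\emptyset$. This yields the single relation $\tfrac1p\equiv\tfrac{d_J}{q}\pmod 1$, with $d_J=0$ whenever $|J|=k+1>h+1$. Both branches follow at once, and the first needs only $p\geq2$. So the caution in your last paragraph is unnecessary. Equivalently, after rigidity the diagonal factor $r^{|y|}$ cancels and you are left with exactly the diagonal count of \Cref{theorem:prime-hierarchy}. Your route therefore proves the $h<k$ branch for every $q$, which is more than the statement claims.

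Three small points to tighten:
\begin{enumerate}
  \item The residual phase polynomials have degree up to $k+2$, because hyperedges of the $(k+1)$-hyperization have $k+2$ vertices. Only the \emph{difference} of two residuals has degree at most $k+1$.
  \item For the branch $p\nmid q$, the coefficient at $x_J$ must be exactly $\pm1$, not merely nonzero. For example, a coefficient $2$ with $p=4$, $q=6$ satisfies $\tfrac24=\tfrac36$. The induced matching is what guarantees the coefficient is $\pm1$, as in the paper; say so explicitly rather than writing ``a nonzero monomial''.
  \item State the width bound for every diagram, not only reduced ones. It holds because each residual at level $d$ is a LIM applied to \emph{some} node of that level, even when accumulated bit flips reroute the path.
\end{enumerate}
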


Together with \Cref{lem:biasedtower} this gives the succinctness hierarchy for $\gen{C^{\leq k}R_q,X}$-LIMDDs.

\begin{corollary}\label{cor:Xhierarchy}
  For all $k,h,p,q\in\natur$:
  \begin{enumerate}
    \item if $h<k$ and $q\geq2$ then $\gen{C^{\leq k}R_q,X}$-LIMDD is exponentially
      more succinct than $\gen{C^{\leq h}R_q,X}$-LIMDD;
    \item if $p\mid q$ and $p<q$ then $\gen{C^{\leq k}R_q,X}$-LIMDD is exponentially
      more succinct than $\gen{C^{\leq k}R_p,X}$-LIMDD --- with $p=1$,
      $\gen{R_q,X}$-LIMDD is exponentially more succinct than $\gen X$-LIMDD;
    \item $\gen{C^{\leq k}\Rot,X}$-LIMDD is exponentially more succinct than
      $\gen{C^{\leq k}R_q,X}$-LIMDD; and
    \item if $p\nmid q$ and $q\nmid p$ then $\gen{C^{\leq k}R_p,X}$-LIMDD and
      $\gen{C^{\leq h}R_q,X}$-LIMDD are incomparable, for all $k,h$.
  \end{enumerate}
  Together with \Cref{cor:X} this determines the succinctness order of the whole family, so
  \Cref{fig:succinctness_picture} has no unproved edge left.
\end{corollary}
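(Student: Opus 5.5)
Each item pairs an upper bound (a polynomial, in fact tower-size, diagram in the more succinct class) with the matching lower bound of \Cref{thm:Xcolumn}, applied to one biased grid state $\ket{\psi^n_{k+1,p,r}}$ with a fixed $r$, say $r=2$. The upper bounds come from \Cref{lem:biasedtower}. It turns the Tower-$\gen{C^{\leq k-1}R_p}$-LIMDD of \Cref{thm:hypergraphstate_is_tower} for the $(k{+}1)$-hypergraph state into a tower for the $r$-biased state by rescaling edge scalars. That tower is also a $\gen{C^{\leq k}R_p,X}$-LIMDD, since $\gen{C^{\leq k-1}R_p}\subseteq\gen{C^{\leq k}R_p,X}$. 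Because $R_p=R_q^{q/p}$ when $p\mid q$, it is a $\gen{C^{\leq k}R_q,X}$-LIMDD as well. The non-strict parts ($\leq_s$) come from subgroup inclusion: if $\gen{G_1}\subseteq\gen{G_2}$, every $G_1$-LIMDD is literally a $G_2$-LIMDD. This covers $C^{\leq h}\subseteq C^{\leq k}$, $R_p\in\gen{R_q}$ for $p\mid q$, and $R_q\in\Rot$.

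For item 1, take $p=q\geq2$ and the state $\ket{\psi^n_{k+1,q,r}}$. It has a linear tower in $\gen{C^{\leq k}R_q,X}$. By \Cref{thm:Xcolumn} with $h<k$ and $q=p\geq2$, it needs size $2^{\Omega(n)}$ in $\gen{C^{\leq h}R_q,X}$. The number of qubits is $(k{+}1)n^2$, so the bound is exponential in $\sqrt{N}$ where $N$ is the number of qubits; that suffices for "exponentially more succinct".

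For item 2, apply \Cref{thm:Xcolumn} with roles swapped: the state is $\ket{\psi^n_{k+1,q,r}}$ and the target is $\gen{C^{\leq k}R_p,X}$. Here $p$ is not a multiple of $q$, since $p<q$, so the lower bound applies. The upper bound is the tower in $\gen{C^{\leq k}R_q,X}$. The case $p=1$ is covered because $R_1=I$ and $\gen{C^{\leq k}R_1,X}=\gen X$, and $q\geq2$ is not a divisor of $1$.

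Item 3 needs a state that is small for $\Rot$ but large for every fixed $q$. Take $p$ not dividing $q$, for example $p=q+1$. By \Cref{rem:Xhonest} and the coefficient form, $\Rot$ contains $R_p$, so the upper bound holds, and \Cref{thm:Xcolumn} gives the lower bound. Strictness of $\leq_s$ follows in each item because the reverse inclusion yields $\leq_s$ in one direction only.

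For item 4, with $p\nmid q$ and $q\nmid p$, use two witnesses. The state $\ket{\psi^n_{h+1,q,r}}$ is a tower in $\gen{C^{\leq h}R_q,X}$ but exponential in $\gen{C^{\leq k}R_p,X}$, because $p$ is not a multiple of $q$. Symmetrically, $\ket{\psi^n_{k+1,p,r}}$ separates in the other direction. This is the main subtlety: the lower bound must hold for every control arity, including $h\ge k$. \Cref{thm:Xcolumn} asserts exactly that under the "$q$ not a multiple of $p$" clause, with no constraint on $h$, so there is no obstruction.

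Finally, the claim that the figure has no unproved edge needs a bookkeeping check. Every drawn arrow has to be an instance of items 1–3 or \Cref{cor:X}. Every pair of classes with no connecting path has to be an instance of item 4, possibly combined with transitivity through the arrows. I expect this last check, together with the $\Rot$ upper-bound subtlety, to be the main obstacle.
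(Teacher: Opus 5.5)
Your treatment of items 1--4 is the paper's proof. You use inclusion of label groups (via $C^jR_p=(C^jR_q)^{q/p}$) for the easy half, \Cref{lem:biasedtower} for the linear tower, and the two branches of \Cref{thm:Xcolumn} for the lower bound. Your witnesses match up to inessential choices ($p=q+1$ in item 3 where the paper uses $s=2q$). Three slips should be fixed:
\begin{itemize}
\item The $(k{+}1)$-hyperization has hyperedges $\{v_1,\dots,v_{k+1},w_i\}$ of size $k+2$; the phrase ``the $k$-hypergraph'' in the paper's definition is off by one. So $\ket{\psi^n_{k+1,p,r}}$ is a $(k{+}2)$-hypergraph state whose tower lies in $\gen{C^{\leq k}R_p}$, exactly as \Cref{lem:biasedtower} states. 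A tower in $\gen{C^{\leq k-1}R_p}$, as you assert, would contradict \Cref{thm:Xcolumn} with $h=k-1$. Since you only ever use membership in $\gen{C^{\leq k}R_p,X}$, nothing downstream breaks.
\item Strictness comes from the separating family, not from the absence of a reverse inclusion.
\item $R_p\in\Rot$ holds by definition (take $\alpha=1/p$); no appeal to \Cref{rem:Xhonest} is needed.
\end{itemize}

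The real gap is the closing sentence, which you defer and which the paper's proof does not argue either. Your expectation that every pair of classes joined by no path is an instance of item 4 plus transitivity is false. Take $\gen{C^{\leq1}R_2,X}$ and $\gen{R_4,X}$: no path joins them in \Cref{fig:succinctness_picture}, but $2\mid4$, so item 4 is silent. One direction of the incomparability needs $\ket{\psi^n_{2,2,r}}$, a tower in $\gen{C^{\leq1}R_2}$, to be exponential in $\gen{R_4,X}$. \Cref{thm:Xcolumn} does not give this: its first branch requires $q=p$, its second $p\nmid q$. Transitivity cannot give it either. The only other class whose label group lies in $\gen{C^{\leq1}R_2,X}$ but not in $\gen{R_4,X}$ is $\gen{C^{\leq1}R_2}$, which needs the same missing bound.

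The missing idea is a stronger first branch. If $p\mid q$, substitute $\omega_p=\omega_q^{q/p}$ in \eqref{eq:exponent_pointwise}. Then apply the coefficient-extraction fact from the proof of \Cref{lem:exponentform} over $\z_q$ to $(q/p)(g-g')-f$. This yields $(q/p)(g_S-g'_S)=d_S$ in $\z_q$, hence $g_S=g'_S$ in $\z_p$ whenever $|S|>h+1$, since $u\mapsto(q/p)u$ embeds $\z_p$ into $\z_q$. So $h<k$ and $p\geq2$ suffice for every $q$.

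The $\Rot$ row has the same problem. Its $k$-arrows (e.g.\ $\gen{\Rot,X}\to\gen{C^{\leq1}\Rot,X}$) and the incomparabilities involving it need a lower bound against $\gen{C^{\leq h}\Rot,X}$. \Cref{thm:Xcolumn}, stated for finite $q$, does not provide it; the same argument with exponents read in $\mathbb{R}/\z$ does. With these strengthenings, the case analysis you have in mind goes through.
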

Two caveats, in full in \Cref{rem:Xhonest}: the witnesses are biased hypergraph states
and not hypergraph states, and the bias is chosen precisely so that $X$ does nothing, so
what \Cref{thm:Xcolumn} shows is that the bit-flip column has the same \emph{shape} as
the diagonal one --- not that $X$ interacts with the rotation hierarchy. Whether the
separations survive on families of modulus-$1$ amplitudes is open.

Adding $X$ to the group can be seen as the third dimension of the succinctness with exponential gain.
The reason is that on Boolean functions the diagonal groups buy nothing at all, but the $X$ gate can give exponential compression.

An $n$-variable Boolean function is a map $f:\{0,1\}^n\to\{0,1\}$. For $f$ not
identically $0$ we regard $f$ itself as the $n$-qubit vector $\sum_xf(x)\ket x$.\footnote{If the reader prefers, the function $f$ can be replaced by its normalization $\bar f=f(x)/\sqrt{|f^{-1}(1)|}$, where $|f^{-1}(1)|$ is the number of inputs $x$ that evaluate to $1$ under $f$. But this normalization does not change the statement and proof of \Cref{cor:X}.} 

\begin{restatable}{theorem}{obdd}\label{thm:obdd}
  Let $f:\{0,1\}^n\to\{0,1\}$ be a Boolean function. For every $k$ and $q$, a
  $\gen{C^{\leq k}R_q}$-LIMDD for $f$ under any variable order has at
  least as many nodes as the smallest OBDD for $f$ under that order.
\end{restatable}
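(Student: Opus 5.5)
The plan is to compare the nodes of a $\gen{C^{\leq k}R_q}$-LIMDD for $f$ with the residuals of $f$. The number of nodes at level $j$ of the smallest OBDD (quasi-reduced, as the paper's diagrams do not skip levels) equals the number of distinct nonzero subfunctions $f_\alpha$ obtained by fixing a prefix $\alpha$ of length $n-j$. So it suffices to show that distinct nonzero residuals $f_\alpha\neq f_\beta$ can never be represented by the same LIMDD node. If that holds, the LIMDD has at least one node per distinct residual at each level.

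The key observation is that every element of $\gen{C^{\leq k}R_q}$ is diagonal, with diagonal entries that are roots of unity. A $G$-LIM is therefore $\lambda D$ with $D$ diagonal and $|D_{xx}|=1$. Fix a LIMDD for $f$ and a prefix $\alpha$ with $f_\alpha\neq0$. Following the path labelled $\alpha$ from the root edge, the cofactor reached is $\ell_\alpha\ket{v_\alpha}$. Here $v_\alpha$ is the node reached and $\ell_\alpha$ is a nonzero LIM, obtained as the product of the root label with the restrictions of the edge labels to the remaining qubits. I will prove this by induction on the prefix length, using that the cofactor of $\lambda D(\ket0\otimes\ket{a}+\ket1\otimes\ket{b})$ is again a diagonal unimodular matrix times a scalar applied to the child's labelled state. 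A prefix leading to a $0$-labelled edge gives the zero residual, and such prefixes are excluded.

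Next I use the support. Since $\ell_\alpha$ is an invertible diagonal matrix, the support of $\ell_\alpha\ket{v_\alpha}$ equals the support of $\ket{v_\alpha}$. Because $f_\alpha$ is $\{0,1\}$-valued, $f_\alpha$ is exactly the indicator of its support. Hence $f_\alpha$ is determined by $\ket{v_\alpha}$ alone, namely as the indicator of $\mathrm{supp}\ket{v_\alpha}$. So if $v_\alpha=v_\beta$ then $f_\alpha=f_\beta$, and the map $f_\alpha\mapsto v_\alpha$ from distinct nonzero residuals at level $j$ to LIMDD nodes at level $j$ is injective. At level $0$ both diagrams have a single leaf. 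Summing over levels gives the bound for every variable order, since the argument never uses the order.

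The main obstacle is the induction step showing that the accumulated label on a path restricts to a diagonal LIM on the remaining qubits. This needs the fact that $\operatorname{diag}$ restricted to the block where the first qubit is $x_1$ is again a product of $C^{h}R_q$ gates of lower or equal arity. This is exactly where arity closure (the $C^{-1}R_q=\omega_q$ convention) is used. In fact only diagonality and invertibility are needed for the support argument, so even without tracking group membership the argument goes through. The care required is in how the paper's OBDD size is counted, possibly including a $0$-sink. If the OBDD counts a $0$-leaf, I would match it against the $0$-labelled edges, or note that the reduced OBDD with level skipping is no larger than the quasi-reduced count used above.
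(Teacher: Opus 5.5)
Your proposal is correct and follows essentially the paper's proof. The paper likewise uses that an invertible diagonal LIM preserves the zero set, so a witness $\gamma$ with $f_\alpha(\gamma)\neq f_\beta(\gamma)=0$ rules out $f_\alpha=\pi f_\beta$; this is your observation that a $\{0,1\}$-valued residual is the indicator of the support of the node it reaches, and you additionally spell out the path-label induction and the exclusion of zero residuals, which the paper leaves implicit.

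Your $0$-sink worry is moot in the paper's framework, where an OBDD is the $\{1\}$-labelled diagram with a single leaf and $0$-labels, so your count of distinct nonzero residuals per level is exactly its size. With a separate $0$-sink the bound genuinely fails by one node (e.g.\ for $x_1\wedge\dots\wedge x_n$: $n+1$ LIMDD nodes versus $n+2$ OBDD nodes), so your level-skipping fallback would not hold there.
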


The proof, in \Cref{sec:addingX}, uses only that diagonal maps preserve the support.

There are Boolean functions with exponential OBDD size and linear-size
tower-$\gen X$-LIMDDs, for instance the coset states of linear codes --- the
characteriztic functions of affine subspaces $V\subseteq\{0,1\}^n$, which are the
basis states of CSS codes~\cite{calderbank1997quantum,steane1996error}. By \cite[Theorem 10]{limdd} some such $V$ has OBDD size
$2^{\Omega(n)}/(2n)$ and a tower-$\gen X$-LIMDD for every variable order. Combined with
\Cref{thm:obdd} this gives the following succinctness result.

\begin{restatable}{corollary}{corX}\label{cor:X}
  For every $k$ and $q$, the class $\gen{C^{\leq k}R_q}$-LIMDD is not at least as
  succinct as $\gen X$-LIMDD. Consequently $\gen{C^{\leq k}R_q,X}$-LIMDD is
  exponentially more succinct than $\gen{C^{\leq k}R_q}$-LIMDD.
\end{restatable}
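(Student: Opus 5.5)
The plan is to combine \Cref{thm:obdd} with the affine-subspace family of \cite[Theorem 10]{limdd}, exactly as indicated before the statement, and then to lift the non-succinctness to a strict exponential separation using the fact that $\gen{C^{\leq k}R_q}\subseteq\gen{C^{\leq k}R_q,X}$.

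\textbf{Step 1 (the witness family).} Fix the family of affine subspaces $V_n\subseteq\{0,1\}^n$ given by \cite[Theorem 10]{limdd}. Its characteristic functions $f_n$ have OBDD size $2^{\Omega(n)}/(2n)$ under every variable order. For each order they also admit a Tower-$\gen X$-LIMDD, which has $n+1$ nodes. I regard $f_n$ as the unnormalized vector $\sum_x f_n(x)\ket x$; by the footnote, normalizing only rescales the root-edge scalar and changes nothing below.

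\textbf{Step 2 (lower bound in the diagonal class).} By \Cref{thm:obdd}, every $\gen{C^{\leq k}R_q}$-LIMDD for $f_n$, under any order, has at least as many nodes as the minimal OBDD under that order, so its size is $2^{\Omega(n)}/(2n)=2^{\Omega(n)}$. Since the $\gen X$-LIMDDs have size $n+1$, no polynomial $p$ satisfies $2^{\Omega(n)}\le p(n+1)$. Hence $\gen{C^{\leq k}R_q}$-LIMDD is not at least as succinct as $\gen X$-LIMDD, which is the first claim.

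One check is needed here. The succinctness definition compares equivalent diagrams, and \cite{limdd}'s towers may fix a particular variable order. If the class allows the order to be chosen freely, the lower bound in Step 2 still holds because it is uniform over all orders. So the quantifiers match in either reading.

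\textbf{Step 3 (strict separation).} The group $\gen X$ is contained in $\gen{C^{\leq k}R_q,X}$, since $X$ is a generator. Therefore every $\gen X$-LIMDD is already a $\gen{C^{\leq k}R_q,X}$-LIMDD of the same size, and $f_n$ has linear-size $\gen{C^{\leq k}R_q,X}$-diagrams. Likewise $\gen{C^{\leq k}R_q}\subseteq\gen{C^{\leq k}R_q,X}$, so every $\gen{C^{\leq k}R_q}$-LIMDD is a $\gen{C^{\leq k}R_q,X}$-LIMDD of the same size. This gives $\gen{C^{\leq k}R_q,X}\leq_s\gen{C^{\leq k}R_q}$ with $p$ the identity. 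Step 2 shows the converse fails, and $f_n$ is an explicit family with an exponential gap. Together these yield "exponentially more succinct".

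\textbf{Main obstacle.} The real content lies in \Cref{thm:obdd}, which the paper proves separately and which I take as given; the rest is bookkeeping. The point that needs care is that both classes are fully expressive, which the succinctness definition presupposes. Full expressivity holds because any state has an EVDD, namely a decision diagram with scalar labels. Scalars are LIMs (scalar times the identity) for every group in the family, so every state is representable in each class.
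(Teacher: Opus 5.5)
Your proposal is correct and follows the paper's route: the affine-subspace family of \cite[Theorem 10]{limdd} has tower-$\gen X$-LIMDDs, and \Cref{thm:obdd} transfers its exponential OBDD lower bound to every $\gen{C^{\leq k}R_q}$-LIMDD. You also spell out the inclusions $\gen X\subseteq\gen{C^{\leq k}R_q,X}$ and $\gen{C^{\leq k}R_q}\subseteq\gen{C^{\leq k}R_q,X}$ that give the ``consequently'' part; the paper's one-line proof leaves this step implicit.
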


\subsubsection{Quantum Fourier transform is a tower.}
The hierarchy above is about states. Now we turn to operators, and we show that the 
quantum Fourier transform matrix has an efficient LIMDD representation.

We represent a matrix as a state in the usual way for decision diagrams. The
unprimed variable $x_\ell$ selects the top or bottom half of the matrix and the
primed variable $x'_\ell$ selects the left or right half, so that
$A=\left(\begin{smallmatrix}A_{x=0,x'=0}&A_{x=0,x'=1}\\A_{x=1,x'=0}&A_{x=1,x'=1}\end{smallmatrix}\right)$,
and the two variables of a level are adjacent in the order.
We show that the quantum Fourier transform can be represented as a Tower. This proves the experimental observations by~\cite{hong2025limtdd}.

\begin{theorem}[restate=thmqfttower]\label{thm:qft_is_tower}
  Every matrix representation of the quantum Fourier transform on $n$ qubits is a
  Tower-$\gen{R_N}$-LIMDD with $N=2^n$, for every variable order that keeps
  $x_\ell$ and $x'_\ell$ adjacent.
\end{theorem}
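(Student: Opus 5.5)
Our plan is to write the QFT matrix as a function of the $2n$ bits and show that, at every level of the variable order, the residual obtained by fixing one more bit is a diagonal $\gen{R_N}$-LIM image (times a scalar) of the residual obtained by fixing it the other way. A tower node then suffices at each level. The QFT entries are
\begin{equation*}
  F(x,x')=N^{-1/2}\,\omega_N^{\,\mathrm{val}(x)\,\mathrm{val}(x')},
  \qquad \mathrm{val}(x)=\sum_{\ell=1}^n 2^{n-\ell}x_\ell .
\end{equation*}
The exponent expands as $\sum_{\ell,m}2^{2n-\ell-m}x_\ell x'_m$, which is a quadratic form whose monomials are all of the shape $x_\ell x'_m$. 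So the QFT is, up to the scalar $N^{-1/2}$, a weighted bipartite ``graph state'' with phases $\omega_N^{c}$ for integer $c$. We would first record this and note that it is invariant under any reordering of the variables. Different matrix representations, for instance with reversed bit order, only permute or relabel the $x_\ell$, so the same form covers them all.

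The key step mimics the induction of \Cref{thm:hypergraphstate_is_tower}. Take the variable $y$ that is split off at the top of the current order, and let $\phi$ be the current residual on the remaining variables $z$. Then $\phi_{y=1}(z)=\phi_{y=0}(z)\cdot\omega_N^{\,\sum_{z_i} c_i z_i}$, where $c_i$ is the integer coefficient of the monomial $y\,z_i$, and the linear term of $y$ itself is absent. This multiplication is exactly the LIM $\bigotimes_i (R_N)^{c_i}$, which lies in $\gen{R_N}$. Fixing earlier variables only produces linear terms that are multiplied into the residual. So every residual at every level has the form ``global scalar times $\prod_i \omega_N^{d_i z_i}$ times the pure quadratic part on the remaining variables''. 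Any two such residuals at the same level differ by a product of single-qubit $R_N$ powers and a scalar.

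Inductively, we build the tower bottom-up. The node at level $j$ represents the pure quadratic part $\omega_N^{Q_j(z)}$ on the last $j$ variables. Its $0$-edge is the identity to level $j-1$. Its $1$-edge is labelled $\bigotimes_i R_N^{c_i}$ to the same child. This works because the pure quadratic part restricted by $y=0$ is $Q_{j-1}$, and by $y=1$ it is $Q_{j-1}$ plus linear terms. The root edge carries the scalar $N^{-1/2}$.

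We expect the main obstacle to be bookkeeping rather than an idea. We must check that the chosen order (with $x_\ell,x'_\ell$ adjacent, as the statement requires, though our argument seems not to need adjacency) and the specific representation convention do not introduce terms outside this form. In particular, the coefficient $2^{2n-\ell-m}$ must be reduced modulo $N$ so that the exponents are integers, which keeps the labels inside $\gen{R_N}$ rather than $\Rot$. We would verify that residuals are well defined after the reduction and that no $X$ is ever needed, since all maps are diagonal.
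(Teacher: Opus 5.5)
Your proof is correct and is essentially the paper's argument: you expand the exponent into a phase polynomial whose only monomials are $x_\ell x'_m$, and you observe that splitting off variables yields cofactors related by products of single-qubit powers of $R_N$ (the paper peels off the pair $x_\ell,x'_\ell$ at once and compares four cofactors, while you peel off one variable at a time). Your side remarks are also right: adjacency is never used, since the QFT vector is, up to a scalar, a $2$-hypergraph $N$-state and \Cref{thm:hypergraphstate_is_tower} already gives a tower for every order. Likewise, no reduction modulo $N$ is needed, because the coefficients $2^{2n-\ell-m}$ are already integers.
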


The proof is in \Cref{sec:qft}, together with the resulting diagram
(\Cref{fig:QFT-Tower-LIMDD}).

\begin{table}[b!]
  \centering
  \begin{tabular}{l|ccccc||cccccccc}
     & \rotatebox{90}{Sample} & \rotatebox{90}{Measure} & \rotatebox{90}{Equality} & \rotatebox{90}{InnerProd} & \rotatebox{90}{Fidelity} & \rotatebox{90}{$X$} & \rotatebox{90}{$Z$} & \rotatebox{90}{$T$} & \rotatebox{90}{$R_m$} & \rotatebox{90}{$C^{k'}R_m$} & \rotatebox{90}{$CX$} & \rotatebox{90}{$H$} & \rotatebox{90}{$\textit{SWAP}$} \\
    \hline
    $\gen{X,Z}$-LIMDD              & \Yes$^r$ & \Yes & \Yes & $\circ$   & $\circ$ & \Yes & \Yes & \Yes & \Yes & \Yes & \No & \No & \No \\
    $\gen{R_q}$-LIMDD              & \Yes$^r$ & \Yes & \Yes & $\circ^*$ & $\circ^*$ & \Yes & \Yes & \Yes & \Yes & \Yes & \No & \No & \No \\
    $\gen{C^{\leq k}R_q}$-LIMDD    & \Yes$^r$ & \Yes & \Yes & $\circ^*$ & $\circ^*$ & \Yes & \Yes & \Yes & \Yes & \Yes & \No & \No & \No \\
    $\gen{R_q,X}$-LIMDD            & \Yes$^r$ & \Yes & \Yes & $\circ^*$ & $\circ^*$ & \Yes & \Yes & \Yes & \Yes & \Yes & \No & \No & \No \\
    $\gen{C^{\leq k}R_q,X}$-LIMDD  & \Yes$^r$ & \Yes & \Yes & $\circ^*$ & $\circ^*$ & \Yes & \Yes & \Yes & \Yes & \Yes & \No & \No & \No
  \end{tabular}
  \caption{Query and gate tractability. The first row is partially reproduced
  from~\cite{vinkhuijzen2024a}. Parameters $m,q,k,k'$ are independent.\\
  \Yes: applicable in polytime. \Yes$^r$: randomised polytime. \No: not applicable in polytime. $\circ$: not polytime unless
  $\mathrm{P}=\mathrm{NP}$. $\circ^*$: $\circ$ if $2\mid q$, open otherwise.}
  \label{tab:tractability}
\end{table}

\subsubsection{Tractability.}

To motivate our new families of LIMDDs beyond succinctness considerations, we study the tractability of various operations and transformations. We consider the five queries described below, where the states $\ket\phi$ an $\ket\psi$ are represented as LIMDDs
\begin{itemize}
\item[•] \textbf{Sample}: return a random $x\in\{0,1\}^n$ with probability
$|\bra x\ket\phi|^2/\bra\phi\ket\phi$. 
\item[•] \textbf{Measure}: compute the probability $|\bra x\ket\phi|^2/\bra\phi\ket\phi$ for a given $x\in\{0,1\}^n$.
\item[•] \textbf{Equality}: determine whether $\ket\phi=\ket\psi$. 
\item[•] \textbf{InnerProd}: compute the inner product $\bra\phi\ket\psi$.
\item[•] \textbf{Fidelity}: compute $|\bra\phi\ket\psi|^2$.
\end{itemize}
We also study the tractability of applying relevant gates to LIMDDs. For $M$ a gate, we ask whether there is a polynomial-time algorithm which, given a LIMDD representation of state $\ket\phi$, returns a LIMDD of the same kind representing $M\ket\phi$. We consider eight (families of) gates: $X$, $Z$, $T$, $R_m$, $C^kR_m$, $CX$, $H$, and $SWAP$. Interestingly, we show that there is no notable difference of tractability (up to polynomial variations) between the five families of LIMDDs considered. For any gate listed above, the complexity of applying this gate is the same for $\gen{X,Z}$-LIMDD, $\gen{R_q}$-LIMDD, $\gen{C^{\leq k}R_q}$-LIMDD, $\gen{R_q,X}$-LIMDD  and $\gen{C^{\leq k}R_q,X}$-LIMDD. \Cref{sec:tractability} proves every entry of the table.

\subsubsection{Canonicity is polytime.}
We provide a normal form for the whole family of LIMDDs. We show that this normal form is unique, and that it can be calculated in polynomial time. More specific, a diagram irreducible
under the five rules R1--R5 below is the unique diagram for the state it
represents, for all $k$ and $q$, with and without $X$. The rule that
carries the coarsening is R4, which selects the canonical representative of an
orbit of high-edge labels, and identifying the orbit correctly is delicate: it is
strictly larger than the set of labels reachable by stabilizers of the two children,
by a free factor that the layer split of a group element produces (\Cref{prop:r4class}).

The rules refer to a total order $\preccurlyeq$ on nodes, arbitrary but fixed
once for all diagrams. \Cref{sec:rules} explains why a per-diagram topological
order would not suffice. Let $v$ be a node whose $0$-edge points to $v_0$ with label $L_0$ and whose
$1$-edge points to $v_1$ with label $L_1$. The five reduction rules are the
following:
\begin{itemize}
  \item[R1.] \emph{Zero edge.} If $L_b=0$ for some $b$, replace $v_b$ by $v_{1-b}$ (but still keeping $L_b=0$).
  \item[R2.] \emph{Low precedence.} If $X\in\gen G$, enforce $v_0\preccurlyeq v_1$.
  \item[R3.] \emph{Low factoring.} If $X\in\gen G$, enforce $L_0=I^{\otimes n}$.
    Otherwise enforce that $L_0=I^{\otimes n}$, or that $L_0=0$ and
    $L_1=I^{\otimes n}$.
  \item[R4.] \emph{High determinism.} Replace $L_1$ by the $\leq_{\mathrm{lex}}$-least
    element of the set of labels that keep $v$ in its $\gen G$-equivalence class.
    Here $\leq_{\mathrm{lex}}$ is the lexicographic order on \emph{labels} fixed in
    \eqref{eq:order}; it is not the order $\preccurlyeq$ on nodes used by R2.
    \Cref{prop:r4class} identifies that set.
  \item[R5.] \emph{Merge.} If a node $u$ has the same children and the same edge
    labels as $v$, merge $u$ and $v$.
\end{itemize}

\Cref{fig:reduced1} shows the node these rules produce, in the style
of~\cite{limdd}. Each $\rightsquigarrow$ is labeled with the rule that fires, and
whatever a rule removes from an edge reappears on the root edge.

\begin{figure}[b!]
\centering
\tikz[->,>=stealth',shorten >=1pt,auto,node distance=1.5cm,font=\footnotesize,
        thick, state/.style={circle,draw,inner sep=0pt,minimum size=14pt}]{
    \node[state] (1) {$v'$};
    \node[state] (1a) [below = 1cm of 1, xshift=1.7cm] {$w$};
    \node[above = .5cm of 1] (x1) {};
    \path[]
    (x1) edge      node[pos=.4,above right,pos=.7] {} (1)
    (1) edge[e0,bend left=-20] node[pos=.3,lbl,left] {$A$} (1a)
    (1) edge[e1,bend right=-20] node[pos=.3,lbl,above right] {$0$} (1a)
    ;
    \node[state, right = 2.5cm of 1] (2) {$v$};
    \node[above = .5cm of 2] (x2) {};
    \path[]
    (x2) edge    node[left,pos=0,lbl] {$\id \otimes A$} (2)
    (2) edge[e0,bend left=-20] node[pos=.3,left] {} (1a)
    (2) edge[e1,bend right=-20] node[pos=.3,lbl,right] {$0$} (1a)
    (1) --  node[yshift=.0cm] {$\overset{\mathrm{R3}}{\rightsquigarrow}$} (2)
    ;
    }~~~~~~~~~~~~~~~
\tikz[->,>=stealth',shorten >=1pt,auto,node distance=1.5cm,font=\footnotesize,
        thick, state/.style={circle,draw,inner sep=0pt,minimum size=14pt}]{
    \node[state] (1) {$v''$};
    \node[above = .5cm of 1] (x1) {};
    \node[state] (1a) [below = 1cm of 1, xshift=2.3cm] {$v_L$};
    \node[state] (1b) [below = 1cm of 1, xshift=3.8cm] {$v_R$};
    \path[]
    (1) edge[e0] node[pos=.37,lbl,left] {$A$} (1a)
    (1) edge[e1] node[pos=.2,lbl,above right] {$B$} (1b)
    (1a) --  node[yshift=-.2cm] {$\beforeq$} (1b)
    ;
    \node[state, right = 2.5cm of 1] (2) {$v'$};
    \node[above = .5cm of 2] (x2) {};
    \path[]
    (x1) edge     node[above left,pos=.4] {} (1)
    (x2) edge     node[left,lbl,pos=.0] {$\id \otimes A$} (2)
    (2) edge[e0] node[pos=.2,left] {} (1a)
    (2) edge[e1] node[pos=.16,lbl,right] {$ A^{-1}B$} (1b)
    (1) --  node[yshift=.0cm] {$\overset{\mathrm{R3}}{\rightsquigarrow}$} (2)
    ;
    \node[state, right = 2.5cm of 2] (3) {$v$};
    \node[above = .5cm of 3] (x3) {};
    \path[]
    (x3) edge    node[lbl,above left,pos=.4] {$(\id \otimes A)\rootlim$} (3)
    (3) edge[e0] node[pos=.2,above left] {} (1a)
    (3) edge[e1] node[pos=.3,lbl,below right] {$\highlim$} (1b)
    (2) --  node[yshift=.0cm] {$\overset{\mathrm{R4}}{\rightsquigarrow}$} (3)
    ;
    }
	\caption{Reduced node construction in case $\ket{\phi_1} = 0$ (left), and
	        $\ket{\phi_0}, \ket{\phi_1} \neq 0$ and $v_L \beforeq v_R$ (right).
	        Each $\rightsquigarrow$ carries the rule that fires; R1 has already
	        redirected the zero edge on the left, and R2 supplies $v_L \beforeq v_R$
	        on the right.
	        Not shown: for cases $\ket{\phi_0} = 0$ and $v_R \beforeq v_L$, we take instead root edge $ X \otimes A$ and swap low/high edges. R5 is applied by hashing the identifier of node $v$.}
	\label{fig:reduced1}
\end{figure}

The set that R4 minimizes over is stated in the following proposition, in terms
of the stabilizers $Stab(v_0),Stab(v_1)$ of \Cref{sec:prelim}.

\begin{proposition}[restate=propfour,name={the R4 class}]\label{prop:r4class}
  Let $G=\gen{C^{\leq k}R_q}$ or $G=\gen{C^{\leq k}R_q,X}$, and let $v$ be a node
  with $L_0=I^{\otimes n}$, children $v_0,v_1$ and high label $L_1$. Then
  \begin{equation}\label{eq:true_label_class}
    \left\{L_1'\ \middle|\ \ket0\otimes\ket{v_0}+\ket1\otimes L_1'\ket{v_1}
      \ \text{is}\ \gen G\text{-equivalent to}\ \ket v\right\}
    =\gen{C^{\leq k-1}R_q}\cdot Stab(v_0)\cdot L_1\cdot Stab(v_1)
  \end{equation}
  when $X\notin\gen G$ or the children denote states that are not
  $\gen G$-equivalent --- for canonical children, exactly when $v_0\neq v_1$.
  Otherwise the set additionally contains the \emph{inverse-twisted} coset
  $\gen{C^{\leq k-1}R_q}\cdot Stab(v_0)\cdot (L_1)^{-1}\cdot Stab(v_1)$.
\end{proposition}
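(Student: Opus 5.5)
The approach is to decompose an arbitrary $n$-qubit group element according to the top qubit, the ``layer split'' referred to in \Cref{lem:arity}. Every $g\in\gen{C^{\leq k}R_q}_n$ is a product of gates $(C^{\ell-1}R_q)_{(a)}$ with $|a|\le k+1$. Split these by whether $1\in a$. Gates with $1\notin a$ act as $I\otimes g_0$. A gate with $1\in a$ acts as $\ket0\bra0\otimes I+\ket1\bra1\otimes (C^{\ell-2}R_q)_{(a\setminus\{1\})}$, and the second factor has arity at most $k$. Since everything is diagonal and commutes, we get
\[
g=(\ket0\bra0\otimes g_0)+(\ket1\bra1\otimes g_0h),\qquad g_0\in\gen{C^{\leq k}R_q}_{n-1},\ h\in\gen{C^{\leq k-1}R_q}_{n-1},
\]
where the scalar convention $C^{-1}R_q=\omega_q$ absorbs the case $a=\{1\}$. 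Conversely, every such pair $(g_0,h)$ arises. For the group with $X$, \Cref{lem:commute} (normality) lets us write every element as $d\cdot X^{(c)}$ with $d$ diagonal. The element then either keeps the top qubit, giving the same block-diagonal form with $g_0$ now in $\gen{C^{\leq k}R_q,X}$ and $h$ still in $\gen{C^{\leq k-1}R_q}$ up to conjugation, or it flips the top qubit, giving an anti-block-diagonal form.

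For the inclusion ``$\supseteq$'' in \eqref{eq:true_label_class}, take $L_1'=h\,s_0L_1s_1$ with $s_b\in Stab(v_b)$ and $h\in\gen{C^{\leq k-1}R_q}$. Set $g=\ket0\bra0\otimes s_0+\ket1\bra1\otimes s_0h'$, with $h'$ chosen so that $s_0h'=hs_0$; this $h'$ exists because $\gen{C^{\leq k-1}R_q}$ is normal (commutative in the diagonal case, and normal by the same argument as \Cref{lem:commute} when $X$ is present). Then
\[
g\bigl(\ket0\ket{v_0}+\ket1 L_1\ket{v_1}\bigr)=\ket0\ket{v_0}+\ket1 hs_0L_1s_1\ket{v_1},
\]
using $s_0\ket{v_0}=\ket{v_0}$ and $s_1\ket{v_1}=\ket{v_1}$. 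For ``$\subseteq$'', suppose $\lambda g\ket v=\ket0\ket{v_0}+\ket1L_1'\ket{v_1}$ with $g$ block-diagonal. Comparing the $0$-cofactors gives $\lambda g_0\in Stab(v_0)$; here the scalar is harmless, because a nonzero scalar multiple of a group element fixing $\ket{v_0}$ forces the scalar to be $1$ relative to the group part. Comparing the $1$-cofactors gives $L_1'\ket{v_1}=\lambda g_0hL_1\ket{v_1}$, so $L_1^{-1}h^{-1}(\lambda g_0)^{-1}L_1'\in Stab(v_1)$. Rearranging with normality yields $L_1'\in\gen{C^{\leq k-1}R_q}\,Stab(v_0)\,L_1\,Stab(v_1)$.

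The anti-diagonal case, which arises only when $X\in\gen G$, is the remaining piece. If $g$ swaps the top qubit, the cofactors get exchanged: $\lambda g_0'L_1\ket{v_1}=\ket{v_0}$ and $\lambda g_0'h\ket{v_0}=L_1'\ket{v_1}$. The first equation forces $v_0$ and $v_1$ to be $\gen G$-equivalent, which settles the dichotomy stated in the proposition. For canonical children this means $v_0=v_1$, by uniqueness of the reduced node (induction on level). Then $\lambda g_0'L_1\in Stab(v_0)$, and substituting gives $L_1'\in\gen{C^{\leq k-1}R_q}\,Stab(v_0)\,L_1^{-1}\,Stab(v_1)$. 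The reverse inclusion follows by constructing $g=X\otimes\cdot$ explicitly.

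The main obstacle I expect is bookkeeping of the $X$ part of $g_0$ and $h$ in the case with $X$. The question is whether the factor $h$ really stays in the diagonal group $\gen{C^{\leq k-1}R_q}$ when $g_0$ contains bit flips, since conjugating a controlled rotation by $X$ produces lower-arity rotations and scalars $\omega_q$. I would handle this with \eqref{eq:commute}: flips on the lower qubits only lower arities, and they stay within $\gen{C^{\leq k-1}R_q}$ times scalars. The scalars must then be absorbed consistently into $\lambda$ and into the stabilizers.
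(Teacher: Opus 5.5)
Your route is the paper's. You split $g$ at qubit~$1$ into a factor shared by both cofactors and a free factor in $\gen{C^{\leq k-1}R_q}$ acting on the high cofactor only (\Cref{lem:split}). You then separate the case where $g$ flips qubit~$1$, which forces equivalent children and yields the inverse-twisted coset. Your explicit $\supseteq$ constructions make precise what the paper leaves to ``ranges freely and independently''.

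Two small remarks on that part:
\begin{itemize}
\item The conjugation bookkeeping you expect to be the main obstacle disappears if you push all flips to the right, as in \eqref{eq:standard-form}. Write $g=\ctl(\tilde g)\,(I\otimes g_0)$ or $g=\ctl(\tilde g)\,(X\otimes g_0)$. The free factor then already stands, diagonal, on the left, which is where \eqref{eq:true_label_class} wants it.
\item For the reverse inclusion of the inverse-twisted coset, the witness must be $\ctl(h)\,(X\otimes s_0L_1^{-1})$. A pure tensor $X\otimes(\cdot)$ only reaches $h=I$.
\end{itemize}

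The step that actually fails is the scalar remark. From $\lambda g_0\ket{v_0}=\ket{v_0}$ nothing forces $\lambda=1$ or $\lambda g_0\in\gen G$. Take $G=\gen{R_q}$ with $q\geq2$, and the node $w$ with $\ket w=\ket1$, which R3 allows in the diagonal case. Then $w$ is fixed by $\omega_q^{-1}R_q\notin\gen{R_q}$, whereas $\{g\in\gen{R_q}\mid g\ket w=\ket w\}=\{I\}$. The same problem sits in your $1$-cofactor step. Since $L_1,L_1'$ carry arbitrary scalars, $L_1^{-1}h^{-1}(\lambda g_0)^{-1}L_1'$ need not be a group element.

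No better argument repairs this, because it breaks the literal statement. Take $v_0=v_1=w$ and $L_1=I$. The label $L_1'=R_q$ lies in the left-hand side of \eqref{eq:true_label_class}, since $(R_q\otimes I)\ket v=\ket0\otimes\ket w+\ket1\otimes R_q\ket w$. But the right-hand side, with $Stab$ taken literally as group elements, contains only scalar multiples of $I$.

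The identity is correct once $Stab(u)$ is read as the stabilizer inside the group of $G$-LIMs, $\{\ell\mid \ell\ket u=\ket u\}$. This is also how the paper's proof has to be read: it folds the scalar into $g''$ and then writes $g''g'''\in Stab(v_0)$. With that reading your two membership steps hold verbatim, and your $\supseteq$ witnesses are still $G$-LIMs. So the scalar remark should simply be deleted, and the rest of your proof goes through unchanged.
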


\begin{theorem}[restate=thmcanonicity,name={}]\label{thm:canonicity}
  Let $G=\gen{C^{\leq k}R_q}$ or $G=\gen{C^{\leq k}R_q,X}$. Every $\gen G$-LIMDD
  that satisfies R1--R5 at every node is in canonical form. That is, for every
  nonzero vector $\ket\phi$ there is a \emph{unique} node $v$ satisfying R1--R5
  such that $\ket\phi=\ell\ket v$ for some $G$-LIM $\ell$.
\end{theorem}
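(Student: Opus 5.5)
We prove existence and uniqueness separately, both by induction on the number of qubits $n$, following the standard LIMDD canonicity argument of~\cite{limdd}. The coarsening by $\gen G$ enters only through R2 and R4, and there we rely on \Cref{prop:r4class}.

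\emph{Base case.} For $n=0$ the only node is the leaf. Every nonzero scalar is a $G$-LIM, so $\ket\phi=\phi\cdot\ket{\mathrm{leaf}}$, and the leaf is the unique node of index $0$.

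\emph{Existence.} Write $\ket\phi=\ket0\otimes\ket{\phi_0}+\ket1\otimes\ket{\phi_1}$. For each nonzero cofactor, the induction hypothesis gives a reduced node $v_b$ and a LIM $\ell_b$ with $\ket{\phi_b}=\ell_b\ket{v_b}$. We then build $v$ exactly as in \Cref{fig:reduced1}. If a cofactor is zero, R1 redirects that edge and labels it $0$. If $X\in\gen G$, R2 is enforced by swapping the two edges and moving an $X$ on the top qubit into the root edge; this stays inside the group. R3 moves $I\otimes\ell_0$ onto the incoming edge and leaves high label $\ell_0^{-1}\ell_1$. R4 replaces that label by the $\leq_{\mathrm{lex}}$-least element of the class in \Cref{prop:r4class}. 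This class is well defined and nonempty, and by definition the new node is $\gen G$-equivalent to the old one, so the correcting LIM can be absorbed into the root edge. R5 is then enforced by hash-consing. Every step changes the root label only by a $G$-LIM, so $\ket\phi=\ell\ket v$ holds throughout.

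\emph{Uniqueness.} Suppose $v,w$ both satisfy R1--R5 and $\ell\ket v=\ell'\ket w$. Then $\ket v=g\ket w$ with $g=\ell^{-1}\ell'$ a $G$-LIM. The first step is to apply the layer split of $g$ from \Cref{sec:machinery}. It writes $g$ as a top-qubit part, either diagonal or containing an $X$, together with two $(n-1)$-qubit LIMs $g_0,g_1$ acting on the cofactors. The $k$-arity part on the top qubit is what yields the free $\gen{C^{\leq k-1}R_q}$ factor.

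\emph{Matching the children.} If the top part contains no $X$, then $\ket{v_b}$ and $\ket{w_b}$ are $\gen G$-equivalent for each $b$, up to the zero cases, which R1 and R3 force to agree. The induction hypothesis then gives $v_b=w_b$. If the top part contains an $X$, then $v_0\sim w_1$ and $v_1\sim w_0$, so the sets of children coincide. R2, together with the fixed total order $\preccurlyeq$, then forces the same assignment of low and high children, and also $v_0=v_1$ whenever the swap is genuinely needed.

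\emph{Matching the labels.} Once the children agree, R3 fixes both low labels to the identity, or fixes the $0$ pattern. Both high labels then lie in the single set of \Cref{prop:r4class}; in the swapped case this is the set with the inverse-twisted coset added. Since R4 picks the $\leq_{\mathrm{lex}}$-minimum of that set, the high labels are equal. Finally, R5 gives $v=w$.

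The main obstacle is the uniqueness case analysis. We must check that every equivalence $g$ between two reduced nodes lands in the set described by \Cref{prop:r4class}. In particular, the $X$ case with $v_0=v_1$ must produce exactly the inverse-twisted coset, and R2 must be consistent with R3 there, including when a zero edge is present. We would go through the cases using \Cref{lem:commute} to move $X$ past the rotations, and \Cref{lem:arity} to exclude spurious merges where the children are not equivalent.
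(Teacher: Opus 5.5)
Your proposal is correct and follows essentially the same route as the paper: induction on $n$, existence by building the node through R1--R5 (swapping via $X\otimes g_1$ when R2 requires it), and uniqueness by applying \Cref{lem:split} to the LIM relating the two nodes. There, $x_1=0$ matches the children pairwise by induction, and $x_1=1$ gives $v_0\preccurlyeq v_1=w_0\preccurlyeq w_1=v_0$, so in that case $v_0=v_1$ always. In both cases the two high labels then lie in the set of \Cref{prop:r4class}, using the inverse-twisted coset when the children are swapped, and R4 and R5 finish the argument. Your explicit handling of zero edges goes slightly beyond what the paper writes, and the appeal to \Cref{lem:arity} is not needed.
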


\Cref{sec:canonform} gives the proof idea and \Cref{app:canon} the full proof by
induction.

\begin{theorem}[restate=thmpolytime]\label{thm:polytime}
  Applying R1--R5 to a $\gen G$-LIMDD can be done in time polynomial in the size of
  the LIMDD, for $\gen G=\gen{C^{\leq k}R_q}$ and for $\gen G=\gen{C^{\leq k}R_q,X}$,
  for every $q\geq1$.
\end{theorem}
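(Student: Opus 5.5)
We will prove \Cref{thm:polytime} with a bottom-up $\mathsf{makeEdge}$ procedure, in the style of the Pauli-LIMDD algorithm of~\cite{limdd}. The procedure processes the diagram one level at a time, from the leaf upward. Each time a node is produced, both of its children are already canonical. For each canonical node we also store a generating set of $Stab(v)$, of polynomial size.

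For a node whose children are canonical, R1, R2 and R3 are straightforward. R1 and R3 are the rewritings of \Cref{fig:reduced1}: we factor $L_0$ (or $X\otimes A$) onto the incoming edge, and this costs only one multiplication of LIMs. R2 compares the children under the fixed order $\preccurlyeq$, which is just a pointer comparison. R5 is a hash lookup on the tuple (children, labels).

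The expensive rule is R4. By \Cref{prop:r4class}, it amounts to computing the $\leq_{\mathrm{lex}}$-least element of the double coset $\gen{C^{\leq k-1}R_q}\,Stab(v_0)\,L_1\,Stab(v_1)$, together with the inverse-twisted coset when $v_0=v_1$ and $X\in\gen G$. The key observation is the following.

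\begin{itemize}
\item \emph{Diagonal case.} The diagonal group $\gen{C^{\leq k}R_q}$ is abelian. An element of it, modulo scalars, is a phase polynomial $\sum_{|a|\le k+1} c_a\prod_{i\in a}x_i$ with coefficients $c_a\in\z_q$. It is therefore a vector in the finite abelian group $\z_q^{N}$, where $N=\sum_{j\le k+1}\binom{n}{j}$ is polynomial for fixed $k$.
\item \emph{Bit-flip case.} With $X$ in the group, \Cref{lem:commute} shows that $\gen{C^{\leq k}R_q}$ is normal in $\gen{C^{\leq k}R_q,X}$. Every element can then be written as a diagonal part times $X^{(b)}$ with $b\in\z_2^n$. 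Conjugation by $X$ acts on the diagonal coordinates by an explicit affine map.
\end{itemize}

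In both cases the double coset reduces to a coset of a subgroup of a polynomial-rank abelian group $\z_q^N\times\z_2^n$, twisted by an affine action. We then fix $\leq_{\mathrm{lex}}$ as in~\eqref{eq:order}: first the $X$-part, then the diagonal coefficients in a fixed order. Under this order, the least element of a coset $L_1+H$ is obtained by putting the generators of $H$ in Howell normal form over $\z_q$ and reducing $L_1$ against them. For $\z_q$ this is a Smith/Howell-form computation that runs in polynomial time for every $q$, including composite $q$, where naive Gaussian elimination fails.

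We also need the stabilizer of every new node, since it is used one level up. We compute it from the children's stabilizers through the layer split of \Cref{sec:machinery}. An element $g=\ket0\bra0\otimes g_0+\ket1\bra1\otimes g_1$, possibly composed with $X$ on the top qubit, stabilizes $v$ iff $g_0\in Stab(v_0)$ and $g_1L_1\ket{v_1}=L_1\ket{v_1}$ up to the scalar constraints. In additive coordinates this is the intersection of two cosets, which we again solve by linear algebra over $\z_q$ (kernel computation in Howell form). The swap term, when $v_0\simeq v_1$, adds at most one further coset.

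Finally, the full procedure makes one pass over all nodes. Each node costs $\mathrm{poly}(N,\log q)$, and the diagram never grows, so the total running time is polynomial in the diagram size. The step we expect to be the main obstacle is to show that the group operations, in particular the $X$-twisted multiplication and the scalar bookkeeping ($\omega_q$ factors from \Cref{lem:commute}), can be encoded as affine maps on $\z_q^N\times\z_2^n$ that keep the coordinate representation polynomial-sized. If the lex order of~\eqref{eq:order} is not compatible with Howell reduction, we would handle the $\z_2^n$ part first by enumerating over its coset structure via linear algebra over $\z_2$, and only then minimize the diagonal part.
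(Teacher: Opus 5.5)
Your diagonal case is essentially the paper's proof. You identify $\gen{C^{\leq k}R_q}$ with $\z_q^M$, obtain $Stab(v)$ as the intersection of two preimages of the children's stabilizers under the layer-split maps, and take the $\leq_{\mathrm{lex}}$-least element of the coset $H L_1$, with $H=\gen{\gen{C^{\leq k-1}R_q},Stab(v_0),Stab(v_1)}$, by Howell reduction.

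The bit-flip case has a genuine gap, and the problem goes beyond the encoding question you flag. Your claim that the R4 set becomes a coset of a subgroup of $\z_q^N\times\z_2^n$, to be minimized by Howell-reducing $L_1$, is false. The group $\gen{C^{\leq k}R_q,X}$ is the semidirect product $\z_q^{M+1}\rtimes\z_2^n$, with product $(a,b)(a',b')=(a+\varphi_b(a'),\,b+b')$. By \Cref{lem:commute}, $\varphi_b$ multiplies every top-arity coordinate $a$ by the sign $(-1)^{|a\cap b|}$. So $Stab(v_0)\,E\,Stab(v_1)$ is a double coset in a non-abelian group and is not affine in coordinates.

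Your fallback of fixing the $X$-part first does not repair this. There is leftover freedom $r'\in R_0\cap R_1$: elements of both stabilizers sharing an $X$-part $r'$ can be inserted without changing the $X$-part of the product. That $r'$ still acts through $\varphi_{r'}$. The achievable values of one top-arity coordinate then have the form $(c+H_a)\cup\bigl(-(c'+H_a)\bigr)$ for a cyclic $H_a\leq\z_q$. For $q\geq3$ this is in general not a coset, e.g. $\{1,2\}\subseteq\z_3$. So no Howell reduction against a subgroup returns the lex-minimum, and enumerating $r'$ costs $2^{\dim(R_0\cap R_1)}$.

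The same defect affects your stabilizer step. Take $\{(0,0),(\delta,r)\}$ with $\varphi_r(\delta)=-\delta$. It is a subgroup, but for $q\geq3$ and $\delta\neq0$ it is not closed under coordinatewise addition. Hence $Stab(v)$ is not a submodule, and it cannot be obtained by intersecting cosets with linear algebra over $\z_q$. In addition, $Stab(v_1)$ must now be genuinely conjugated by the edge label, which is no longer trivial.

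The missing idea is in the paper's appendix. Represent each subgroup of the semidirect product as a triple $(L,R,\theta)$:
\begin{itemize}
\item $L$ is the diagonal part, kept in Howell form;
\item $R\leq\z_2^n$ is the $X$-projection;
\item $\theta$ is a normal crossed homomorphism $R\to\z_q^{M+1}/L$, stored on generators.
\end{itemize}
Compute $Stab(v)$ in this presentation: conjugate $Stab(v_1)$ along the edge, solve for $L$ as in the diagonal case, solve for $R$ from constraints on top-arity coordinates only, and read $\theta$ off the solution witnesses.

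Then minimize the high label in the fixed order. First minimize the $X$-part. Next minimize the top-arity coordinates one at a time; lower-arity coordinates are free because of the factor $\gen{C^{\leq k-1}R_q}$ and are set to $0$. For each top-arity $a$, the character $\chi_a(r')=|a\cap r'|\bmod 2$ is $\z_2$-linear. Choose a basis of $R_0\cap R_1$ with at most one odd generator. The achievable values are then at most two cosets of one cyclic subgroup of $\z_q$, and their least element is a residue computation. The paper argues that the set of minimizers is again a subgroup coset, the kernel of a crossed homomorphism, on which the next coordinate is minimized. When $v_0=v_1$, the whole procedure is run from $E$ and from $E^{-1}$, and the smaller result is kept.
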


In the diagonal case the orbit of R4 is a coset of a computable subgroup of
$\gen{C^{\leq k}R_q}$, which is in fact a subgroup of $\z_q^M$ once labels are written in the
exponent coordinates of \Cref{sec:machinery}, with $M\in O(n^{k+1})$. The
canonical representative is the lexicographic minimum of the coset, computed by Howell
reduction. With the bit flip $X$, the orbit is also a coset, but $\gen{C^{\leq k}R_q,X}$ is no cyclic group and therefore does not allow representations of subgroups in Howell normal form. Fortunately, $\gen{C^{\leq k}R_q,X}$ is a semi-direct product group of $\gen{C^{\leq k}R_q}$ and $\gen{X}$, two groups that ensure a subgroup representation in Howell normal form. This semi-direct product group structure allows efficient computation and representation of subgroups of $\gen{C^{\leq k}R_q,X}$, as well as efficient computation of a canonical representative of a coset.

Throughout, ``polynomial in the size of the LIMDD'' means polynomial for each
fixed $k$: a single edge label already has
$M=\sum_{\ell=1}^{k+1}\binom{n}{\ell}\in O(n^{k+1})$ digits
(\Cref{sec:machinery}), so the diagram size and the time to calculate the canonical form both scale exponentially
in $k$.

\subsubsection{The locality boundary.}
Finally we show where coarsening has to stop. Every group in
the family consists of diagonal and antidiagonal matrices, and that is not an
accident of the analysis. If $G$ contains a matrix that is neither, then two nodes
that are not in one orbit can become redundant once a node is placed above them, so
orbit equality is no longer decided from the children and the quotient stops being a
bottom-up congruence closure. We exhibit this situation for $G=\gen H$, where $H$ is the Hadamard gate.

\begin{example}\label{ex:H}
  Take the unnormalized one-qubit states
  $\ket\phi=(\tfrac1{\sqrt2}+1)\ket0+\tfrac1{\sqrt2}\ket1$ and
  $\ket\psi=(\tfrac1{\sqrt2}-1)\ket0+\tfrac1{\sqrt2}\ket1$. Neither they nor their
  normalizations are related by an $\gen H$-LIM, so the two nodes cannot be merged.
  Yet
  \begin{equation*}
    \ket\Phi=\tfrac1{\sqrt2}\bigl(\ket0\otimes\ket\phi+\ket1\otimes\ket\psi\bigr)
    =(H\otimes H)\bigl[(\ket0+H\ket1)\otimes\ket0\bigr],
  \end{equation*}
  which is a Tower-$\gen H$-LIMDD: both levels have a single node. So the two
  unmergeable nodes disappear once a node is placed above them.%
\end{example}

\Cref{ex:H} is the mirror image of \Cref{lem:arity}. For arity-closed groups of
(anti-)diagonal matrices, inequivalent children stay inequivalent when a level is
added, so a bottom-up construction can decide sharing from local data. For
$\gen H$ the opposite happens: two nodes that a local algorithm has already
declared distinct become redundant later. An $\gen H$-LIMDD algorithm therefore
either rebuilds the whole diagram on each edge creation, and is not local, or
returns a diagram that is not the smallest one. 

\subsection*{Organization.}
\Cref{sec:succinctness} proves the succinctness order (\Cref{fig:succinctness_picture}); \Cref{sec:tractability}
the tractability table (\Cref{tab:tractability}); \Cref{sec:canonicity} introduces the
the exponent-vector encoding and the layer split, the one calculation the rest of
the paper reuses, and proves \Cref{thm:canonicity} and \Cref{thm:polytime}; \Cref{sec:qft} proves the the quantum Fourier transform matrix as tower. Proofs that are routine,
long, or purely computational are in the appendix.

\FloatBarrier

\section{A succinctness hierarchy}\label{sec:succinctness}

The hierarchy of \Cref{fig:succinctness_picture} rests on three rules, all for a
fixed second parameter: coprime $q$ and $p$ give incomparable classes of $\gen{C^{\leq k}R_q}$-LIMDD and $\gen{C^{\leq k}R_p}$-LIMDD; $p\mid q$
with $q>p$ makes $\gen{C^{\leq k}R_q}$-LIMDD exponentially more succinct than
$\gen{C^{\leq k}R_p}$-LIMDD; and $h<k$ makes $\gen{C^{\leq k}R_q}$-LIMDD
exponentially more succinct than $\gen{C^{\leq h}R_q}$-LIMDD. All three come from
one family of separating states, built from hypergraphs. \Cref{sec:towers} shows
the family has linear-size towers, \Cref{sec:lowerbound} shows it has no small
diagram in the weaker classes, and \Cref{sec:addingX} adds the third direction, the
bit flip.

\subsection{Hypergraph states are tower LIMDDs}\label{sec:towers}

We let the reader recall \Cref{def:hypergraphstate} for hypergraph states. We introduce the unnormalized state
$$
\ket{\hat\phi^G_p}:= 2^{m/2}\ket{\phi^G_p} = \prod_{e\in E}(C^{|e|-1}R_p)_{(e)}(\ket0+\ket1)^{\otimes m}.
$$
In our proofs, we construct LIMDDs for $\ket{\hat\phi^G_p}$. Adding a factor $2^{-m/2}$ to the root edge thus gives LIMDDs for $\ket{\phi^G_p}$ . 

\begin{figure}[t]
    \centering
    \begin{tikzpicture}
    [
        he/.style={draw, semithick},        %
    ]
    
    \node (1) at (0,0) {$5$};
    \node (2) at (1,1) {$4$};
    \node (3) at (1,0) {$3$};
    \node (4) at (1,2) {$2$};
    \node (5) at (0,2) {$1$};
    
    \draw[he] \hedgeii{4}{5}{3mm};
    \draw[he] \hedgeii{1}{3}{3mm};
    \draw[he] \hedgeii{1}{3}{4mm};
    \draw[he] \hedgeiii{2}{5}{4}{4mm};
    \draw[he] \hedgeiii{2}{5}{4}{5mm};
    \draw[he] \hedgeiii{1}{4}{3}{6mm};
    \end{tikzpicture}~~~~
    \begin{tikzpicture}[
        scale=0.3,
        every path/.style={>=latex},
        every node/.style={},
        inner sep=1pt,
        minimum size=0.3cm,
        line width=1pt,
        node distance=.5cm,
        thick,
        font=\scriptsize
        ]
        \node[] (a0) {};
        \node[draw,circle,below = of a0] (a1) {$5$};
        \node[draw,circle,below = of a1] (a2) {$4$};
        \node[draw,circle,below = of a2] (a3) {$3$};
        \node[draw,circle,below = of a3] (a4) {$2$};
        \node[draw,circle,below = of a4] (a5) {$1$};
        \node[draw,circle,below = of a5] (a6) {$0$};

        \draw[e1] (a0) edge  node[right] {$I^{\otimes 5}$} (a1);
        
        \draw[e0=20] (a1) edge node[left,pos=.3] {} (a2);
        \draw[e1=20] (a1) edge  node[right,pos=.3] {~$(C^2R_q)_{(2,3,4)}(R_q)^2_{(3)}$} (a2);

        \draw[e0=20] (a2) edge node[left,pos=.3] {} (a3);
        \draw[e1=20] (a2) edge  node[right,pos=.3] {~$(CR_q)^2_{(1,2)}$} (a3);

        \draw[e0=20] (a3) edge node[left,pos=.3] {} (a4);
        \draw[e1=20] (a3) edge  node[right,pos=.3] {~$I^{\otimes 2}$} (a4);

        \draw[e0=20] (a4) edge node[left,pos=.3] {} (a5);
        \draw[e1=20] (a4) edge  node[right,pos=.3] {~$R_q$} (a5);

        \draw[e0=20] (a5) edge node[left,pos=.3] {} (a6);
        \draw[e1=20] (a5) edge  node[right,pos=.3] {~$1$} (a6);
    \end{tikzpicture}
    \caption{A $4$-hypergraph $G$ and its hypergraph state $\ket{\phi^G_q}$ as a Tower-$\gen{C^{\leq2}R_q}$-LIMDD, for any $q$. The $1$-edge label of the node for vertex $j$ collects exactly the hyperedges whose largest vertex is $j$, with that vertex removed. Reading the labels off the tower inverts the construction, which is \Cref{thm:tower_is_hypergraphstate}.}
    \label{fig:hypergraph}
\end{figure}

\thmhgtower*
\begin{proof}
  Let $\sigma=(x_1,\dots,x_m)$ be a variable order and induct on $m$. For $m=1$ the
  claim is immediate. For $m>1$, put $E^0=\{e\in E\mid x_1\notin e\}$ and
  $E^1=\{e\setminus\{x_1\}\mid e\in E,\ x_1\in e\}$, write $x_e=\prod_{x_i\in e}x_i$
  and $G'=(\{x_2,\dots,x_m\},E^0)$. Splitting the sum on $x_1$,
  \begin{align*}
    \ket{\hat\phi^G_p}
      &=\ket0\otimes\!\!\sum_{x\in\{0,1\}^{m-1}}\prod_{e\in E^0}\omega_p^{x_e}\ket x
       +\ket1\otimes\!\!\sum_{x\in\{0,1\}^{m-1}}\prod_{e\in E^1}\omega_p^{x_e}\prod_{e\in E^0}\omega_p^{x_e}\ket x\\
      &=\ket0\otimes\ket{\hat\phi^{G'}_p}
       +\ket1\otimes\Bigl[\textstyle\prod_{e\in E^1}(C^{|e|-1}R_p)_{(e)}\Bigr]\ket{\hat\phi^{G'}_p} .
  \end{align*}
  The two cofactors are equal up to the $\gen{C^{\leq k-2}R_p}$-LIM
  $\prod_{e\in E^1}(C^{|e|-1}R_p)_{(e)}$, which lies in the group because $|e|\leq k$
  and hence $|e\setminus\{x_1\}|\leq k-1$. So the two edges of the node for $x_1$
  point to the same child, and the induction continues on $\ket{\hat\phi^{G'}_p}$
  with $m-1$ vertices. Putting $2^{-m/2}$ on the root edge gives $\ket{\phi^G_p}$.
\end{proof}

\subsection{An exponential lower bound}\label{sec:lowerbound}

Let $G=(V,E)$ be a graph and $V_1,V_2\subseteq V$ disjoint. A
$(V_1,V_2)$-\emph{induced matching} is a set $M\subseteq E\cap(V_1\times V_2)$ of
pairwise disjoint edges such that $G[V(M)]=M$, that is, no two edges of $M$ are
joined by a further edge of $G$. The lower bound needs one combinatorial fact about
grids, which we state separately because it contains the core argument.

Each node $v$ of the graph $G$ contributes $k$ nodes $v_1,\dots,v_k$ in the $k$-hyperization of $G$. 
We call $v$ a \emph{supernode} and the $v_1,\dots,v_k$ \emph{subnodes} of $v$. We denote by $\zeta$ the function that
maps subnodes to their corresponding supernode.

\begin{lemma}[{\cite[Lemma A.2]{vinkhuijzen2025thesis}}]\label{lem:matching}
  Let $T\subseteq V(G_n')$ with $|T|\geq|V|/2$ and $|\bar T|\geq|V|/2$. Then
  $G_n'$ has a $(T,\bar T)$-induced matching of size at least $\lfloor n/12\rfloor$.
\end{lemma}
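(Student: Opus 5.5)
The plan is to first find many $(T,\bar T)$ crossing edges, each lying in a different row (or each in a different column), and then thin them out to an induced matching using the grid structure. Throughout, $V=V(G_n')$ has $n^2$ vertices arranged in $n$ rows and $n$ columns. The hypothesis gives $|T|,|\bar T|\geq n^2/2$. Call a row or column \emph{mixed} if it contains vertices of both $T$ and $\bar T$. A mixed row contains a horizontal edge with one endpoint in $T$ and the other in $\bar T$: walk along the row until the colour changes. The same holds for columns with vertical edges.

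\textbf{Step 1: at least $\lceil n/2\rceil$ mixed rows, or at least $\lceil n/2\rceil$ mixed columns.}
\begin{itemize}
\item If every row is mixed, or every column is mixed, we are done.
\item Otherwise there is a monochromatic row $R$ and a monochromatic column $C$. They share a vertex, so they have the same colour, say $T$; the other case is symmetric.
\item Every row containing a vertex of $\bar T$ also meets $C\subseteq T$, so it is mixed.
\item Since $|\bar T|\geq n^2/2$ and a row has only $n$ vertices, $\bar T$ meets at least $n/2$ distinct rows.
\end{itemize}
So in all cases we get at least $\lceil n/2\rceil$ mixed rows, or symmetrically that many mixed columns.

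\textbf{Step 2: extract an induced matching.} Assume rows (columns are symmetric). From each mixed row pick one crossing horizontal edge, so its endpoints are one in $T$ and one in $\bar T$. Keep only the edges whose row indices are pairwise at least $2$ apart. Taking every other mixed row in increasing order retains at least $\lceil n/2\rceil/2\geq n/4\geq\lfloor n/12\rfloor$ edges. The extra factor in the lemma's bound is slack.
\begin{itemize}
\item The chosen edges lie in distinct rows, so they are vertex-disjoint.
\item Any further grid edge between two chosen edges would join vertices in rows $i$ and $j$ with $|i-j|\geq 2$. Grid edges join only equal or adjacent rows, so no such edge exists.
\end{itemize}
Hence $G[V(M)]=M$, and orienting each edge as $(T,\bar T)$ gives the required $(T,\bar T)$-induced matching.

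\textbf{Main obstacle.} The only delicate point is the case analysis of Step 1, where no row and no column is guaranteed to be mixed. It is resolved by the observation that a monochromatic row and a monochromatic column must share their colour. Step 2 is routine once the crossing edges are known to lie in distinct rows.
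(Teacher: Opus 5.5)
Your proof is correct. There is no in-paper proof to compare it with: the paper imports the lemma as a black box from \cite[Lemma A.2]{vinkhuijzen2025thesis}. Its only comment on the lemma (the remarks after \Cref{cor:Xhierarchy}) is that the Lean formalisation sidesteps it. The formalisation proves directly the weaker property the lower bound actually uses: the $T$-side endpoints have partners lying entirely outside $X_d$, and no two of them share a neighbour.

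Your route is a short, self-contained argument. A monochromatic column makes every row that meets the opposite colour class mixed, and counting shows that class meets at least $\lceil n/2\rceil$ rows. Taking one crossing edge from every other mixed row then leaves no grid edge between distinct chosen edges. This gives at least $\lceil n/4\rceil$ edges, so the constant $12$ is slack by a factor of three.

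Two small remarks. First, the monochromatic row $R$ is superfluous, since the column $C$ alone fixes the colour. Second, your count only needs each colour class to have more than $n(\lceil n/2\rceil-1)$ vertices. This is a genuine advantage. For odd $n$ the hypothesis $|T|,|\bar T|\geq n^2/2$ cannot be satisfied at all. Meanwhile, the choice of $d$ in the proof of \Cref{theorem:prime-hierarchy} only guarantees that both classes of $T'$ have at least $\lfloor n^2/2\rfloor$ elements. For $n\geq 3$ your argument covers that case without change.
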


\thmprimehierarchy*
\begin{proof}

  Let $\sigma=(x_1,\dots,x_{(k+1)n^2})$ be a variable order. Let $d$ be least such that
  half of the supernodes have at least one subnode among $x_1,\dots,x_d$, put
  $X_d=\{x_1,\dots,x_d\}$, and let $T'$ be the set of supernodes with a subnode in
  $X_d$. By \Cref{lem:matching} there is a $(T',\bar T')$-induced matching $M$ in
  $G_n'$ of size at least $\lfloor n/12\rfloor$. Let $U'\subseteq T'$ be the set of
  its endpoints in $T'$.

  Pick two distinct $A',B'\subseteq U'$. For each supernode in $A'$ choose one of
  its subnodes lying in $X_d$, and let $A$ be the set of chosen subnodes; define
  $B$ from $B'$ likewise. Write
  \begin{equation}\label{eq:f_expression}
    f(x)=\prod_{e\in E}\omega_p^{\prod_{x_i\in e}x_i},
    \qquad\text{so}\qquad
    \ket{\psi^n_{k+1,p}}=2^{-m/2}\sum_{x}f(x)\ket x .
  \end{equation}
  Let $\vec\alpha$ be the assignment to $X_d$ that sets $A$ to $1$ and everything
  else to $0$, and $\vec\beta$ likewise from $B$. We show the cofactors
  $f_{\vec\alpha}$ and $f_{\vec\beta}$ are not $\gen{C^{\leq h}R_q}$-equivalent.
  Suppose $f_{\vec\alpha}=\pi\cdot f_{\vec\beta}$ for a LIM
  \begin{equation}\label{eq:pi_expression}
    \pi=\lambda\cdot\prod_{0\leq \ell\leq h+1}\ \prod_{S\in\binom{[m-d]}{\ell}}\omega_q^{d_Sx_S},
    \qquad \lambda\in\co,\ d_S\in\z_q ,
  \end{equation}
  where $x_S=\prod_{i\in S}x_i$. The range of $\ell$ is the arity bound: the
  generator $C^{h'}R_q$ has arity $h'+1$, so the monomials available to $\pi$ are
  those of degree at most $h+1$.

  Choose $c'\in A'\setminus B'$; such $c'$ exists after possibly swapping $A'$ and
  $B'$, since $A'\neq B'$. Let $c$ be its chosen subnode, let $j'\in\bar U'$ be the
  supernode matched to $c'$ by $M$, and let $J=\zeta^{-1}(j')$, so $|J|=k+1$. For
  $Z\subseteq J$ let $\mu_Z$ be the assignment to the remaining $m-d$ variables that
  is $1$ exactly on $Z$. Because $M$ is \emph{induced}, no edge of $G^k_n$ other
  than those between $\zeta^{-1}(c')$ and $J$ can be made true by $\mu_Z$ together
  with $\vec\alpha$ or $\vec\beta$. Hence
  \begin{equation}\label{eq:pi_is_one}
    f_{\vec\alpha}(\mu_Z)=f_{\vec\beta}(\mu_Z)=1
    \quad\text{for }|Z|<k+1,
    \qquad\text{so}\qquad \pi(\mu_Z)=1 ,
  \end{equation}
  while for $Z=J$ the edge $\{c\}\cup J$ becomes true under $\vec\alpha$ but not
  under $\vec\beta$, so
  \begin{equation}\label{eq:pi_is_omega}
    f_{\vec\beta}(\mu_J)=1,\quad f_{\vec\alpha}(\mu_J)=\omega_p,
    \qquad\text{so}\qquad \pi(\mu_J)=\omega_p .
  \end{equation}
  Taking $Z=\emptyset$ in \eqref{eq:pi_is_one} gives $\lambda\,\omega_q^{d_\emptyset}=1$.
  Only this product is used below --- $\mu_\emptyset$ is the all-zero assignment, so
  it is one equation, not two --- and $\omega_q^{d_\emptyset}$ is a scalar, so we may
  absorb it into $\lambda$ and assume $d_\emptyset=0$ and $\lambda=1$. Now induct on $|Z|$: if $d_{Z'}=0$ for all $Z'\subsetneq Z$ with
  $|Z'|<|Z|$, then every factor of \eqref{eq:pi_expression} except $\omega_q^{d_Z}$
  evaluates to $1$ at $\mu_Z$, because $x_S(\mu_Z)=0$ unless $S\subseteq Z$. So
  \eqref{eq:pi_is_one} gives $\omega_q^{d_Z}=1$, that is $d_Z=0$. This holds for
  every $Z\subseteq J$ with $|Z|<k+1$.

  If $h<k$ then \eqref{eq:pi_expression} has no component of arity $k+1$, so $\pi$ is
  $1$ on every $\mu_Z$ with $Z\subseteq J$, contradicting \eqref{eq:pi_is_omega}. If
  $h\geq k$ then $\pi(\mu_J)=\omega_q^{d_J}$, and \eqref{eq:pi_is_omega} requires
  $\omega_q^{d_J}=\omega_p$, which is impossible when $q$ is not a multiple of $p$.
  So $\pi$ does not exist.

  Since $A'$ and $B'$ were arbitrary distinct subsets of $U'$, the diagram has at
  least $2^{|U'|}\geq2^{\lfloor n/12\rfloor}$ pairwise non-equivalent nodes at
  level $d$. By \Cref{lem:arity} these nodes stay pairwise non-equivalent in any
  larger diagram, so none of them can be merged.
\end{proof}

Together with \Cref{thm:hypergraphstate_is_tower} this gives the three rules above,
hence \Cref{fig:succinctness_picture}. \Cref{cor:continuous} also follows from \Cref{theorem:prime-hierarchy}.
\corContinuous* 
\begin{proof}
  Every $\gen{C^{\leq k}R_q}$-LIMDD is a $C^{\leq k}\Rot$-LIMDD, so
  $C^{\leq k}\Rot$-LIMDD is at least as succinct. Conversely $C^{\leq k}\Rot$
  contains $\gen{C^{\leq k}R_p}$ for every $p$ coprime to $q$, so
  \Cref{theorem:prime-hierarchy} with that $p$ separates them.
\end{proof}

\subsection{Adding the bit flip}\label{sec:addingX}

We first prove that allowing the bit flip $X$ in the labels of the LIMDDs give exponentially more succinct representations.

\obdd*
\begin{proof}
  It suffices to show, for every prefix length $j$ and all assignments
  $\alpha,\beta$ to the first $j$ variables, that $f_\alpha\neq f_\beta$ implies
  $f_\alpha\neq\pi f_\beta$ for every $\gen{C^{\leq k}R_q}$-LIM $\pi$. Let
  $\gamma$ witness $f_\alpha(\gamma)\neq f_\beta(\gamma)$, say
  $f_\beta(\gamma)=0$. Every element of $\gen{C^{\leq k}R_q}$ is diagonal with
  nonzero diagonal, so it preserves the set of inputs on which a function vanishes.
  Hence $\pi f_\beta$ vanishes at $\gamma$, while $f_\alpha(\gamma)\neq0$.
  So the two prefixes reach distinct nodes.
\end{proof}

\corX*
\begin{proof}
The proof follows directly from \Cref{thm:obdd} and \cite[Theorem 10]{limdd}. 
By \cite[Theorem 10]{limdd} we know that there exists a family of Boolean functions $f:\{0,1\}^n\to\{0,1\}$ such that with high probability the representation of $f$ has OBDD size
$2^{\Omega(n)}/(2n)$ and a tower-$\gen X$-LIMDD for every variable order. Combined with
\Cref{thm:obdd} this gives corollary directly.
\end{proof}

We now prove the succinctness hierarchy for LIMDDs equipped with the bit flip, as formalized in \Cref{thm:Xcolumn} and \Cref{cor:Xhierarchy}. 
For that we use the biased hypergraph states (\Cref{def:biasedhypergraphstate}). 
Three lemmas do the work: the bias does not impact succinct tower-LIMDD representations (\Cref{lem:biasedtower}), the bias kills the bit flip
(\Cref{lem:rigidity}), and what is then left is a diagonal comparison, which we record in
coefficient rather than evaluation form (\Cref{lem:exponentform}). 

\begin{lemma}\label{lem:biasedtower}
  Let $k,p\in\natur$, let $G$ be a $(k+2)$-hypergraph and let $r\neq0$. Then
  $\ket{\phi^G_{p,r}}$ is represented by a Tower-$\gen{C^{\leq k}R_p}$-LIMDD, for every
  variable order.
\end{lemma}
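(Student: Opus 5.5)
I will mirror the induction in the proof of \Cref{thm:hypergraphstate_is_tower}, carrying the bias as a scalar on the $1$-edges. It is cleaner to work with the unnormalized state
\begin{equation*}
  \ket{\hat\phi^G_{p,r}}:=\prod_{e\in E}(C^{|e|-1}R_p)_{(e)}\bigl(\ket0+r\ket1\bigr)^{\otimes m},
\end{equation*}
whose amplitude at $x$ is $r^{|x|}\prod_{e\in E}\omega_p^{x_e}$ with $x_e=\prod_{x_i\in e}x_i$. At the end, the normalization $(1+|r|^2)^{-m/2}$ is placed on the root edge; this is allowed because it is a nonzero scalar.

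Fix a variable order $\sigma=(x_1,\dots,x_m)$ and induct on $m$. For $m=1$ the state is $\omega\text{-phase}\cdot$ a one-qubit vector $\ket0+r\,c\ket1$, where $c$ is a product of powers of $\omega_p$. This is a node whose two edges point to the leaf, with labels $1$ and $rc$. Both labels are nonzero scalars, hence LIMs.

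For $m>1$, define $E^0$, $E^1$ and $G'$ exactly as in the proof of \Cref{thm:hypergraphstate_is_tower}. Splitting on $x_1$ gives
\begin{equation*}
  \ket{\hat\phi^G_{p,r}}=\ket0\otimes\ket{\hat\phi^{G'}_{p,r}}+\ket1\otimes r\Bigl[\textstyle\prod_{e\in E^1}(C^{|e|-1}R_p)_{(e)}\Bigr]\ket{\hat\phi^{G'}_{p,r}}.
\end{equation*}
This holds because fixing $x_1=1$ contributes a factor $r$ from $r^{|x|}$, and it reduces each hyperedge containing $x_1$ to $e\setminus\{x_1\}$; all other factors are shared by both cofactors. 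Every $e\in E$ has $|e|\le k+2$, so $|e\setminus\{x_1\}|\le k+1$ and $(C^{|e\setminus\{x_1\}|-1}R_p)$ has at most $k$ controls. The $1$-edge label is therefore the $\gen{C^{\leq k}R_p}$-LIM with scalar $r\neq0$ and group element $\prod_{e\in E^1}(C^{|e|-1}R_p)_{(e)}$. The $0$-edge label is the identity. Both edges point to the same child, namely the node for $\ket{\hat\phi^{G'}_{p,r}}$, which is a tower by induction since $G'$ is again a $(k+2)$-hypergraph on $m-1$ vertices.

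The argument contains no real obstacle. The only points to check are these:
\begin{itemize}
  \item Hyperedges $e\in E^1$ that become empty (that is, $e=\{x_1\}$) give the convention $C^{-1}R_p=\omega_p$. This is a scalar and is absorbed into the LIM's scalar.
  \item The bias must factor as the global scalar $r$ rather than as a non-group diagonal. It does, because $r^{|x|}=r^{x_1}r^{|x_2\dots x_m|}$ and the second factor is already part of $\ket{\hat\phi^{G'}_{p,r}}$.
\end{itemize}
Equivalently, one can take the tower of \Cref{thm:hypergraphstate_is_tower} for $G$ and multiply each $1$-edge scalar by $r$.
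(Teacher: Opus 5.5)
Your proof is correct and follows the paper's own argument. Like the paper, you rerun the induction of \Cref{thm:hypergraphstate_is_tower} on $(\ket0+r\ket1)^{\otimes m}$, observe that the high cofactor is $r\cdot\prod_{e\in E^1}(C^{|e|-1}R_p)_{(e)}$ applied to the low one with every reduced hyperedge of size at most $k+1$, and note that $r$ times a group element is still a $\gen{C^{\leq k}R_p}$-LIM; your extra bookkeeping (the base case, and empty hyperedges contributing the scalar $C^{-1}R_p=\omega_p$) only makes explicit what the paper leaves implicit.
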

\begin{proof}
  Run the induction of \Cref{thm:hypergraphstate_is_tower} on
  $(\ket0+r\ket1)^{\otimes m}$ instead of $\ket+^{\otimes m}$. Splitting on $x_1$, every
  $x$ in the high branch has one more $1$ than its low counterpart, so in the notation of
  that proof the high cofactor is
  $r\cdot\bigl[\prod_{e\in E^1}(C^{|e|-1}R_p)_{(e)}\bigr]$ applied to the low one. The
  arity is at most $k$ exactly as there, and $r\cdot g$ with $g\in\gen{C^{\leq k}R_p}$ is
  a $\gen{C^{\leq k}R_p}$-LIM, so the two edges of the node for $x_1$ again point to the
  same child.
\end{proof}

\begin{lemma}[rigidity]\label{lem:rigidity}
  Let $\ket\psi,\ket{\psi'}$ be $m$-qubit states with $|\psi(y)|=c\,t^{|y|}$ and
  $|\psi'(y)|=c'\,t^{|y|}$ for some $t>0$, $t\neq1$ and $c,c'>0$. If
  $\ket\psi=\pi\ket{\psi'}$ for a $\gen{C^{\leq h}R_q,X}$-LIM $\pi$, then $\pi$ is
  diagonal.
\end{lemma}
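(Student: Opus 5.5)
The plan is to write every $\gen{C^{\leq h}R_q,X}$-LIM in a normal form and compare amplitude moduli. By \Cref{lem:commute}, the diagonal group $\gen{C^{\leq h}R_q}$ is normal in $\gen{C^{\leq h}R_q,X}$. So every element can be moved into the form $D\,X^{(a)}$, where $D$ is diagonal with unit-modulus diagonal entries (a product of controlled rotations) and $X^{(a)}=\prod_{i\in a}X_{(i)}$ for some set $a\subseteq[m]$. Concretely, I push all bit flips to the right using \eqref{eq:commute}: each commutation only produces further diagonal rotation factors, and $X^2=I$. A LIM is then $\pi=\lambda D X^{(a)}$ with $\lambda\neq0$. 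Since $D$ is diagonal, $\pi$ is diagonal exactly when $a=\emptyset$, so the goal is to show $a=\emptyset$.

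Next I evaluate $\ket\psi=\pi\ket{\psi'}$ entrywise in absolute value. We have $X^{(a)}\ket y=\ket{y\oplus a}$, and $D$ contributes only a unit-modulus factor on each entry. Hence $|\psi(y)|=|\lambda|\,|\psi'(y\oplus a)|$ for every $y$. Substituting the hypotheses gives
\[
  c\,t^{|y|}=|\lambda|\,c'\,t^{|y\oplus a|}\qquad\text{for all }y\in\{0,1\}^m .
\]
Equivalently, $t^{|y|-|y\oplus a|}=|\lambda|c'/c$ is a constant independent of $y$.

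Suppose $a\neq\emptyset$ and fix some $i\in a$. Take $y=0$: then $|y|-|y\oplus a|=-|a|$. Take $y=e_i$, the unit vector at $i$: then $|y|=1$ and $|y\oplus a|=|a|-1$, so the exponent is $2-|a|$. The two exponents differ by exactly $2$. Equating the constant for both choices gives $t^{2}=1$, which together with $t>0$ forces $t=1$. This contradicts $t\neq1$, so $a=\emptyset$ and $\pi=\lambda D$ is diagonal.

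The only step needing care is the first one: that every group element really has the form $D X^{(a)}$ with $D$ diagonal and unit-modulus. Unit modulus holds because all generators $C^{h'}R_q$ have diagonal entries that are roots of unity, and the extra factors produced by \eqref{eq:commute} are again products of such gates or of the scalar $\omega_q$. Beyond that, the argument is just a modulus comparison. Nonvanishing of the amplitudes is automatic, since $c,c'>0$ and $t>0$.
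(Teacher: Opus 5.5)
Your proposal is correct and follows the paper's proof essentially step for step. The paper also writes $\pi=\lambda\,\operatorname{diag}(\omega_q^{f(y)})\,X^{(a)}$ via \Cref{lem:commute}, reduces to $c\,t^{|y|}=|\lambda|\,c'\,t^{|y\oplus a|}$, and compares $y=0$ with $y=e_i$ for some flipped qubit $i$ to get $t^2=1$, which is the same contradiction you derive.
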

\begin{proof}
  Throughout this proof $S$ denotes a set of qubits and $a\in\{0,1\}^m$ a bit string;
  we write $X^{(a)}=\prod_{j=1}^{m}\left[X_{(j)}\right]^{a_j}$ for the flip of the
  qubits on which $a$ is $1$, so that $X^{(a)}\ket y=\ket{y\oplus a}$ with $y\oplus a$
  the string $y$ with the bits of $a$ flipped, and $|a|$ for the Hamming weight of $a$.
  By \Cref{lem:commute} the bit flips of $\pi$ can be moved to the right, where they
  collect into a single $X^{(a)}$ as each $X_{(j)}$ is an involution and flips on
  distinct qubits commute; the factors before them are diagonal and commute with one
  another, so
  \begin{equation}\label{eq:rigidity_form}
    \pi=\lambda\cdot\operatorname{diag}(\omega_q^{f(y)})\cdot X^{(a)},
    \qquad \lambda\in\co\setminus\{0\},\ a\in\{0,1\}^m ,
  \end{equation}
  where $\operatorname{diag}(\omega_q^{f(y)})$ carries $\omega_q^{f(y)}$ at $y$ and
  $f=\sum_{S\subseteq[m]}d_Sx_S$ is multilinear over $\z_q$, with
  $x_S=\prod_{i\in S}x_i$ as in \eqref{eq:pi_expression}, and $x_\emptyset=1$: the
  factor $\left[(C^{|S|-1}R_q)_{(S)}\right]^{d_S}$ multiplies $\ket y$ by
  $\omega_q^{d_Sx_S(y)}$, so all of them together multiply it by $\omega_q^{f(y)}$.
  Each generator $C^{h'}R_q$ with $h'\leq h$ has arity $h'+1$, and \eqref{eq:commute}
  turns a flip past a rotation of arity $\ell$ into rotations of arity $\ell$ and
  $\ell-1$, so moving the flips rightwards never raises an arity: $d_S=0$ once
  $|S|>h+1$. Applying \eqref{eq:rigidity_form} to $\ket{\psi'}$ and reading off the
  amplitude at $y$ gives $\psi(y)=\lambda\,\omega_q^{f(y)}\,\psi'(y\oplus a)$, as
  $\oplus\,a$ is its own inverse. The diagonal entries have modulus $1$ --- the only
  property of them used below --- so in moduli
  $c\,t^{|y|}=|\lambda|\,c'\,t^{|y\oplus a|}$ for every $y$.

  Suppose $a\neq0$, say $a_i=1$. At $y=0$ we have $y\oplus a=a$, so this reads
  $c=|\lambda|\,c'\,t^{|a|}$. At the $y$ that is $1$ in position $i$ and $0$ elsewhere,
  $y\oplus a$ is $a$ with its $i$-th bit cleared, so $|y|=1$ and $|y\oplus a|=|a|-1$,
  and it reads $c\,t=|\lambda|\,c'\,t^{|a|-1}$. Multiplying the second equation by $t$
  and substituting the first gives $c\,t^{2}=|\lambda|\,c'\,t^{|a|}=c$, so $t^{2}=1$ as
  $c>0$, and $t=1$ as $t>0$, contradicting $t\neq1$. Hence $a=0$, so $X^{(a)}$ is the
  identity and $\pi=\lambda\cdot\operatorname{diag}(\omega_q^{f(y)})$ is diagonal.
\end{proof}

\begin{lemma}[exponent form]\label{lem:exponentform}
  Write $x_S=\prod_{i\in S}x_i$, let $g=\sum_Sg_Sx_S$ and $g'=\sum_Sg'_Sx_S$ be
  multilinear over $\z_p$, let $r\neq0$ with $|r|\neq1$, and put
  $\psi(y)=c\,r^{|y|}\omega_p^{g(y)}$, $\psi'(y)=c'r^{|y|}\omega_p^{g'(y)}$ with
  $c,c'\neq0$. If $\ket\psi=\pi\ket{\psi'}$ for a $\gen{C^{\leq h}R_q,X}$-LIM $\pi$ with
  exponents $(d_S)_{|S|\leq h+1}$ over $\z_q$, then for every $S\neq\emptyset$
  \begin{enumerate}
    \item $q\,(g_S-g'_S)=0$ in $\z_p$, and
    \item if $q=p$ then $g_S-g'_S=d_S$; in particular $g_S=g'_S$ when $|S|>h+1$.
  \end{enumerate}
  \leanref{Paper.lean\#L815}{succ\_exponent\_root}\;
  \leanref{Paper.lean\#L809}{succ\_exponent\_arity}\;
  \leanref{Paper.lean\#L793}{succ\_zeta\_const}
\end{lemma}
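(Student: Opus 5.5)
The plan is to reduce to a purely diagonal comparison with \Cref{lem:rigidity}, and then read off monomial coefficients by inclusion--exclusion over subsets. First, $|\psi(y)|=|c|\,|r|^{|y|}$ and $|\psi'(y)|=|c'|\,|r|^{|y|}$, with $t=|r|>0$ and $t\neq1$. So \Cref{lem:rigidity} applies and $\pi$ is diagonal. By the normal form \eqref{eq:rigidity_form} with $a=0$, we can write $\pi=\lambda\cdot\operatorname{diag}(\omega_q^{f(y)})$ with $f=\sum_{|S|\leq h+1}d_Sx_S$ over $\z_q$; in particular $d_S=0$ for $|S|>h+1$. Comparing amplitudes at $y$ and cancelling $r^{|y|}\neq0$ gives
\begin{equation*}
  c\,\omega_p^{g(y)}=\lambda\,c'\,\omega_q^{f(y)}\,\omega_p^{g'(y)}\qquad\text{for all }y .
\end{equation*}

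Next I would remove the scalar. Setting $y=0$ fixes $\lambda c'/c=\omega_p^{g_\emptyset-g'_\emptyset}\omega_q^{-d_\emptyset}$. Write $\Delta=g-g'$, $h_0(y)=\Delta(y)-\Delta_\emptyset\in\z_p$ and $\varphi(y)=f(y)-d_\emptyset\in\z_q$. The identity then becomes $\omega_p^{h_0(y)}=\omega_q^{\varphi(y)}$, that is $h_0(y)/p\equiv\varphi(y)/q\pmod 1$. Multiplying by $pq$, this is equivalent to
\begin{equation*}
  q\,h_0(y)\equiv p\,\varphi(y)\pmod{pq}\qquad\text{for all }y .
\end{equation*}
Both sides are well defined mod $pq$, since $h_0$ is defined mod $p$ and $\varphi$ mod $q$.

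The key step is coefficient extraction. For a multilinear polynomial, the coefficient of $x_S$ equals $\sum_{T\subseteq S}(-1)^{|S|-|T|}$ times the value at the indicator vector $1_T$; this is the same triangular induction used in the proof of \Cref{theorem:prime-hierarchy}. This is an integer linear combination of values, so it is compatible with congruences mod $pq$. Applying it to both sides yields, for every $S\neq\emptyset$, the congruence $q\,\Delta_S\equiv p\,d_S\pmod{pq}$. The constant terms do not enter, because they were subtracted and the alternating sum kills constants for $S\neq\emptyset$.

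Both claims then follow. Reducing mod $p$ gives $q(g_S-g'_S)=0$ in $\z_p$, which is item~1. If $q=p$, the congruence reads $p\Delta_S\equiv p\,d_S\pmod{p^2}$, hence $g_S-g'_S=d_S$ in $\z_p$. Combined with $d_S=0$ for $|S|>h+1$, this gives item~2.

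The only delicate point I expect is bookkeeping: the exponents live in different rings $\z_p$ and $\z_q$, which is why I pass through integer representatives mod $pq$ before inverting. One should also check that the inclusion--exclusion step stays valid for congruences rather than equalities, which it does because only integer combinations are used.
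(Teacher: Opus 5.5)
Your proof is correct and follows the same route as the paper: \Cref{lem:rigidity} makes $\pi$ diagonal, and the coefficients are then extracted by the triangular (M\"obius) induction over subsets. The only difference is how the two moduli are reconciled. The paper raises the pointwise identity to the $q$-th power to kill $\omega_q$ for item~1, and merges the two phases when $q=p$ for item~2. You instead lift to the single congruence $q(g_S-g'_S)\equiv p\,d_S\pmod{pq}$, which gives both items at once, at the cost of the integer-representative bookkeeping you already flag.
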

\begin{proof}
  Write $m$ for the number of qubits, so that $y$ ranges over $\{0,1\}^m$ and $S$ over
  the subsets of $[m]$. Since $|\omega_p^{g(y)}|=1$, the moduli are
  $|\psi(y)|=|c|\,t^{|y|}$ and $|\psi'(y)|=|c'|\,t^{|y|}$ with $t=|r|$, which is
  positive because $r\neq0$ and differs from $1$ by hypothesis. So \Cref{lem:rigidity}
  applies: $\pi$ is diagonal, that is, $a=0$ in \eqref{eq:rigidity_form}, leaving
  $\pi=\lambda\cdot\operatorname{diag}(\omega_q^{f(y)})$, with $f=\sum_Sd_Sx_S$ over
  $\z_q$ the exponents of the statement and $d_S=0$ once $|S|>h+1$.
  Comparing amplitudes in $\ket\psi=\pi\ket{\psi'}$, cancelling the common factor
  $r^{|y|}\neq0$ and dividing by $c$ and by $\omega_p^{g'(y)}$ gives
  \begin{equation}\label{eq:exponent_pointwise}
    \omega_p^{g(y)-g'(y)}=\Lambda\cdot\omega_q^{f(y)}\quad\text{for all }y,
    \qquad\text{where }\Lambda=\lambda c'/c .
  \end{equation}

  We now claim that if $\delta=\sum_S\delta_Sx_S$ is multilinear over
  $\z_p$ and $\omega_p^{\delta(y)}$ does not depend on $y$, then $\delta_S=0$ for every
  $S\neq\emptyset$. To see it, let $\mu_Z$ for $Z\subseteq[m]$ be the assignment that is
  $1$ exactly on $Z$, as in the proof of \Cref{theorem:prime-hierarchy}. Every $y$ is
  such a $\mu_Z$, and $x_S(\mu_Z)=1$ if $S\subseteq Z$ and $0$ otherwise, the empty
  product being $x_\emptyset=1$, so $\delta(\mu_Z)=\sum_{S\subseteq Z}\delta_S$. As
  $\omega_p$ is a primitive $p$-th root of unity, $\omega_p^u=\omega_p^v$ forces $u=v$
  in $\z_p$, so the hypothesis reads
  $\delta(\mu_Z)=\delta(\mu_\emptyset)=\delta_\emptyset$ for every $Z$, that is,
  $\sum_{\emptyset\neq S\subseteq Z}\delta_S=0$. Now induct on $|Z|$ as in that proof:
  for $Z\neq\emptyset$ every nonempty $S\subsetneq Z$ is smaller and contributes $0$, so
  the sum collapses to $\delta_Z$. Only $\delta_\emptyset$ is left free --- it is the
  common value itself --- which is why both conclusions exclude $S=\emptyset$.

  Raise \eqref{eq:exponent_pointwise} to the $q$-th power. This kills the right-hand
  phase, as $q\,f(y)=0$ in $\z_q$, and leaves $\omega_p^{q(g(y)-g'(y))}=\Lambda^q$ for
  all $y$. The polynomial $q\,(g-g')$ is multilinear over $\z_p$ with coefficients
  $q\,(g_S-g'_S)$, so the fact gives conclusion (1). For conclusion (2) we have
  $\omega_q=\omega_p$ and $d_S\in\z_p$, so \eqref{eq:exponent_pointwise} reads
  $\omega_p^{g(y)-g'(y)-f(y)}=\Lambda$, and the fact applies to $g-g'-f$, whose
  coefficients are $g_S-g'_S-d_S$. Hence $g_S-g'_S=d_S$ for every $S\neq\emptyset$, and
  in particular $g_S=g'_S$ when $|S|>h+1$, where $d_S=0$.
\end{proof}

\thmxcolumn*
\begin{proof}
  Fix a variable order and take $d$, $X_d$, $T'$, $M$ and $U'$ from the proof of
  \Cref{theorem:prime-hierarchy}, so $|U'|\geq\lfloor n/12\rfloor$. For $A'\subseteq U'$
  let $\psi_{A'}$ be the cofactor of $\ket{\psi^n_{k+1,p,r}}$ at the assignment to $X_d$
  that sets one chosen subnode of each supernode of $A'$ to $1$ and everything else to
  $0$. Its amplitude moduli are $|r|^{|A'|}|r|^{|y|}$, so \Cref{lem:rigidity} applies to
  every pair and \Cref{lem:exponentform} governs the comparison.

  Let $A'\neq B'$, pick $c'\in A'\setminus B'$, let $j'$ be its partner under $M$ and put
  $J=\zeta^{-1}(j')$, so $|J|=k+1$ and $J\cap X_d=\emptyset$. The coefficient of $x_J$ in
  the phase polynomial of $\psi_{A'}$ counts the hyperedges $\{v\}\cup J$ with $v$ set to
  $1$, that is, the supernodes of $A'$ adjacent to $j'$; as $M$ is \emph{induced}, only
  $c'$ qualifies. So the coefficient is $1$ for $A'$ and $0$ for $B'$. If $h<k$ and
  $q=p\geq2$ then $|J|>h+1$ and \Cref{lem:exponentform}(2) forces the two to be equal,
  contradicting $1\neq0$ in $\z_p$; if $q$ is not a multiple of $p$ then
  \Cref{lem:exponentform}(1) forces $q\cdot1=0$ in $\z_p$, i.e. $p\mid q$. Either way the
  $2^{|U'|}\geq2^{\lfloor n/12\rfloor}$ cofactors are pairwise
  $\gen{C^{\leq h}R_q,X}$-inequivalent, hence occupy that many nodes at level $d$.
\end{proof}

\begin{proof}[Proof of \Cref{cor:Xhierarchy}]
  In each case the easy half $L_2\leq_sL_1$ is a size-preserving group inclusion:
  $\gen{C^{\leq h}R_p,X}\subseteq\gen{C^{\leq k}R_q,X}$ whenever $h\leq k$ and $p\mid q$,
  since $C^jR_p=(C^jR_q)^{q/p}$. Both hypotheses are needed:
  $\operatorname{diag}(1,\omega_p)$ is a $\gen{R_q,X}$-LIM only if $p\mid q$, and
  $\gen{C^{\leq h}R_q}$ realizes degree at most $h+1$ and no more. For the hard half,
  \Cref{lem:biasedtower} supplies the tower and \Cref{thm:Xcolumn} the lower bound:
  (1) is its first branch with $p=q$; (2) and (4) its second, which needs $q\nmid p$ in
  the direction shown --- immediate from $p\mid q$, $p<q$ in (2), and hypothesised in
  both directions in (4). For (3) separate with $\ket{\psi^n_{k+1,s,r}}$ for any $s$ of
  which $q$ is not a multiple, say $s=2q$; its tower lies in $C^{\leq k}\Rot$. Finally
  $p=1$ in (2) reads $\gen{C^{\leq k}R_1,X}=\gen X$, and $q\nmid1$ for $q\geq2$.
\end{proof}

Two remarks on the proof itself. First, it uses only two properties of $U'$: that each of
its supernodes has a partner all of whose subnodes lie outside $X_d$, and that no two of
them have a common neighbor. The second is weaker than being an induced matching ---
pairwise grid distance at least $3$ suffices --- and the formalisation proves that weaker
statement from scratch,
so it does not depend
on \Cref{lem:matching}. Second, the step from "pairwise inequivalent cofactors" to "that
many nodes" is not free once $X$ is in the group, because the node a prefix reaches is not
the one the assignment names: the bit flips accumulated along the path permute which child
one descends into. The formalisation carries that iterate explicitly
and derives the bound for an arbitrary
diagram, in its number of \emph{stored} nodes.

\begin{remark}\label{rem:Xhonest}
  The witnesses are biased hypergraph states and not hypergraph states, which is needed to make the separation proof work: $\ket{\phi^G_{p,r}}$ has amplitudes of $m+1$ different
  moduli, and the exact correspondence of \Cref{thm:tower_is_hypergraphstate} needs every
  edge label to have scalar component $1$, which $r\neq1$ violates. Nor is
  \Cref{thm:Xcolumn} really a statement about the bit flip: the bias is chosen precisely
  so that $X$ is provably useless (\Cref{lem:rigidity}), so what the proof shows is that
  the bit-flip column has the same \emph{shape} as the diagonal one, on a family where
  the extra freedom has been switched off. The open question is the unimodular one ---
  whether the separations survive for families all of whose amplitudes have equal
  modulus, hypergraph states in particular. For $p=2,k=0$ they do not, by the stabilizer
  argument of \Cref{sec:results}; for $k\geq1$ or $p\geq3$ we do not know.
\end{remark}

\section{Tractability}\label{sec:tractability}

This section proves the entries of \Cref{tab:tractability}.

\subsection{Queries}\label{sec:queries}

\emph{Measure} and \emph{Sample} both reduce to two recursions down the diagram.
The norm satisfies
$\bra v\ket v=|\lambda_0|^2\bra{v_0}\ket{v_0}+|\lambda_1|^2\bra{v_1}\ket{v_1}$,
with $\lambda_b$ the scalar part of the $b$-edge label, so one pass computes it for
all nodes. The amplitude $\bra x\ket\phi$ is then read off a single walk down the
diagram, accumulating the LIM seen so far. The one point that needs care is which
child the walk enters. If the accumulated LIM at level $j$ is $g$, let
$\tilde x_j\in\{0,1\}$ be the exponent of $X_{(j)}$ in $g$, that is, whether $g$
flips qubit $j$; it is not the queried bit $x_j$. By \Cref{lem:split} the walk
enters the child $e_{x_j\oplus\tilde x_j}$. Sampling follows the same path with a
coin of bias $\bra{e_{\tilde x_j}}\ket{e_{\tilde x_j}}/\bra v\ket v$ at level $j$.
\Cref{app:tract} gives the details.

\emph{Equality} is polytime whenever
a canonical form can be obtained in polytime,
so \Cref{thm:polytime} settles it for $\gen{C^{\leq k}R_q}$- and
$\gen{C^{\leq k}R_q,X}$-LIMDD, for every $q$; the remaining rows are
from~\cite{limdd,hong2025limtdd}.

\emph{InnerProd} and \emph{Fidelity} inherit the hardness proof
of~\cite[Lemma 31]{vinkhuijzen2024a}. That proof needs only that Dicke
states and graph states have polynomial size, which holds already for
$\gen Z$-LIMDD.\footnote{Dicke states are covered by the proof
of~\cite[Lemma 31]{vinkhuijzen2024a} itself, graph states by~\cite{limdd}.}
Both $\gen{C^{\leq k}R_q}$ and $\gen{C^{\leq k}R_q,X}$ contain $Z=R_2$ when
$2\mid q$, for every $k$, which gives the entries marked $\circ^*$. For odd $q$ the
group contains no $Z$ and the complexity is open.

\subsection{Gates}\label{sec:gates-tract}

A gate $U\in\gen G$ is applied to a $\gen G$-LIMDD by multiplying the root edge
label by $U$. This takes constant time and changes no node, and it already covers
a lot of \Yes\ entries in \Cref{tab:tractability}. Concretely, this is the case for gate
$X$, which can be applied to $\gen{X,Z},\gen{R_q,X},\gen{C^{\leq k}R_q,X}$-LIMDD in this way, and for gate $Z$
which can be applied to all groups when $q$ a multiple of $2$, and for gate $T$, 
which can be applied to all groups when $q$ a multiple of $4$. 
For all other cases, we use a method to recursively apply the gate in the $\gen G$-LIMDD. With this method, we modify the
labels on the edges of the LIMDD and increase the number of nodes by a factor that we will see remain constant.

\begin{lemma}[recursive application]\label{lem:recursive}
  Let $U$ be a gate and suppose there are sets $\mathcal V_n,\dots,\mathcal V_1$ of
  gates with $U\in\mathcal V_n$ such that for every node $v$ of index $b$ and every
  $U'\in\mathcal V_b$,
  \begin{equation}\label{eq:recursive}
    U'\ket v=\ket0\otimes L_0'\left(U^{(0)}\ket{v_{j_0}}\right)
            +\ket1\otimes L_1'\left(U^{(1)}\ket{v_{j_1}}\right)
  \end{equation}
  for some $G$-LIMs $L_0',L_1'$, some $U^{(0)},U^{(1)}\in\mathcal V_{b-1}$ and some
  $j_0,j_1\in\{0,1\}$. Then $U$ is applied to a $\gen G$-LIMDD in time linear in the
  size of the diagram, in $\max_b|\mathcal V_b|$ and in the cost of one label
  update, and the diagram grows by at most a factor $\max_b|\mathcal V_b|$.
\end{lemma}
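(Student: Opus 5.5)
The proof is a memoized recursive descent. I would define a procedure $\mathsf{apply}(U',e)$ that takes a gate $U'\in\mathcal V_b$ and an edge $e$ into a node $v$ of index $b$. It returns an edge representing $U'\ket e$. Write $e=\ell\cdot v$ with $\ell$ a $G$-LIM.

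The first step is to move $U'$ past the label $\ell$. The hypothesis only speaks about $U'\ket v$, so I would handle $\ell$ in one of two ways. One option is to assume the sets $\mathcal V_b$ are closed under conjugation by LIMs, so that $U'\ell=\ell'U''$ with $U''\in\mathcal V_b$. The other is to read \eqref{eq:recursive} as applied to the edge state, absorbing $\ell$ into the outgoing labels through the layer split (\Cref{lem:split}). The second matches how the paper uses the lemma. I would then apply the hypothesis to get $L_0',L_1'$, gates $U^{(0)},U^{(1)}\in\mathcal V_{b-1}$ and indices $j_0,j_1$. The next step is to recurse on $\mathsf{apply}(U^{(c)},e_{j_c})$, where $e_{j_c}$ is the corresponding child edge of $v$. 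That returns an edge $\lambda_c\cdot w_c$. Finally, I would build the new node with children $w_0,w_1$ and labels $L_c'\lambda_c$, normalized by $\mathsf{makeEdge}$, i.e.\ R1--R5 at one node.

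Correctness is an induction on $b$. The base case is the leaf, where $U'$ is a scalar action. The inductive step is exactly \eqref{eq:recursive}: the state of the new node is $\ket0\otimes L_0'U^{(0)}\ket{v_{j_0}}+\ket1\otimes L_1'U^{(1)}\ket{v_{j_1}}=U'\ket v$.

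For the complexity bound, I would memoize on the pair $(U',v)$ with the label stripped off, so that distinct incoming labels do not create distinct calls. Each call then spawns two subcalls and does a constant number of label updates, i.e.\ LIM products and one normalization. The number of distinct calls at level $b$ is at most $|\mathcal V_b|\cdot(\text{width at }b)$. Summing over levels gives time linear in the size of the diagram times $\max_b|\mathcal V_b|$, times the cost of one label update. Each distinct call creates at most one new node, which gives the growth factor.

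The main obstacle is making memoization on $(U',v)$ sound, since the incoming label can vary between calls. Concretely, I must check that the data produced by \eqref{eq:recursive} for $v$ itself suffices, and that the effect of the incoming label can be pushed to the output edge without leaving $\mathcal V_b$. My first attempt would be the normal-subgroup trick of \Cref{lem:commute}: write $U'\ell=\ell'U''$ and include such conjugates in $\mathcal V_b$ from the start. The paper's phrasing suggests it intends the sets to be chosen so that this holds.
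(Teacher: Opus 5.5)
Your overall plan is the paper's proof: memoised recursion on pairs $(U',v)$, at most $|\mathcal V_b|$ calls per node of index $b$, and two subcalls plus one $\mathsf{makeEdge}$ per call. But you run the recursion through \emph{edges}, and that causes both an error and your ``main obstacle.'' Hypothesis \eqref{eq:recursive} is about the \emph{node} state $\ket v=\ket0\otimes L_0\ket{v_0}+\ket1\otimes L_1\ket{v_1}$. The outgoing labels $L_0,L_1$ are already absorbed into $L_0',L_1'$; in the instances of \Cref{lem:transfer}, $L_i'$ is computed from $L_i$. The gate $U^{(c)}$ acts on the child \emph{node} $\ket{v_{j_c}}$.

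If you recurse on the child edge $e_{j_c}$ and then multiply the result by $L_c'$, the label of $e_{j_c}$ is applied twice. The node you build is then not the one in your correctness identity. The correct recursion is on nodes:
$\mathsf{apply}(U',v)=\mathsf{makeEdge}\bigl(L_0'\cdot\mathsf{apply}(U^{(0)},v_{j_0}),\,L_1'\cdot\mathsf{apply}(U^{(1)},v_{j_1})\bigr)$.
With this, no call below the root carries an incoming label. Memoisation on $(U',v)$ is sound by the hypothesis alone, and you need no assumption relating $\mathcal V_b$ to LIMs.

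That matters, because such an assumption is not in the lemma. In its literal form, closure of $\mathcal V_b$ under conjugation by LIMs, it would in general make $\mathcal V_b$ large. Already the conjugates $\ell^{-1}X_{(a)}\ell$ by diagonal LIMs $\ell$ are numerous, so the assumption would undo the bound you are proving.

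The one place a label really must be moved past the gate is the root edge. The diagram denotes $\ell_r\ket r$, and \eqref{eq:recursive} covers $U\ket r$, not $U\ell_r\ket r$. You need a single rewrite $U\ell_r=\ell_r'U''$ with $\ell_r'$ a LIM and $U''\in\mathcal V_n$, done once before the first call; it plays no role in the memoisation. When $U$ and $\gen G$ are both diagonal this is immediate, since they commute. For $U=X_{(a)}$ it follows from \eqref{eq:commute}. However, the lemma as stated does not supply this step, and the paper's one-line proof leaves it implicit. So your worry points at a real subtlety, only at the wrong level.
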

\begin{proof}
  Memoise on the pair (node, element of $\mathcal V_b$). By \eqref{eq:recursive}
  each such pair produces two child pairs, so a node of index $b$ gets at most
  $|\mathcal V_b|$ copies, and every copy is computed once.
\end{proof}

The condition \eqref{eq:recursive} holds for every gate $X,Z,T,R_q,C^k_mR_q$ applied to any 
$\gen G$-LIMDD considered in \Cref{tab:tractability}, though we only need it in the cases
where the gate can not simply be applied to the root edge. Crucially, for these non-trivial cases,
$\max_b|\mathcal V_b|$ is bounded by $3$, so the recursive application of the gate in the LIMDD
increases its number of nodes by a constant factor only.

\begin{lemma}[gate transfer]\label{lem:transfer}
  Let $\gen G$ be any of the five classes of \Cref{tab:tractability}. Then
  \eqref{eq:recursive} holds for $U=X_{(a)}$ with $\mathcal V_b=\{X_{(a)}\}$ for
  $b\geq a$ and $\mathcal V_b=\{I\}$ below, and for
  $U=\left(C^{k'}R_m\right)^d$ placed on qubits $a_1<\dots<a_{k'+1}$, with $k'\geq0$
  and $m,d$ arbitrary, with
  \begin{equation*}
    \mathcal V_b=\begin{cases}
      \bigl\{(C^{k'}R_m)^{\pm d}_{(a_1,\dots,a_{k'+1})}\bigr\}
        & b\geq a_{k'+1},\\
      \bigl\{(C^{k''}R_m)^{\pm d}_{(a_1,\dots,a_{k''+1})}\bigr\}\cup\{I\}
        & a_{k''}\leq b<a_{k''+1},\ k''\leq k',\\
      \{I\} & b<a_1 .
    \end{cases}
  \end{equation*}
  The exponent $-d$ occurs only when $X\in\gen G$, so
  $\max_b|\mathcal V_b|\leq3$, and $\leq2$ for a diagonal $\gen G$. Hence every
  \Yes\ entry of the gates in \Cref{tab:tractability} is applicable in polytime, with the diagram
  growing by at most a factor $3$. The case $k'=0$ is $(R_m)^d$, which covers $Z$
  and $T$ for $m=2$ and $m=4$ respectively.
\end{lemma}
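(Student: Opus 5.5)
The plan is to verify \eqref{eq:recursive} directly for each node $v$ of index $b$ with $0$-edge $(v_0,L_0)$ and $1$-edge $(v_1,L_1)$, so that $\ket v=\ket0\otimes L_0\ket{v_0}+\ket1\otimes L_1\ket{v_1}$. I will then read off the sets $\mathcal V_b$ and finish with \Cref{lem:recursive}. The cases are organised by where the top qubit $b$ sits relative to the support of $U$. If $b$ is outside the support and above it, then $U'=\ket0\bra0\otimes W+\ket1\bra1\otimes W$ with $W$ the restriction of $U'$ to the lower qubits. This gives $U'\ket v=\ket0\otimes WL_0\ket{v_0}+\ket1\otimes WL_1\ket{v_1}$. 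The point is to commute $W$ past $L_c$, writing $WL_c=L_c'W'$ with $L_c'$ a $G$-LIM and $W'\in\mathcal V_{b-1}$. If $b<a_1$ the gate is the identity and there is nothing to do.

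For $U=X_{(a)}$ and $b>a$, conjugating $X_{(a)}$ past a $G$-LIM is exactly \Cref{lem:commute}. If $X\in\gen G$ nothing is needed, since $X_{(a)}L$ is already a LIM. In the diagonal case, \eqref{eq:commute} gives $X_{(a)}L=L'X_{(a)}$ with $L'$ diagonal of the same arity bound, so $L'$ is still a $G$-LIM. At $b=a$ the flip swaps the two children, so $j_0=1$ and $j_1=0$ and the labels are exchanged. Below $a$ the residual gate is $I$.

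For $U=(C^{k'}R_m)^d$ on $a_1<\dots<a_{k'+1}$, a diagonal gate commutes with every diagonal label. So for diagonal $\gen G$ the relation $WL_c=L_cW$ holds with $W'=W$, and $\mathcal V_{b-1}$ contains only the $+d$ powers. When $X\in\gen G$, write each label as (diagonal)$\cdot X^{(s)}$. Pushing $W$ through $X^{(s)}$ conjugates $(C^{k'}R_m)^d_{(a)}$ by flips on some of its qubits. Iterating \eqref{eq:commute} with $q$ replaced by $m$, this yields $(C^{k'}R_m)^{\pm d}$ times lower-arity $R_m$-rotations and a scalar. Here lies the main obstacle: those lower-arity $R_m$-rotations need not lie in $\gen G$, since $m$ is arbitrary. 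I would handle this by leaving them inside the pushed-down gate rather than in $L_c'$. More precisely, I would show that the conjugate of $(C^{k'}R_m)^d$ by $X^{(s)}$ equals $\pm$-powered controlled rotations on subsets of $a$, where the sign is determined by the parity of flipped controls. I would then argue that, after fixing the top qubit's value, this product collapses to a single $(C^{k''}R_m)^{\pm d}$ on the remaining support, or to $I$. If this collapse fails in general, I would enlarge $\mathcal V_b$ to conjugates by flips and accept a bound polynomial in $2^{k'}$ instead of $3$.

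When $b=a_{j}$ is in the support, the $b$-th qubit acts as a control (or the target, for $j=k'+1$, which is symmetric). On the $\ket0$ branch the gate becomes $I$ (or a flipped-sign lower gate if a flip is pending). On the $\ket1$ branch it becomes $(C^{k'-1}R_m)^{\pm d}$ on $a\setminus\{b\}$. Because qubits are peeled from the top down, the gate remaining at level $b$ is a $(C^{k''}R_m)^{\pm d}$ on exactly the qubits $a_1,\dots,a_{k''+1}$ below $b$, or $I$. This matches the displayed $\mathcal V_b$, which has at most three elements, namely $\pm d$ and $I$, and $-d$ is needed only with $X$. Applying \Cref{lem:recursive} then gives linear time and growth by at most a factor $3$, or $2$ in the diagonal case. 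Together with root-edge multiplication for gates lying in $\gen G$, this settles every \Yes\ gate entry of \Cref{tab:tractability}.
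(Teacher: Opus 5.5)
Your treatment of $X_{(a)}$ and of the diagonal groups is fine and matches the paper, and you have located the real difficulty. But the step you hope will remove it is false, so your final paragraph (the displayed $\mathcal V_b$ and the factor $3$) does not follow.

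Conjugating by the flips $X^{(s)}$ of a label turns the phase polynomial $d\,y_a$ of $(C^{k'}R_m)^d_{(a)}$ into $d\prod_{i\in a}(y_i\oplus s_i)$. This contains the monomials $y_{T\cup(a\setminus s)}$ for all $T\subseteq a\cap s$, with coefficients $\pm d$. Nothing ``collapses'' when the top qubit is fixed. Above the support, no qubit of $a$ is fixed at all. At a support level $b=a_j$, fixing $y_b=t$ only multiplies the phase by $t\oplus s_b\in\{0,1\}$. That gives $I$ or the restriction of the \emph{same} conjugated gate, lower-degree monomials included. Those monomials can be moved into $L_c'$ only when they are $G$-LIMs.

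Otherwise the displayed $\mathcal V_b$ genuinely fails. Take $\gen G=\gen{X,Z}$ and $U=(CR_4)_{(1,2)}$. Take a node of index $3$ with $v_0=v_1=w$, $L_0=I$, $L_1=X_{(1)}$, where $\ket{w}$ has nonzero amplitudes of pairwise distinct moduli. A scalar times a unimodular diagonal times $X^{(s)}$ that fixes such a $\ket{w}$ must be the identity. So \eqref{eq:recursive} would require $(CR_4)_{(1,2)}X_{(1)}=L_1'(CR_4)^{\pm1}_{(1,2)}$ exactly. By \eqref{eq:commute}, $(CR_4)_{(1,2)}X_{(1)}=X_{(1)}(R_4)_{(2)}(CR_4)^{-1}_{(1,2)}$. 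Hence $L_1'$ would have to be $X_{(1)}(R_4)_{(2)}$ or $X_{(1)}(R_4)_{(2)}(CR_2)_{(1,2)}$, and neither is a Pauli LIM.

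So do not try to force the collapse; your fallback is the correct repair. Put $a'=a\cap[1,b]$. The set $\{I\}\cup\{X^{(s)}(C^{|a'|-1}R_m)^d_{(a')}X^{(s)}\mid s\subseteq a'\}$ is closed under the recursion. Passing a label only XORs $s$ with the label's flips on $a'$, and a support level yields $I$ or removes $b$ from $a'$. So \Cref{lem:recursive} gives growth by at most a factor $2^{k'+1}+1$. That is constant for fixed $k'$ and still settles the gate entries of \Cref{tab:tractability}.

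The factor $3$ and the displayed $\mathcal V_b$ are valid only where the lower-degree monomials are $G$-LIMs:
\begin{itemize}
\item always for $k'=0$, where they are scalars (this covers $Z$, $T$, $R_m$);
\item trivially for diagonal $\gen G$;
\item for $X\in\gen G$, when $m\mid qd$ and $k'\le k+1$.
\end{itemize}
The paper's own proof makes exactly the omission you worried about. It reads \eqref{eq:commute} as returning ``the same gate with $d$ replaced by $-d$'', silently dropping the factor $(C^{\ell-2}R_m)^d_{(a\setminus\{a_j\})}$.

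Separately, the displayed middle case is off by one. For $a_{k''}\le b<a_{k''+1}$ the residual acts on $a_1,\dots,a_{k''}$, so it is $(C^{k''-1}R_m)^{\pm d}$. That is what your own top-down peeling produces, rather than something that matches the display as written.
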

\begin{proof}[Proof idea]
  Two facts drive every case. Below the lowest control, at $b<a_1$, the gate acts as
  the identity. At a control level $b=a_{k''+1}$ the gate loses one control and
  descends only into the $1$-branch, which is the case $\ell\mapsto\ell-1$ of
  \eqref{eq:commute}. Off a control level the gate must be pushed past the edge
  label: for a diagonal $\gen G$ the two commute outright and $\mathcal V_b$ needs
  only two elements, while for a $\gen G$ containing $X$ the label may flip a
  control, and \eqref{eq:commute} then returns the same gate with $d$ replaced by
  $-d$. That inversion is the only extra element, so three suffice. The six
  instantiations, one per row and gate family, are in \Cref{app:tract}.
\end{proof}

The three \No\ columns all follow from one lower bound. For an integer $q$ put
\begin{equation*}
  \ket{rot^n}=\bigotimes_{j=1}^{n}\left(\ket0+e^{\pi i2^{-j}/q}\ket1\right),
  \qquad
  \ket{sum^n}=\ket0\otimes\ket+^{\otimes n}\otimes\ket0
             +\ket1\otimes\ket{rot^n}\otimes\ket0 .
\end{equation*}

\begin{theorem}\label{thm:swap}
  Fix the variable order. For every $k$ and $q$, the state $\ket{sum^n}$ has a
  linear-size $\gen I$-LIMDD, while
  $\textit{SWAP}_{1,n+2}\ket{sum^n}$ needs a $\gen{C^{\leq k}R_q,X}$-LIMDD with at
  least $2^{n-1}$ nodes. Consequently $\textit{SWAP}$, $CX$ and $H$ are not applicable in
  polytime to any class of \Cref{tab:tractability}.
\end{theorem}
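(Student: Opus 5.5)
The plan is to split the statement into an easy upper bound, a counting argument one level above the leaf, and a reduction step for the consequences. Throughout I keep the variable order $(1,2,\dots,n+2)$ fixed, so qubit $1$ is the top variable and qubit $n+2$ sits directly above the leaf.

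\emph{Upper bound.} The state $\ket{sum^n}$ has the form $\ket0\otimes\ket{\alpha}+\ket1\otimes\ket{\beta}$, where $\ket\alpha=\ket+^{\otimes n}\otimes\ket0$ and $\ket\beta=\ket{rot^n}\otimes\ket0$ are product states. A product state $\bigotimes_j(a_j\ket0+b_j\ket1)$ has an $\gen I$-LIMDD that is a chain of $n+1$ nodes. At each node, either both edges point to the same child and carry scalar labels $a_j,b_j$, or (for the last qubit, where $b_j=0$) one edge carries the label $0$ as in R1/R3. The diagram for $\ket{sum^n}$ is therefore a root above two such chains, of size $O(n)$. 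Since scalars are LIMs of every group in \Cref{tab:tractability}, the same diagram is also a diagram of each of the five classes.

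\emph{Lower bound.} After the swap, the selector moves to the bottom:
\begin{equation*}
\textit{SWAP}_{1,n+2}\ket{sum^n}=\ket0\otimes\bigl(\ket+^{\otimes n}\otimes\ket0+\ket{rot^n}\otimes\ket1\bigr).
\end{equation*}
Fix the first $n+1$ bits to $0x$ with $x\in\{0,1\}^n$. Up to a common positive scalar, the resulting one-qubit residual is $\ket0+e^{i\theta(x)}\ket1$, where $\theta(x)=\frac{\pi}{q}\sum_j x_j2^{-j}$. The values $\theta(x)$ are $2^n$ distinct numbers in $[0,\pi/q)$. Every residual at level $1$ equals a LIM applied to some node of index $1$, so it suffices to show that these $2^n$ residuals are pairwise $\gen{C^{\leq k}R_q,X}$-inequivalent.

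On a single qubit, the group reduces to scalars times $R_q^d$ and $R_q^dX$, since controlled gates need more qubits than are available. By \eqref{eq:R_q--X__commutation}, $\ket0+e^{i\theta}\ket1$ is equivalent to $\ket0+e^{i\theta'}\ket1$ only if
\begin{equation*}
\theta-\theta'\in\tfrac{2\pi}{q}\z \qquad\text{or}\qquad \theta+\theta'\in\tfrac{2\pi}{q}\z .
\end{equation*}
In the first case, $|\theta-\theta'|<\pi/q$ forces $\theta=\theta'$. In the second, $\theta+\theta'\in[0,2\pi/q)$ forces $\theta=\theta'=0$. So distinct $x$ give inequivalent residuals, which yields at least $2^n\ge 2^{n-1}$ nodes at level $1$. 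The margin is generous, and I would double-check the $\theta+\theta'$ case against the exact phase $\omega_q^{d}$ produced by the commutation identity.

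\emph{Consequences.} $\textit{SWAP}$ is not polytime applicable: the input has size $O(n)$ in every class, the output needs exponential size in the largest class $\gen{C^{\leq k}R_q,X}$, and the smaller classes are subgroups of it. For $CX$, write $\textit{SWAP}$ as three $CX$ gates; if each $CX$ application were polytime with a polynomial blow-up, three compositions would give a polynomial $\textit{SWAP}$. For $H$, use $CX=(I\otimes H)\,CZ\,(I\otimes H)$ together with \Cref{lem:transfer}, which makes $CZ=C^1R_2$ applicable with constant-factor growth whenever $2\mid q$.

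\emph{Main obstacle.} The $H$ column for odd $q$ is where I expect difficulty, because $CZ$ is then not available as a cheap intermediate step. My first attempt would be a direct witness: apply $H$ to the selector qubit of a state like $\ket{sum^n}$ in which the selector is already placed at the bottom. Concretely, take $\ket+^{\otimes n}\otimes\ket0+\ket{rot^n}\otimes\ket1$ but with the factor $\ket{rot^n}$ replaced by its $H$-conjugate, chosen so that applying $H$ to the last qubit produces the swapped state above. I would then reuse the level-$1$ counting argument on the output.
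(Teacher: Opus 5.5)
Your upper bound and the $\textit{SWAP}$ and $CX$ consequences match the paper. Your lower bound is also correct, but you argue it one level lower than the paper does.

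The paper fixes $x_1=0$ and only $n-1$ further bits. It therefore compares $2^{n-1}$ two-qubit functions $f_a$ at index~$2$, where a LIM may carry a $CR_q$ exponent, and it treats the four bit-flip patterns one by one with a $2$-adic valuation argument.

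You fix all $n+1$ upper bits and compare one-qubit residuals, where every LIM is $\lambda R_q^dX^e$. Since $\lambda$ is free, it absorbs any phase produced by the commutation identity. So equivalence means exactly $\theta(x)-\theta(x')\in\frac{2\pi}{q}\z$ (no flip) or $\theta(x)+\theta(x')\in\frac{2\pi}{q}\z$ (flip). Because $\theta(x),\theta(x')\in[0,\pi/q)$, both cases force $x=x'$. This is more elementary and yields $2^n$ rather than $2^{n-1}$ nodes.

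Both proofs rest on the same fact: every nonzero residual is LIM-equivalent to some stored node of its level. That fact is insensitive to bit flips redirecting the walk. Like the paper's proof, you treat $\ket{+}$ as unnormalized. With the normalized $\ket{+}$ the residual is $2^{-n/2}\ket{0}+e^{i\theta(x)}\ket{1}$; the unflipped case is unchanged, and the flipped case already fails on moduli.

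The genuine gap is the $H$ column for odd $q$. Your proposal leaves it open, and the detour you sketch would not close it. It is also not needed.

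\Cref{lem:transfer} holds for $(C^{k'}R_m)^d$ with $m$ arbitrary and independent of $q$. The gate is never absorbed into a label, so it need not lie in $\gen{G}$; it is pushed down the diagram instead:
\begin{itemize}
\item it commutes with diagonal labels, and conjugation by a bit flip keeps it within a constant-size set of diagonal gates on the same qubits;
\item at the upper qubit of $CZ$ it survives only on the $1$-branch, as a $Z$;
\item at that $Z$'s own level it becomes the scalar $-1$ on the $1$-edge, which every LIM absorbs.
\end{itemize}
Hence $CZ=C^1R_2$ is tractable for all five classes and every $q$. Your condition $2\mid q$ cannot be what matters anyway, since for $k=0$ it does not put $CZ$ into $\gen{G}$. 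Then $CX_{(a,b)}=H_{(b)}CZ_{(a,b)}H_{(b)}$ yields the $H$ entry uniformly, exactly as in the paper.

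The direct witness would in fact fail. In the fixed order with the selector at the bottom, the only input that $H_{(n+2)}$ sends to the swapped state is the image of the swapped state under $H_{(n+2)}$. Its level-$1$ residuals are $(1+e^{i\theta(x)})\ket{0}+(1-e^{i\theta(x)})\ket{1}$. Diagonal LIMs preserve the modulus $\tan(\theta(x)/2)$ of their amplitude ratio, and flipped ones invert it. So each equivalence class contains at most two of these $2^n$ vectors, and that input already needs at least $2^{n-1}$ nodes.
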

\begin{proof}[Proof idea]
  A linear-size $\gen I$-LIMDD for $\ket{rot^n}$ and for $\ket+^{\otimes n}$ is
  given by~\cite[Lemma 22]{vinkhuijzen2024a}, and enlarging $\gen G$ never
  enlarges a diagram, so $\ket{sum^n}$ is linear in every class of LIMDD in
  \Cref{tab:tractability}. After the swap, the nodes of the second-to-last layer represent
  the two-qubit functions
  \begin{equation}\label{eq:fa}
    f_a(x_{n+1},x_{n+2})
      =(1-x_{n+2})+x_{n+2}\cdot e^{\pi ix_{n+1}2^{-n}/q}
        \cdot e^{\pi i\sum_{j=1}^{n-1}a_{j+1}2^{-j}/q},
  \end{equation}
  one for every $a\in\{0,1\}^n$ with $a_1=0$: fixing $x_1=0$ and
  $x_2\dots x_n=a_2\dots a_n$ leaves the $x_{n+2}=0$ branch constant $1$ and turns
  the $x_{n+2}=1$ branch into the product of $\ket{rot^n}$'s phases, of which the
  factors $j\leq n-1$ are constant and the factor $j=n$ depends on $x_{n+1}$. Two
  distinct such $f_a,f_b$ are not $\gen{C^{\leq k}R_q,X}$-equivalent. The node has
  index $2$, so every LIM acting on it has at most one control whatever $k$ is, and
  the four possible bit-flip patterns are ruled out one by one by a $2$-adic
  valuation argument on the required exponent. \Cref{app:tract} gives the four
  computations. The succinctness relations from \Cref{sec:succinctness} show that 
  $\gen{C^{\leq k}R_q,X}$ is most compact among all classes in \Cref{tab:tractability}, 
  so the lower bound applies to all of them.
  Finally $\textit{SWAP}_{(a,b)}=CX_{(a,b)}CX_{(b,a)}CX_{(a,b)}$ makes $CX$
  intractable, and $CX_{(a,b)}=H_{(b)}CZ_{(a,b)}H_{(b)}$ with $CZ$ tractable by
  \Cref{lem:transfer} makes $H$ intractable.
\end{proof}

\section{Canonicity}\label{sec:canonicity}

\subsection{The exponent-vector encoding and the layer split}\label{sec:machinery}

The canonicity proofs for $\gen{C^{\leq k}R_q}$-LIMDD and $\gen{C^{\leq k}R_q,X}$-LIMDD 
need a more detailed study of the (algebraic) structure 
of the groups $\gen{C^{\leq k}R_q}$ and $\gen{C^{\leq k}R_q,X}$, which we will describe in this section.
\Cref{app:group} extends description of the group structure for $\gen{C^{\leq k}R_q,X}$, 
but the following is sufficient to understand the canonicity proofs in \Cref{sec:canonicity}.

By \Cref{lem:commute} every element of $\gen{C^{\leq k}R_q,X}$ can be written with
all bit flips $X$ on the right. Together with the commutativity of diagonal matrices
this gives a standard form: every $\gen{C^{\leq k}R_q,X}$-LIM equals
\begin{equation}\label{eq:standard-form}
  \lambda\cdot
  \prod_{\ell=1}^{k+1}\ \prod_{a\in\binom{[n]}{\ell}}
    \left[(C^{\ell-1}R_q)_{(a)}\right]^{d_a}
  \prod_{j=1}^{n}\left[X_{(j)}\right]^{x_j},
  \qquad d_a\in\z_q,\ x_j\in\z_2,\ \lambda\in\co ,
\end{equation}
and a $\gen{C^{\leq k}R_q}$-LIM is the special case $x_1=\dots=x_n=0$. The
exponents in \eqref{eq:standard-form} are unique, and multiplication of $\gen{C^{\leq k}R_q}$-LIMs is commutative as all such LIMs are diagonal matrices. Writing
\begin{equation}\label{eq:M}
  M:=\sum_{\ell=1}^{k+1}\binom{n}{\ell}\in O(n^{k+1})
\end{equation}
for the number of nonempty index sets involved, every LIM in $\gen{C^{\leq k}R_q}$ can be represented by a length $M$ vector over $\z_q$. Hence, we can formally write the isomorphism
\begin{equation}\label{eq:iso}
  \gen{C^{\leq k}R_q}\;\cong\;\z_q^{M} .
\end{equation}

We order the $M$ components by \emph{decreasing} arity,
\begin{equation}\label{eq:order}
  \left(d_a\right)_{|a|=k+1},\ \left(d_a\right)_{|a|=k},\ \dots,\ \left(d_a\right)_{|a|=1},
\end{equation}
and within each component we order each entry in lexicographic order of $a$, and we compare LIMs by the lexicographic
order on $(\lambda,(x_j)_j,(d_a)_a)$ induced by \eqref{eq:order}; we write
$g\leq_{\mathrm{lex}}h$ for it. This is an order on \emph{LIMs}, and is unrelated to the
total order $\preccurlyeq$ on \emph{nodes} of \Cref{sec:rules}. Minimizing in this
order therefore minimizes the components of top arity $|a|=k+1$ before the lower arity components, which is what
\Cref{sec:algo} needs.

We will repeatedly need to know how a single group element acts on the two
cofactors of a node. Split $g\in\gen{C^{\leq k}R_q,X}$ as
\begin{equation}\label{eq:split}
  g=g'\,g''\,X_{(1)}^{x_1}\,g''' ,
\end{equation}
where $g'$ collects the diagonal factors $C^{\ell-1}R_q$ of \eqref{eq:standard-form} whose index set
contains qubit $1$, $g''$ collects those whose index set does not contain qubit $1$, $x_1$ is a Boolean variable, and $g'''=\prod_{j=2}^{n}[X_{(j)}]^{x_j}$ are the $X$ factors on qubit $2$ to $n$. For convenience, we add the scalar $\lambda$ to $g''$. Both $g''$ and
$g'''$ act as the identity on qubit $1$. The factor $g'$ acts as the identity on
$\ket0\otimes\co^{2^{n-1}}$, and on $\ket1\otimes\co^{2^{n-1}}$ it acts as a single
element $\tilde{g}$ of $\gen{C^{\leq k-1}R_q}$ on qubits $2,\dots,n$, namely
\begin{equation}\label{eq:gtilde}
  g'\left(\ket1\otimes I^{\otimes(n-1)}\right)=\ket1\otimes\tilde g,
  \qquad
  \tilde g=\omega_q^{d_{\{1\}}}\!\!\prod_{\ell=2}^{k+1}\prod_{a\in\binom{[2,n]}{\ell-1}}
  \left[(C^{\ell-2}R_q)_{(a)}\right]^{d_{\{1\}\cup a}} .
\end{equation}
Conversely $g'$ is recovered from $\tilde g$ by adding a control on qubit $1$ to
every factor. We write $\ctl(\tilde g):=g'$ for this operation. It is the
$\z_q$-linear map on exponent vectors determined by
$\pi_{\{1\}\cup a}(\ctl(\tilde g))=\pi_a(\tilde g)$, and $\pi_a(\ctl(\tilde g))=0$
if $1\notin a$, and it is computable with $O(M)$ operations in $\z_q$. The
following lemma is the only calculation the rest of the paper needs.

\begin{lemma}[layer split]\label{lem:split}
  Let $g=g'\,g''\,X_{(1)}^{x_1}\,g'''\in\gen{C^{\leq k}R_q,X}$ be decomposed as in \eqref{eq:split}, and let
  $\ket u,\ket w$ be $(n-1)$-qubit vectors. Then
  \begin{equation*}
    g\left(\ket0\otimes\ket u+\ket1\otimes\ket w\right)=
    \begin{cases}
      \ket0\otimes g''g'''\ket u+\ket1\otimes \tilde g\,g''g'''\ket w
        &\text{if }x_1=0,\\[2pt]
      \ket0\otimes g''g'''\ket w+\ket1\otimes \tilde g\,g''g'''\ket u
        &\text{if }x_1=1.
    \end{cases}
  \end{equation*}
  Moreover, as $g$ ranges over $\gen{C^{\leq k}R_q,X}$, the factor $\tilde g$ ranges
  over all of $\gen{C^{\leq k-1}R_q}$ on qubits $2,\dots,n$, the factor $g''g'''$
  ranges over all of $\gen{C^{\leq k}R_q,X}$ on qubits $2,\dots,n$, and the two
  vary \emph{freely and independently} of each other. For $g\in\gen{C^{\leq k}R_q}$
  the same holds with $x_1=0$ and $g'''=I$.
\end{lemma}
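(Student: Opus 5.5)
We prove the displayed identity by letting the four factors of \eqref{eq:split} act on $\ket0\otimes\ket u+\ket1\otimes\ket w$ one at a time, from right to left. We then read off the ``moreover'' part from the uniqueness of the standard form \eqref{eq:standard-form}.

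\emph{The identity.} First, $g'''$ and $g''$ act as the identity on qubit $1$, so they act blockwise: $g''g'''(\ket0\otimes\ket u+\ket1\otimes\ket w)=\ket0\otimes g''g'''\ket u+\ket1\otimes g''g'''\ket w$. Strictly, $g''$ and $g'''$ are $n$-qubit operators acting trivially on qubit $1$, and we identify them with their restrictions to qubits $2,\dots,n$. The scalar $\lambda$ was placed in $g''$, so it is harmless. Next, $g''g'''$ commutes past $X_{(1)}^{x_1}$ because they act on disjoint qubits. This gives $g=g'\,X_{(1)}^{x_1}\,g''g'''$. If $x_1=1$, the flip exchanges the two blocks, giving $\ket0\otimes g''g'''\ket w+\ket1\otimes g''g'''\ket u$. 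Finally, by \eqref{eq:gtilde}, $g'$ fixes the $\ket0$-block and applies $\tilde g$ to the $\ket1$-block. This yields both cases.

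To justify \eqref{eq:gtilde}, take a factor $[(C^{\ell-1}R_q)_{(a)}]^{d_a}$ with $1\in a$. On $\ket{x_1}\otimes\ket y$ it multiplies by $\omega_q^{d_a x_1\prod_{i\in a\setminus\{1\}}y_i}$. This is $1$ when $x_1=0$. When $x_1=1$, it is exactly $[(C^{\ell-2}R_q)_{(a\setminus\{1\})}]^{d_a}$, which for $\ell=1$ is the scalar $\omega_q^{d_{\{1\}}}$.

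\emph{Freeness and independence.} By \eqref{eq:standard-form}, every $g$ is determined by the independent data $\lambda$, $(d_a)_{a}$ and $(x_j)_j$, and every choice of this data occurs. The map sending $g$ to the triple $(\tilde g,\,g''g''',\,x_1)$ depends on disjoint parts of the data:
\begin{itemize}
\item $\tilde g$ depends only on $(d_a)_{1\in a}$. Through $a\mapsto a\setminus\{1\}$, these coordinates are in bijection with the exponent coordinates of $\gen{C^{\leq k-1}R_q}$ on qubits $2,\dots,n$: sets of size $0$ to $k$, where size $0$ gives the global phase $\omega_q^{d_{\{1\}}}$.
\item $g''g'''$ depends only on $\lambda$, $(d_a)_{1\notin a}$ and $(x_j)_{j\geq2}$. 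This is exactly the standard-form data of $\gen{C^{\leq k}R_q,X}$ on qubits $2,\dots,n$.
\end{itemize}
So the map is surjective onto the product, which gives free and independent variation. The inverse on the $\tilde g$ part is $\ctl$. The diagonal case follows by setting all $x_j=0$.

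\emph{Main obstacle.} The one delicate point is the claimed surjectivity onto \emph{all} of $\gen{C^{\leq k-1}R_q}$ for $\tilde g$. It requires that the global phase $\omega_q^{d_{\{1\}}}$ be counted as part of $\gen{C^{\leq k-1}R_q}$, or else absorbed into the LIM scalar. We will state this convention explicitly, in line with $C^{-1}R_q:=\omega_q$ from \Cref{lem:commute}. Uniqueness of exponents is also used here, but it is only needed to make the decomposition \eqref{eq:split} well defined. Surjectivity itself only uses that every choice of exponents yields a group element.
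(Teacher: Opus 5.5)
Your proposal is correct and follows essentially the paper's proof. Both let the factors of the split act blockwise, with $X_{(1)}^{x_1}$ swapping the cofactors and $g'$ acting as $\tilde g$ on the $\ket1$-block, and both obtain independence from the disjointness of the exponent blocks of the standard form.

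The global-phase convention you flag as the delicate point is the one the paper uses implicitly. Its count $M_1=\sum_{\ell=1}^{k+1}\binom{n-1}{\ell-1}$ includes the empty index set, which comes from $d_{\{1\}}$. Your version simply states this explicitly.
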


Note that the factor $\tilde g$ acting on the high cofactor is unconstrained by
the factor acting on the low cofactor. \Cref{sec:rules} shows that this free factor
is exactly what a correct high-edge minimization must range over.

\Cref{fig:makeedge} recalls the boundary this split sits on: $\mathsf{makeEdge}$
turns two child edges into one reduced node plus a root label, and every rule of
\Cref{sec:rules} acts inside that step.

\begin{figure}[t]
\centering
\begin{subfigure}[b]{0.44\textwidth}
\centering
\tikz[->,>=stealth',shorten >=1pt,auto,node distance=1.5cm,font=\footnotesize,
      thick, state/.style={circle,draw,inner sep=0pt,minimum size=14pt}]{
  \node[state] (v0) {$v_0$};
  \node[state, right = 1.0cm of v0] (v1) {$v_1$};
  \node[above = .8cm of v0, xshift=-.3cm] (a0) {};
  \node[above = .8cm of v1, xshift=.3cm] (a1) {};
  \path[]
    (a0) edge[e0] node[pos=.4,lbl,solid,left]  {$A$} (v0)
    (a1) edge[e1] node[pos=.4,lbl,right] {$B$} (v1);

  \node[state, right = 2.4cm of v1, yshift=1.3cm] (v) {$v$};
  \node[above = .8cm of v] (r) {};
  \node[state, below = 1.3cm of v, xshift=-.7cm] (w0) {$v_0$};
  \node[state, below = 1.3cm of v, xshift=.9cm]  (w1) {$v_1$};
  \path[]
    (r) edge node[pos=.45,lbl,left] {$\rootlim$} (v)
    (v) edge[e0] node[pos=.35,lbl,solid,left]  {$\id$}      (w0)
    (v) edge[e1] node[pos=.35,lbl,right] {$\highlim$} (w1);
  \node at ($(v1)!0.5!(v)+(0,-0.3)$) {$\overset{\mathsf{makeEdge}}{\rightsquigarrow}$};
}
\caption{}\label{fig:makeedge-a}
\end{subfigure}
\hfill
\begin{subfigure}[b]{0.52\textwidth}
\centering
$g\bigl(\ket0\otimes\ket u+\ket1\otimes\ket w\bigr)
 =\ket0\otimes\underbrace{g''g'''\ket u}_{\text{low}}
  +\ket1\otimes\underbrace{\tilde g\,g''g'''\ket w}_{\text{high}}$
\\[2.2ex]
\begin{tabular}{@{}ll@{}}
  $g''g'''$ & acts on \emph{both} cofactors\\[2pt]
  $\tilde g\in\gen{C^{\leq k-1}R_q}$ & acts on the high one only, and \emph{freely}
\end{tabular}
\caption{}\label{fig:makeedge-b}
\end{subfigure}
\caption{\subref{fig:makeedge-a} $\mathsf{makeEdge}$ takes the two child edges
$(A,v_0)$ and $(B,v_1)$ and returns a single edge $(\rootlim,v)$ whose target
satisfies R1--R5: R3 puts $\id$ on the $0$-edge, R4 replaces the $1$-edge label
by the $\leq_{\mathrm{lex}}$-least $\highlim$ of its class, and whatever the two
rules divide out is collected in the root label $\rootlim$.
\subref{fig:makeedge-b} the layer split of \Cref{lem:split} is what keeps this
local: a group element acts on the two cofactors through one \emph{shared}
factor $g''g'''$ and one \emph{free} factor $\tilde g$ on the high cofactor.
That free factor is exactly what the first guess of \Cref{sec:rules} misses.}
\label{fig:makeedge}
\end{figure}

\subsection{Reduction rules}\label{sec:rules}

Fix a total order $\preccurlyeq$ on nodes, arbitrary but the same for all diagrams
under consideration.\footnote{A per-diagram topological order does not suffice. The
uniqueness argument of \Cref{sec:canonform} compares two different diagrams and
applies antisymmetry of $\preccurlyeq$ across them.}

Recall from \Cref{sec:results} the five rules and the notation: $v$ is a node
whose $0$-edge points to $v_0$ with label $L_0$ and whose $1$-edge points to
$v_1$ with label $L_1$.
The first three rules are immediate rewritings. For R1,
$\ket b\otimes L_b\ket{v_b}+\ket{1-b}\otimes L_{1-b}\ket{v_{1-b}}
=\ket b\otimes L_b\ket{v_{1-b}}+\ket{1-b}\otimes L_{1-b}\ket{v_{1-b}}$ when $L_b=0$.
For R2, swap the two children and multiply every incoming edge label by $X_{(1)}$,
using
$\ket0\otimes L_0\ket{v_0}+\ket1\otimes L_1\ket{v_1}=(X\otimes I)(\ket0\otimes L_1\ket{v_1}+\ket1\otimes L_0\ket{v_0})$.
For R3 with $L_0\neq0$, use the factoring of \Cref{sec:prelim}. Each of the three
updates the labels of the incoming edges, and no other node is touched.
\Cref{fig:reduced1} showed the resulting node for two of the cases.

Rule R4 is the only rule that is not trivially clear it can be applied in polytime. Getting its domain of minimization right is the point of
this subsection. 
A first guess is the set
$Stab(v_0)^{-1}L_1Stab(v_1)$, similar to the Pauli-LIMDD canonicity algorithm: replacing $L_1$ by $g_0^{-1}L_1g_1$ with
$g_i\in Stab(v_i)$ leaves $\ket v$ fixed up to the LIM $I\otimes g_0$. That set is
too small. By \Cref{lem:split}, a general $g\in\gen G$ acts on the two cofactors as
$g''g'''$ and as $\tilde gg''g'''$, and $\tilde g$ varies over
$\gen{C^{\leq k-1}R_q}$ freely, independently of $g''g'''$. Repeating the substitution
of the previous sentence with such a general $g$ in place of $I\otimes g_0$ therefore
produces one further factor.

\propfour*
\begin{proof}
  Apply \Cref{lem:split} to a general $g\in\gen G$ relating the two nodes $\ket v=\ket0\otimes\ket{v_0}+\ket1\otimes L_1\ket{v_1}$ and
  $\ket{v'}=\ket0\otimes\ket{v_0}+\ket1\otimes L_1'\ket{v_1}$, i.e.\ with $g\ket v=\ket{v'}$. 

  This gives for $x_1=0$:
  \begin{equation*}
    g\left(\ket0\otimes\ket{v_0}+\ket1\otimes L_1\ket{v_1}\right)
      =\ket0\otimes g''g'''\ket{v_0}+\ket1\otimes \tilde gg''g'''\,L_1\ket{v_1}
      =\ket0\otimes\ket{v_0}+\ket1\otimes L_1'\ket{v_1} .
  \end{equation*}
  So the low cofactor is sent to itself by $g''g'''$, i.e. $g''g'''\in Stab(v_0)$.
  The edge label $L_1'$ is thus $\tilde g(g''g''')\,L_1 h$ for some $h\in Stab(v_1)$.
  This gives the first coset of \eqref{eq:true_label_class}, because $\tilde g$ ranges 
  over all of $\gen{C^{\leq k-1}R_q}$ independently of $g''g'''$.

  For $x_1=1$ we get:
  \begin{equation*}
    g\left(\ket0\otimes\ket{v_0}+\ket1\otimes L_1\ket{v_1}\right)
      =\ket0\otimes g''g'''\,L_1\ket{v_1}+\ket1\otimes \tilde gg''g'''\ket{v_0}
      =\ket0\otimes\ket{v_0}+\ket1\otimes L_1'\ket{v_1} .
  \end{equation*}
  So $\ket{v_0}$ and $\ket{v_1}$ are exchanged, and hence $\gen G$-inequivalent, which means that $v_0=v_1$ by R5.
  Again, the low cofactor is sent to the high cofactor by $g''g'''\,L_1$, i.e. $g''g'''\,L_1\in Stab(v_0)=Stab(v_1)$. 
  The edge label $L_1'$ is thus $\tilde g(g''g''') h$ for some $h\in Stab(v_0)=Stab(v_1)$ and $\tilde g\in\gen{C^{\leq k-1}R_q}$,
  which gives the second coset of \Cref{prop:r4class} for the case that $v_0=v_1$.
\end{proof}

\subsection{The canonical form}\label{sec:canonform}

Two remarks on the statement of \Cref{thm:canonicity} (canonicity), restated below.
First, the all-zero vector must be excluded: by R3 the two edge labels of a node cannot both be $0$, so no node
satisfying R1--R5 represents it. Second, the map relating $\ket\phi$ to the root node is a $G$-LIM
$\ell=\lambda\cdot g$ and not merely a group element $g$. Rule R3 forces every $0$-edge
label to be $I^{\otimes n}$, so the state $\ket v$ of the root node always has first
nonzero amplitude $1$, whereas by definition
every element of $\gen G$ has all entries of unit modulus for $G\in\{\gen{C^{\leq k}R_q},\gen{C^{\leq k}R_q,X}\}$. Already for $n=1$ the vector
$2\ket0+4\ket1$ is not of the form $g\ket v$ with $g\in\gen G$.

\thmcanonicity*
\begin{proof}[Proof idea]
  We argue for $\gen{C^{\leq k}R_q,X}$; the diagonal case $\gen{C^{\leq k}R_q}$ is the
  same argument with the reordering of the children and every case $x_1=1$ deleted,
  since a diagonal $g$ has $x_1=0$ in \eqref{eq:split} and R2 is then vacuous.
  Induction on $n$. For existence, the two cofactors have canonical nodes
  $v_0,v_1$ by induction; put the smaller first, using $X$ on the incoming labels if
  needed, and let R4 fix the high label. For uniqueness, let $g_v\ket v=g_w\ket w$
  and apply \Cref{lem:split} to $g_w^{-1}g_v$. If $x_1=0$ the children match
  pairwise and induction identifies them. If $x_1=1$ the children are exchanged, so
  R2 gives $v_0\preccurlyeq v_1=w_0\preccurlyeq w_1=v_0$ and antisymmetry of
  $\preccurlyeq$ collapses all four to one node. In both cases the two high labels
  lie in the same set \eqref{eq:true_label_class}, the second case using the
  inverse-twisted coset, so R4 selects the same one and R5 merges. \Cref{app:canon}
  gives the details.
\end{proof}

The inverse-twisted coset is used exactly once, in the case $x_1=1$, and it is
precisely why the domain of R4 must include it.

\subsection{Computing the canonical form}\label{sec:algo}

\thmpolytime*

Three remarks on the statement. The size of the LIMDD is the number of \emph{stored} nodes of
the shared diagram. A bound in the size of the tree unfolding would be exponentially
weaker, and the formalisation exhibits a diagram whose unfolding is exponentially
larger than the diagram itself. Next, every \emph{root} is assumed to denote a
nonzero state. This is not a restriction of the method but a characterization of the
diagrams that have a reduced form at all, and it constrains only the top layer; a
diagram may store vanishing nodes beneath it. Finally, the bound covers R4 and R5 at
every stored node, and R1 and R3 add two bit operations per node: on a stored diagram
their dispatch is decided by the edge-label tags alone. Rule R2 is vacuous without
$X$, and with $X$ its comparison of the two children is not counted: the labels
carry complex scalars, and the model therefore has no computable order on nodes to
count steps against. A scalar
comparison counts as one operation, as elsewhere in this paper. Recall from
\Cref{sec:results} that ``polynomial in the size of the LIMDD'' means polynomial
for each fixed $k$: by \eqref{eq:M} a single label already has $M\in O(n^{k+1})$
digits, so the size of the diagram and the computational cost are both exponential in $k$.

\begin{proof}
  We give the algorithm for $\gen{C^{\leq k}R_q}$. The algorithm for
  $\gen{C^{\leq k}R_2,X}$ is in \Cref{app:algoX}.

  Rules R1--R3 are the rewritings of \Cref{sec:rules}. Each performs $O(1)$ label
  products, and a label product is $O(M)$ additions in $\z_q$ by \eqref{eq:iso}.
  Rule R5 is a lookup in a hash table on the tuple of children and labels.

  For R4 we use the identification $\gen{C^{\leq k}R_q}\cong\z_q^{M}$ of
  \eqref{eq:iso}, with components ordered by decreasing arity as in
  \eqref{eq:order}. A subgroup of $\z_q^{M}$ is a submodule, and is represented by a
  matrix in Howell normal form, which is a canonical generating set for the row
  span~\cite{howell1986spans}. For prime $q$, the Howell normal form coincides with the reduced row echelon form.
  The sum, the intersection, and the preimage under a $\z_q$-linear map of such
  submodules, and the
  $\leq_{\mathrm{lex}}$-least element of a coset $u+S$ of a submodule $S\leq\z_q^{M}$,
  are computable with $O(M^{3})$
  operations in $\z_q$ by Howell reduction~\cite{storjohann1998modn,storjohann2000canonical}.\footnote{There are actually ways to do this with $O(M^{\omega})$, where $\omega$ is the matrix multiplication constant. In practice, the number of operations scales as $O(M^3)$ for small $M$.}

  \emph{Stabilizers.} Let $g\in Stab(v)$ and split it as $g=\ctl(\tilde g)g''$
  following \eqref{eq:split}. \Cref{lem:split} gives
  $\ket v=\ket0\otimes g''\ket{v_0}+\ket1\otimes L_1\tilde gg''\ket{v_1}$, using
  that $\gen{C^{\leq k}R_q}$ is abelian to move $L_1$ out. Comparing with
  $\ket v=\ket0\otimes\ket{v_0}+\ket1\otimes L_1\ket{v_1}$ yields $g''\in Stab(v_0)$
  and $\tilde gg''\in Stab(v_1)$.

  Writing $P(g)=g''$ and
  $T(g)=\tilde gg''$ for the two $\z_q$-linear comparison maps of \eqref{eq:split},
  the two conditions above read $P(g)\in Stab(v_0)$ and $T(g)\in Stab(v_1)$, i.e.
  \begin{equation}\label{eq:stabcomap}
    Stab(v)=P^{-1}\bigl(Stab(v_0)\bigr)\cap T^{-1}\bigl(Stab(v_1)\bigr)
  \end{equation}
  which is two preimages and one intersection
  of submodules of $\z_q^{M}$, hence computable in time polynomial in $M$ by the
  primitives above.

  \emph{Minimization.} Since $\gen{C^{\leq k}R_q}$ is abelian and $Stab(v_0),Stab(v_1)$ and $\gen{C^{\leq k-1}R_q}$ are 
  groups, the right-hand side of \eqref{eq:true_label_class} is the coset $HL_1$ of
  \begin{equation}\label{eq:H}
    H:=\gen{\gen{C^{\leq k-1}R_q},Stab(v_0),Stab(v_1)} .
  \end{equation}
  We can compute $H$ as a sum of submodules in polytime using the Howell normal form.
  Then, computing the $\leq_{\mathrm{lex}}$-least element
  of $HL_1$ can also be done in time polynomial in $M$.
\end{proof}

  \emph{Standard form of canonical 1-edge label} Observe $\gen{C^{\leq k-1}R_q}\leq H$. Its elements
  are exactly the LIMs supported on the components $a$ with $|a|\leq k$, and they
  act on the exponent vector by translation. Those components therefore form a free
  orbit direction: every value is attained, there is nothing to minimize, and all
  of them are set to $0$. The canonical high edge label therefore consists of the
  scalar $\lambda$ and the $\binom{n}{k+1}$ components of arity $k+1$, and nothing
  else --- and, in the bit-flip case of the appendix, the $\gen X$-part as
  well. In particular, on a node of $n\leq k+1$ qubits nothing of arity $k+1$
  survives either, so the high edge label is always $\lambda\cdot I^{\otimes n}$.
  The global phase $d_\emptyset$ is free and is set to $0$.

\section{The quantum Fourier transform is a tower}\label{sec:qft}

We prove \Cref{thm:qft_is_tower}, in the matrix-as-state convention fixed in
\Cref{sec:results}: the quantum Fourier transform matrix has a linear-size diagram over
$\gen{R_N}$ alone, with $N=2^n$.

\thmqfttower*
\begin{proof}
  The Fourier matrix is $(\omega_N^{jk})_{j,k}$ with $\omega_N=e^{2\pi i/N}$.
  Writing $\hat x=\sum_{d=0}^{n-1}2^dx_d$ and likewise $\hat x'$ gives
  $\omega_N^{\hat x\hat x'}=\omega_N^{\sum_{d,d'=0}^{n-1}2^{d+d'}x_dx'_{d'}}$. Put
  \begin{equation*}
    \ket{\phi_\ell}=\sum_{x,x'}\omega_N^{\sum_{d,d'=0}^{\ell}2^{d+d'}x_dx'_{d'}}\ket{x,x'} .
  \end{equation*}
  Splitting the exponent on the two variables of level $\ell$ separates the terms
  with $d=\ell$ or $d'=\ell$ from the rest, which gives
  \begin{equation*}
    \ket{\phi_\ell}=\ket{00}\otimes\ket{\phi_{\ell-1}}
      +\ket{01}\otimes R'\ket{\phi_{\ell-1}}
      +\ket{10}\otimes R\ket{\phi_{\ell-1}}
      +\ket{11}\otimes \omega_N^{2^{2\ell}}RR'\ket{\phi_{\ell-1}},
  \end{equation*}
  with
  \begin{align*}
    R &= I\otimes R_N^{2^{2\ell-1}}\otimes I\otimes R_N^{2^{2\ell-2}}\otimes\cdots\otimes I\otimes R_N^{2^{\ell}},\\
    R'&= R_N^{2^{2\ell-1}}\otimes I\otimes R_N^{2^{2\ell-2}}\otimes\cdots\otimes I\otimes R_N^{2^{\ell}}\otimes I .
  \end{align*}
  All four cofactors are $\ket{\phi_{\ell-1}}$ up to a $\gen{R_N}$-LIM, so both
  edges of the node for $x_\ell$ point to the node for $x'_\ell$, and both edges of
  the latter point to the node for $x_{\ell-1}$. This holds for every $\ell$, so the
  diagram is a tower, with $1$-edge labels $RR'\omega_N^{2^{2\ell}}$ and $R'$ at the
  two levels of stage $\ell$. The argument never uses the position of level $\ell$
  relative to the others, only that $x_\ell$ and $x'_\ell$ are adjacent.
\end{proof}

\begin{figure}[t]
  \centering
  \begin{tikzpicture}[
    scale=0.3,
    every path/.style={>=latex},
    every node/.style={},
    inner sep=1pt,
    minimum size=0.3cm,
    line width=1pt,
    node distance=.8cm,
    thick,
    font=\scriptsize
    ]
    \node[draw,circle] (a1) {$x_{n-1}$};
    \node[draw,circle, below = .4cm of a1 ] (a2) {$x_{n-1}'$};
    \node[draw,circle, below = .4cm of a2 ] (a3) {$x_{n-2}$};
    \node[draw,circle, below = .4cm of a3      ] (a4) {$x_{n-2}'$};
    \node[draw,circle, below = .4cm of a4      ] (a5) {$x_{n-3}$};
    \node[draw,circle, below = .6cm of a5      ] (a6) {$x_{1}$};
    \node[draw,circle, below = .4cm of a6      ] (a7) {$x_{1}'$};
    \node[draw,circle, below = .4cm of a7      ] (a8) {$x_{0}$};
    \node[draw,circle, below = .4cm of a8      ] (a9) {$x_{0}'$};
    \node[leaf, below = .4cm of a9      ] (a10) {$1$};

    \draw[<-] (a1) --++(90:4cm) node[right,pos=.7] {};

    \draw[e0=20] (a1) edge  node[left] {} (a2);
    \draw[e1=20] (a1) edge  node[right,pos=.3] {~$R_N^{2^{2(n-1)}}\otimes I\otimes R_N^{2^{2(n-1)-1}}\otimes I\otimes\dots\otimes I\otimes R_N^{2^{n-1}}$} (a2);
    \draw[e0=20] (a2) edge  node[left] {} (a3);
    \draw[e1=20] (a2) edge  node[right,pos=.3] {~~$R_N^{2^{2(n-1)-1}}\otimes I\otimes R_N^{2^{2(n-1)-2}}\otimes\dots\otimes I\otimes R_N^{2^{n-1}}\otimes I$} (a3);
    \draw[e0=20] (a3) edge  node[left] {} (a4);
    \draw[e1=20] (a3) edge  node[right,pos=.3] {~$R_N^{2^{2(n-2)}}\otimes I\otimes R_N^{2^{2(n-2)-1}}\otimes\dots\otimes I\otimes R_N^{2^{n-2}}$} (a4);
    \draw[e0=20] (a4) edge  node[left] {} (a5);
    \draw[e1=20] (a4) edge  node[right,pos=.3] {~$R_N^{2^{2(n-2)-1}}\otimes I\otimes R_N^{2^{2(n-2)-2}}\otimes \dots\otimes I\otimes R_N^{2^{n-2}}\otimes I$} (a5);
    \draw[dotted] (a5) -- (a6);
    \draw[e0=20] (a6) edge  node[left] {} (a7);
    \draw[e1=20] (a6) edge  node[right,pos=.3] {~$R_N^4\otimes I\otimes R_N^2$} (a7);
    \draw[e0=20] (a7) edge  node[left] {} (a8);
    \draw[e1=20] (a7) edge  node[right,pos=.3] {~$R_N^2\otimes I$} (a8);
    \draw[e0=20] (a8) edge  node[left] {} (a9);
    \draw[e1=20] (a8) edge  node[right,pos=.3] {~$R_N$} (a9);
    \draw[e0=20] (a9) edge  node[left] {} (a10);
    \draw[e1=20] (a9) edge  node[right,pos=.3] {~$1$} (a10);

\end{tikzpicture}
  \caption{The quantum Fourier transform on $n$ qubits as a
  Tower-$\gen{R_N}$-LIMDD, $N=2^n$. The variable order is reversed for readability.}
  \label{fig:QFT-Tower-LIMDD}
\end{figure}

Compact decision diagrams for the Fourier transform are reported experimentally
in~\cite{hong2025limtdd}; \Cref{thm:qft_is_tower} proves the corresponding
statement for Tower-$\gen{R_N}$-LIMDDs.

\paragraph*{Use of AI tools.}
The authors used substantial
assistance from an AI coding assistant (Anthropic's Claude) to formalize and verify some of their results:
it produced candidate Lean proof scripts, and it
was also directed to act as an adversarial referee of the manuscript, searching
for bugs in proofs. That search is what uncovered the
error corrected in \Cref{prop:r4class}: an earlier version of rule R4 minimized
over a domain that omits the free factor $\gen{C^{\leq k-1}R_q}$ produced by the
layer split, and the assistant produced the machine-checked counterexample at
$q=2$, $k=1$, $n=2$. The correctness of every
formalized statement rests on the Lean kernel and not on the assistant: the
development contains no \verb|sorry| and no \verb|native_decide|, and each
headline result was audited with \verb|#print axioms|. The same assistant was
used for editing and restructuring the manuscript, including the present
arrangement of the material into a self-contained introduction. All definitions,
theorem statements and proofs were written by the authors, who take
full responsibility for the content.

\bibliography{bibliography,lits}

\appendix
\section{Omitted proofs of Sections~\ref{sec:prelim} and~\ref{sec:machinery}}\label{app:prelim}

\begin{proof}[Proof of \Cref{lem:commute}]
  Both sides of \eqref{eq:commute} are products of a permutation matrix and a
  diagonal matrix, so it suffices to check them on every basis state
  $\ket{x_1,\dots,x_n}$. Since $(C^{\ell-1}R_q)_{(a)}$ has order $q$, the identity
  is equivalent to
  \begin{equation*}
    (C^{\ell-1}R_q)_{(a)}\,X_{(a_j)}\,(C^{\ell-1}R_q)_{(a)}
      =(C^{\ell-2}R_q)_{(a\setminus\{a_j\})}\,X_{(a_j)} .
  \end{equation*}
  Note that by assumption $a_j\in a$. 
  We observe that $X_{(a_j)}(C^{\ell-2}R_q)_{(a\setminus\{a_j\})}=(C^{\ell-2}R_q)_{(a\setminus\{a_j\})}X_{(a_j)}$, as the gates act on different qubits.
  Then we rewrite
\begin{align*}
&(C^{\ell-1}R_q)_{(a)}X_{(a_j)}(C^{\ell-1}R_q)_{(a)}\ket{x_1,\dots,x_n}\\
&=\begin{cases}
    (C^{\ell-1}R_q)_{(a)}X_{(a_j)}\ket{x_1,\dots,x_n} &\text{ if }x_{a_j} = 0\\
    (C^{\ell-1}R_q)_{(a)}X_{(a_j)}(C^{\ell-2}R_q)_{(a\setminus\{a_j\})}\ket{x_1,\dots,x_n} &\text{ if }x_{a_j} = 1\\
\end{cases}\\
&=\begin{cases}
    (C^{\ell-1}R_q)_{(a)}\ket{x_1,\dots,1-x_{a_j},\dots,x_n} &\text{ if }x_{a_j} = 0\\
    (C^{\ell-1}R_q)_{(a)}(C^{\ell-2}R_q)_{(a\setminus\{a_j\})}\ket{x_1,\dots,1-x_{a_j},\dots,x_n} &\text{ if }x_{a_j} = 1\\
    \qquad\text{using }X_{(a_j)}(C^{\ell-2}R_q)_{(a\setminus\{a_j\})}=(C^{\ell-2}R_q)_{(a\setminus\{a_j\})}X_{(a_j)}\\
\end{cases}\\
&=\begin{cases}
    (C^{\ell-2}R_q)_{(a\setminus\{a_j\})}\ket{x_1,\dots,1-x_{a_j},\dots,x_n} &\text{ if }x_{a_j} = 0\\
    (C^{\ell-2}R_q)_{(a\setminus\{a_j\})}\ket{x_1,\dots,1-x_{a_j},\dots,x_n} &\text{ if }x_{a_j} = 1\\
\end{cases}\\
&=(C^{\ell-2}R_q)_{(a\setminus\{a_j\})}X_{(a_j)}\ket{x_1,\dots,x_n},
\end{align*}
which proves the equality for every basis state.

  For normality, \eqref{eq:commute} shows that conjugating a generator of
  $\gen{C^{\leq k}R_q}$ by $X_{(a_j)}$ returns an element of
  $\gen{C^{\leq k}R_q}$, since $|a\setminus\{a_j\}|=\ell-1\leq k$. Generators of
  $\gen{X}$ on qubits outside $a$ commute with $(C^{\ell-1}R_q)_{(a)}$ because they act on different qubits.
\end{proof}

\begin{proof}[Proof of \Cref{lem:split}]
  By \eqref{eq:standard-form} and \Cref{lem:commute} we may write
  $g=g'g''X_{(1)}^{x_1}g'''$ as in \eqref{eq:split}. The factors $g''$ and $g'''$
  involve no index set containing qubit $1$, so both act as $I\otimes(\cdot)$ and
  commute with the split. The factor $X_{(1)}^{x_1}$ exchanges the two cofactors
  when $x_1=1$ and does nothing when $x_1=0$. Finally $g'$ acts as the identity on
  $\ket0\otimes\co^{2^{n-1}}$, because every factor of $g'$ is a $C^{\ell-1}R_q$
  whose control set contains qubit $1$, and on $\ket1\otimes\co^{2^{n-1}}$ it acts
  as $\tilde g$ of \eqref{eq:gtilde}, by removing that control. Composing the three
  in the order of \eqref{eq:split} gives the displayed case distinction.

  For the independence claim, the exponent vector of $g$ decomposes as a direct sum
  over the index sets containing qubit $1$ and those not containing it, together
  with the vector $(x_1,\dots,x_n)$. The first block determines $\tilde g$ and is
  in bijection with $\z_q^{M_1}$, where $M_1=\sum_{\ell=1}^{k+1}\binom{n-1}{\ell-1}$
  is the number of exponent components of $\gen{C^{\leq k-1}R_q}$ on qubits
  $2,\dots,n$. The remaining blocks determine $g''g'''$. Since
  \eqref{eq:standard-form} places no constraint between the blocks, the two vary
  independently and each attains every value.
\end{proof}

\begin{proof}[Proof of \Cref{lem:arity}]
  Suppose $g$ maps one node of index $n$ to the other. By \Cref{lem:split}, $g$ acts on the
  cofactors as $g''g'''$ and as $\tilde gg''g'''$, both of which are elements of
  $\gen G$ on qubits $2,\dots,n$, because $G$ contains $C^{h}R_q$ for every
  $h\leq k$ and hence $\tilde g\in\gen{C^{\leq k-1}R_q}\subseteq\gen G$. So the
  cofactors are $\gen G$-equivalent, contradicting the assumption.
\end{proof}

\section{Omitted proofs of Section~\ref{sec:succinctness} (Succinctness)}\label{app:succ}

\begin{proof}[Proof of \Cref{thm:tower_is_hypergraphstate}]
  Induct on the number $n$ of levels. For $n=1$ the root node represents
  $\ket0+\ket1$ and the root edge is $2^{-1/2}$ or $2^{-1/2}(R_p)^d$ with
  $d\in\{1,\dots,p-1\}$, so it represents $\ket+$ or $(R_p)^d\ket+$. Both are
  hypergraph states, of the one-vertex hypergraph with $d$ copies of the edge
  $\{1\}$.

  For $n>1$, call the root $v_1$ and its child $v_2$. By the $0$-edge normal form
  the $0$-edge of $v_1$ carries $I^{\otimes n-1}$, its $1$-edge carries some
  $B\in\gen{C^{\leq k-2}R_p}$, and the root edge carries $2^{-n/2}R$ with
  $R\in\gen{C^{\leq k-2}R_p}$. So the diagram represents
  $2^{-n/2}R\ket{v_1}$ with $\ket{v_1}=\ket0\otimes\ket{v_2}+\ket1\otimes B\ket{v_2}$,
  in the unnormalized convention of \Cref{def:hypergraphstate}.

  By induction $\ket{v_2}=\ket{\hat\phi^{G}_p}$ for a $k$-hypergraph
  $G=(\{2,\dots,n\},E)$. Write
  $B=\prod_{1\leq\ell\leq k-1}\prod_{S\in\binom{[2,n]}{\ell}}[(C^{\ell-1}R_p)_{(S)}]^{d_S}$
  and let $F$ be the multiset containing $d_S$ copies of $\{1\}\cup S$ for every such
  $S$, so every element of $F$ has size at most $k$ and contains vertex $1$, while
  no element of $E$ does. Put $E'=E\uplus F$. Then
  \begin{align*}
    \ket{\hat\phi^{([n],E')}_p}
      &=\prod_{e\in F}(C^{|e|-1}R_p)_{(e)}\ \prod_{e\in E}(C^{|e|-1}R_p)_{(e)}\ (\ket0+\ket1)^{\otimes n}\\
      &=\prod_{e\in F}(C^{|e|-1}R_p)_{(e)}\ (\ket0+\ket1)\otimes\ket{\hat\phi^{G}_p}
       \tag{no $e\in E$ contains $1$}\\
      &=\ket0\otimes\ket{\hat\phi^{G}_p}
       +\ket1\otimes\prod_{e\in F}(C^{|e|-2}R_p)_{(e\setminus\{1\})}\ket{\hat\phi^{G}_p}
       \tag{every $e\in F$ contains $1$}\\
      &=\ket0\otimes\ket{v_2}+\ket1\otimes B\ket{v_2}=\ket{v_1} .
  \end{align*}
  Because $R$ is a product of matrices $C^{\ell}R_p$ with $\ell\leq k-2$, absorbing
  it adds further edges of size at most $k-1$ to $E'$, so $R\ket{v_1}$ is again the
  unnormalized state of a $k$-hypergraph. Multiplying by $2^{-n/2}$ gives
  \Cref{def:hypergraphstate} exactly. Running the induction on the unnormalized
  vectors is what keeps the normalization factor from being counted twice.
\end{proof}

\section{Omitted proofs of Section~\ref{sec:tractability} (Tractability)}\label{app:tract}

\subsection*{Measure and Sample}

Let $\ket\phi=label(e_r)\ket r$ for the root edge $e_r$ and root node $r$. The norm
recursion $\bra v\ket v=|\lambda_0|^2\bra{v_0}\ket{v_0}+|\lambda_1|^2\bra{v_1}\ket{v_1}$
follows from $\ket v=\ket0\otimes\ket{e_0}+\ket1\otimes\ket{e_1}$ and the fact that
every element of $\gen G$ is unitary, so only the scalar part of a label changes a
norm. One pass over the diagram computes it for all nodes.

For the amplitude, let $v$ have index $j$ and let $g$ be the LIM accumulated above
$v$. Split $g$ as in \eqref{eq:split}, and write $\tilde x_j$ for the exponent of
$X_{(j)}$ in $g$; it says whether $g$ flips qubit $j$, and it is not the queried bit
$x_j$. By \Cref{lem:split},
\begin{align*}
  \bra{x_j,\dots,x_n}g\ket v
  &=\bra{x_j}\bra{x_{j+1},\dots,x_n}
    \bigl(\ket0\otimes g''g'''\ket{e_{\tilde x_j}}+\ket1\otimes\tilde gg''g'''\ket{e_{\neg\tilde x_j}}\bigr)\\
  &=\begin{cases}
      \bra{x_{j+1},\dots,x_n}g''g'''\ket{e_{\tilde x_j}} & x_j=0,\\
      \bra{x_{j+1},\dots,x_n}\tilde gg''g'''\ket{e_{\neg\tilde x_j}} & x_j=1,
    \end{cases}
\end{align*}
so the walk enters the child $e_{x_j\oplus\tilde x_j}$ with the accumulated LIM
updated by one product. Each edge label is $label(e_i)$ times the node, so the
recursion is a sequence of $n$ LIM products, each $O(M)$ additions in $\z_q$ by
\eqref{eq:iso}. Unfolding it, $\bra x\ket\phi$ is the product of the scalar parts of
the labels along the path selected by the bits $x_j\oplus\tilde x_j$.

Sampling walks the same path. At a node $v$ of index $j$ with accumulated LIM $g$,
toss a coin that lands heads with probability
$\bra{e_{\tilde x_j}}\ket{e_{\tilde x_j}}/\bra v\ket v$, set $x_j=0$ on heads and
$x_j=1$ otherwise, and continue into $e_{\tilde x_j}$ or $e_{\neg\tilde x_j}$
respectively, updating $g$ as above. The probability of producing $x$ telescopes to
$\bra{e^{(n+1)}}\ket{e^{(n+1)}}/\bra{v^{(1)}}\ket{v^{(1)}}$ times the product of the
squared scalar parts along the path, where $v^{(1)}$ is the root. Since
$\bra\phi\ket\phi=|\lambda_{e_r}|^2\bra{v^{(1)}}\ket{v^{(1)}}$, that equals
$|\bra x\ket\phi|^2/\bra\phi\ket\phi$ by the previous paragraph. Both queries run in
time linear in the diagram and in $M$.

\subsection*{The six instances of Lemma~\ref{lem:transfer}}

We verify \eqref{eq:recursive} for the three gate families and the two kinds of
group. Throughout $v$ has index $b$ and
$\ket v=\ket0\otimes L_0\ket{v_0}+\ket1\otimes L_1\ket{v_1}$.

\paragraph*{$X_{(a)}$ on a diagonal group.} Take
$\mathcal V_b=\{X_{(a)}\}$ for $b\geq a$ and $\{I\}$ below. For $b>a$ the gate
commutes past the label up to \eqref{eq:R_q--X__commutation}, giving
$X_{(a)}\ket v=\ket0\otimes L_0'(X_{(a)}\ket{v_0})+\ket1\otimes L_1'(X_{(a)}\ket{v_1})$
with $L_i'$ obtained from $L_i$ by replacing its factor $(R_q)^{d}$ on qubit $a$ by
$\omega_q^{d}(R_q)^{-d}$. For $b=a$ the gate exchanges the two children:
$X_{(a)}\ket v=\ket0\otimes L_1\ket{v_1}+\ket1\otimes L_0\ket{v_0}$, and both
children get $I\in\mathcal V_{b-1}$.

\paragraph*{$X_{(a)}$ on a group containing $X$.} Identical, except that $L_i$ may
itself contain bit flips; these commute with $X_{(a)}$, and the diagonal part is
updated by \eqref{eq:commute} exactly as above. In both cases
$|\mathcal V_b|\leq1$.

\paragraph*{$(R_m)^d_{(a)}$.} Take $\mathcal V_b=\{(R_m)^d_{(a)}\}$ for $b>a$,
$\{I\}$ for $b<a$, and at $b=a$ note that $(R_m)^d$ acts only on the $1$-branch:
$(R_m)^d_{(a)}\ket v=\ket0\otimes L_0\ket{v_0}+\ket1\otimes\omega_m^{d}L_1\ket{v_1}$,
so both children get $I$. For $b>a$ on a diagonal group the gate commutes with
$L_i$ outright. On a group containing $X$, pushing $(R_m)^d$ past a label whose
$X$-part flips qubit $a$ returns $(R_m)^{-d}$ by
\eqref{eq:R_q--X__commutation}, so $\mathcal V_b=\{(R_m)^{d}_{(a)},(R_m)^{-d}_{(a)}\}$
and $|\mathcal V_b|\leq2$.

\paragraph*{$(C^{k'}R_m)^d$ on qubits $a_1<\dots<a_{k'+1}$.}
Take $\mathcal V_b$ as in \Cref{lem:transfer}, that is,
$\{(C^{k'}R_m)^{\pm d}\}$ above the highest control,
$\{I,(C^{k''}R_m)^{\pm d}\}$ between controls $a_{k''}$ and $a_{k''+1}$, and
$\{I\}$ below the lowest.
Off a control level, $b\notin\{a_1,\dots,a_{k'+1}\}$, the gate is diagonal and
commutes with a diagonal $L_i$, so \eqref{eq:recursive} holds with $L_i'=L_i$ and
the same gate passed down. At a control level $b=a_{k''+1}$ the gate acts as the
identity on the $0$-branch and loses its highest control on the $1$-branch, by the
definition of $C^{k''}R_m$:
\begin{equation*}
  (C^{k''}R_m)^{d}_{(a_1,\dots,a_{k''+1})}\ket v
   =\ket0\otimes L_0\ket{v_0}+\ket1\otimes L_1\bigl((C^{k''-1}R_m)^{d}_{(a_1,\dots,a_{k''})}\ket{v_1}\bigr),
\end{equation*}
using once more that the two are diagonal and commute. Below the lowest control the
gate is the identity. For a group containing $X$, a label whose $X$-part flips one
of the controls turns $d$ into $-d$ by \eqref{eq:commute}, so
$\mathcal V_b$ gains $(C^{k''}R_m)^{-d}$ and $|\mathcal V_b|\leq3$. This is the
only case with three elements, which proves the bound of \Cref{lem:transfer}. The
gates $Z=R_2$, $T=R_8$ and $C^{k'}R_m$ of \Cref{tab:tractability} are all instances.
\qed

\subsection*{Proof of Theorem~\ref{thm:swap}}

Linear-size $\gen I$-LIMDDs for $\ket+^{\otimes n}$ and $\ket{rot^n}$ are given by
a direct generalization of \cite[Lemma 22]{vinkhuijzen2024a}: both are
product states, so each level has one node. Hence $\ket{sum^n}$ has $n+3$ nodes,
and enlarging $\gen G$ never enlarges a diagram.

After the swap,
$\textit{SWAP}_{1,n+2}\ket{sum^n}=\ket0\otimes\ket+^{\otimes n}\otimes\ket0+\ket0\otimes\ket{rot^n}\otimes\ket1$.
Fix $x_1=0$ and $x_2\dots x_n=a_2\dots a_n$. The branch $x_{n+2}=0$ is the constant
$1$. The branch $x_{n+2}=1$ is $\prod_{j=1}^{n}e^{\pi i2^{-j}x_{j+1}/q}$, whose
factors with $j\leq n-1$ are fixed by $a$ and whose factor $j=n$ depends on
$x_{n+1}$. That is exactly $f_a$ of \eqref{eq:fa}, one function for each
$a\in\{0,1\}^n$ with $a_1=0$.

Let $v_a,v_b$ be the corresponding nodes and suppose $\ket{v_a}=\alpha g\ket{v_b}$
with $\alpha\in\co$ and $g\in\gen{C^{\leq k}R_q,X}$. The node has index $2$, so
every LIM acting on it acts on two qubits and therefore carries at most one control,
whatever $k$ is. By \eqref{eq:standard-form},
\begin{equation*}
  g=(R_q)^{d_{n+1}}_{(n+1)}(R_q)^{d_{n+2}}_{(n+2)}(CR_q)^{d_{n+1,n+2}}_{(n+1,n+2)}
    X_{(n+1)}^{e_{n+1}}X_{(n+2)}^{e_{n+2}},
\end{equation*}
where $d_{n+1,n+2}$ is a single exponent and $e_{n+1},e_{n+2}\in\{0,1\}$ are the
bit-flip exponents, not the variables. From $\bra{00}\ket{v_a}=1$ we get
$\alpha=(\bra{00}g\ket{v_b})^{-1}$. Put
$S:=\sum_{j=1}^{n-1}(a_{j+1}-b_{j+1})2^{-j}$, so $|S|<1$, and note that $S$ is an
integer only when $S=0$, which forces $a=b$.

Comparing $\bra{01}$ gives, for $e_{n+2}=0$,
\begin{equation*}
  \omega_q^{d_{n+2}}=e^{\pi i\,(S-e_{n+1}2^{-n})/q},
  \qquad\text{i.e.}\qquad
  2d_{n+2}-\bigl(S-e_{n+1}2^{-n}\bigr)\in 2q\z .
\end{equation*}
The left side is an integer minus $S-e_{n+1}2^{-n}$, so $S-e_{n+1}2^{-n}\in\z$. If
$e_{n+1}=0$ this gives $S\in\z$, hence $S=0$, hence $a=b$. If $e_{n+1}=1$ then
$S-2^{-n}$ has $2$-adic valuation exactly $-n<0$, because $S$ is a dyadic rational
of valuation at least $-(n-1)$; so it is never an integer, and this case is
impossible. Comparing $\bra{10}$ instead, which gives $1=\bra{10}\ket{v_a}$,
disposes of $e_{n+2}=1$: the same computation yields
$2d_{n+1}\mp2^{-n}\in2q\z$, and $\pm2^{-n}$ is never an integer for $n\geq1$, so
$d_{n+1}$ does not exist. The four patterns of $(e_{n+1},e_{n+2})$ are exhaustive,
and each is ruled out, so distinct $a\neq b$ give inequivalent nodes.

There are $2^{n-1}$ such $a$, so the second-to-last layer has $2^{n-1}$ pairwise
inequivalent nodes, none of which can be merged by \Cref{lem:arity}. Hence the
diagram is exponential and $\textit{SWAP}$ is not polytime. Finally
$\textit{SWAP}_{(a,b)}=CX_{(a,b)}CX_{(b,a)}CX_{(a,b)}$ makes $CX$ intractable, and
$CX_{(a,b)}=H_{(b)}CZ_{(a,b)}H_{(b)}$ with $CZ$ tractable by \Cref{lem:transfer}
makes $H$ intractable.
\qed

\section{Proof of Theorem~\ref{thm:canonicity}}\label{app:canon}

\thmcanonicity*
\begin{proof}
  We argue for $\gen{C^{\leq k}R_q,X}$; dropping the cases with $x_1=1$ gives the
  diagonal case. Induction on $n$. For $n=0$ the vector is a scalar $\lambda$,
  represented by an edge labeled $\lambda$ into the leaf, which is unique because
  the leaf is unique and denotes $1$. Let $n\geq1$ and split
  $\ket\phi=\ket0\otimes\ket{\phi_0}+\ket1\otimes\ket{\phi_1}$.

  \emph{Existence.} By induction there are nodes $v_0,v_1$ and $g_0,g_1\in\gen G$
  with $\ket{\phi_i}=g_i\ket{v_i}$. If $v_0\preccurlyeq v_1$ then
  $\ket\phi=(I\otimes g_0)\left(\ket0\otimes\ket{v_0}+\ket1\otimes g_0^{-1}g_1\ket{v_1}\right)$,
  and R4 turns $g_0^{-1}g_1$ into the canonical high label. Otherwise
  $\ket\phi=(X\otimes g_1)\left(\ket0\otimes\ket{v_1}+\ket1\otimes g_1^{-1}g_0\ket{v_0}\right)$
  and the same applies. The resulting node satisfies R1--R5 by construction.

  \emph{Uniqueness.} Let $g_v\ket v=\ket\phi=g_w\ket w$ with $v,w$ both satisfying
  R1--R5, and write $\ket v=\ket0\otimes\ket{v_0}+\ket1\otimes B_v\ket{v_1}$ and
  $\ket w=\ket0\otimes\ket{w_0}+\ket1\otimes B_w\ket{w_1}$. Apply \Cref{lem:split}
  to $g_w^{-1}g_v$. If $x_1=0$ then $g''g'''\ket{v_0}=\ket{w_0}$ and
  $\tilde gg''g'''B_v\ket{v_1}=B_w\ket{w_1}$, so $v_0,w_0$ are equivalent and so
  are $v_1,w_1$; by induction $v_0=w_0$ and $v_1=w_1$. If $x_1=1$ then instead
  $v_1=w_0$ and $v_0=w_1$, so R2 gives
  $v_0\preccurlyeq v_1=w_0\preccurlyeq w_1=v_0$, and antisymmetry of
  $\preccurlyeq$ forces $v_0=v_1$, so again all four children coincide. In both
  cases $B_v$ and $B_w$ lie in the same set \eqref{eq:true_label_class}, the second
  case using the inverse-twisted coset of \Cref{prop:r4class}. Rule R4 selects the
  same element of that set for both, so $B_v=B_w$. Hence $v$ and $w$ have the same
  children and the same labels, and R5 merges them.
\end{proof}

\section{Group theory of the bit-flip extension}\label{app:group}

In \Cref{sec:machinery} we described the group $\gen{C^{\leq k}R_q}$ and its equivalence to $\z_q^M$. We now describe the group $\gen{C^{\leq k}R_q,X}$ and its equivalence to the semidirect product group $\z_q^{M+1}\rtimes_\varphi\z_2^{\,n}$.

By \eqref{eq:commute} with $\ell=1$ we get
$X_{(j)}(R_q)_{(j)}X_{(j)}=\omega_q\left[(R_q)_{(j)}\right]^{q-1}$, so the global
phase $\omega_q\cdot I$ lies in $\gen{C^{\leq k}R_q,X}$ although it does not lie in
$\gen{C^{\leq k}R_q}$. Recording it as the exponent $d_\emptyset$ of the empty index
set, the diagonal part of $\gen{C^{\leq k}R_q,X}$ is $\z_q^{M+1}$. We know that $\gen{C^{\leq k}R_q}$ is normal in $\gen{C^{\leq k}R_q,X}$.

We have that $\gen{C^{\leq k}R_q,X}$ is thus a semi-direct product:
\begin{equation}\label{eq:isoX}
  \gen{C^{\leq k}R_q,X}\;\cong\;\gen{C^{\leq k}R_q}\rtimes_\varphi\gen{X}\;\cong\;\z_q^{M+1}\rtimes_\varphi\z_2^{\,n} .
\end{equation}
The action is the conjugation $\varphi_h(g)=hgh^{-1}$ with $g\in\gen{C^{\leq k}R_q}, h\in\gen{X}$, which, according to \Cref{lem:commute}, evaluates as
$\varphi_{X_{(j)}}\bigl((C^{\ell}R_q)_{(S)}\bigr)
 =\bigl[(C^{\ell}R_q)_{(S)}\bigr]^{-1}(C^{\ell-1}R_q)_{(S\setminus\{j\})}$ when
$j\in S$, and equals the identity when $j\notin S$. Multiplication is
$(a,b)(a',b')=(a\,\varphi_b(a'),bb')$, for $a,a'\in\gen{C^{\leq k}R_q}$ and $b,b'\in\gen{X}$.

We write
\begin{equation*}
  \pi_a(g):=d_a\in\z_q
\end{equation*}
for the \emph{projection} of $g$ onto the component of the index set $a$
Each $\pi_a$ is a surjective group
homomorphism onto $\z_q$, and each $\varphi_r$ is $\z_q$-linear on exponent vectors $g$. Both are computable with $O(M)$ operations in $\z_q$ by \Cref{lem:commute}.

\subsection*{Subgroups}

By \cite[Theorem 1.3.3]{usenko1991subgroups}, every subgroup
$U\leq\gen{C^{\leq k}R_q,X}$ is determined by a triple $(L,R,\theta)$ with
$L\leq\gen{C^{\leq k}R_q}$, $R\leq\gen X$ and a normal crossed homomorphism
$\theta:R\to\gen{C^{\leq k}R_q}/L$, through
\begin{equation}\label{eq:LRtheta}
  U=\{(u\cdot\theta(h),h)\mid u\in L,\ h\in R\} .
\end{equation}
Being a normal crossed homomorphism means
$\theta(gh)=\theta(g)\cdot\varphi_g(\theta(h))\bmod L$, so $\theta$ is determined by
its values on a generating set of $R$. We represent $L$ and $R$ by matrices in
Howell normal form over $\z_q$ and $\z_2$ respectively, and $\theta$ by the matrix
of images of the generators of $R$. All three are $O(M)$ wide, so a subgroup is
stored in $O(M^2)$ digits.

Finally, note that stabilizers are submodules and not merely subgroups. Under
\eqref{eq:iso} the condition $g\ket v=\ket v$ is $\z_q$-linear in the exponent
vector of $g$, so $L=Stab(v)\cap\gen{C^{\leq k}R_q}$ is a $\z_q$-submodule of
$\z_q^M$ for every $q$, prime or not. This is what makes the Howell form the right
representation. Note that for $q$ prime, the submodule is actually a subgroup as $\z_q$ is a field.

\section{The canonicity algorithm with the bit flip}\label{app:algoX}

Throughout this appendix $\gen G=\gen{C^{\leq k}R_q,X}$, for an arbitrary
$q\geq1$. Rules R1--R3 and R5 are applied exactly as in the proof of
\Cref{thm:polytime}; we give R4, in two steps: first compute $Stab(v)$, then
minimize the high edge label. 

We represent stabilizers of nodes, which are subgroups of $\gen{C^{\leq k}R_q,X}$, by the representation of \Cref{app:group}. 
Such subgroups are therefore represented by a subgroup $L$ of $\gen{C^{\leq k}R_q}$ in Howell normal form, a subgroup $R$ of $\gen{X}$ in reduced row echelon form, and a normal crossed homomorphism $\theta$.
We store only the images of $\theta$ under the generators of $R$, as they fully determine $\theta$ as it is a normal crossed homomorphism.

Subgroups are therefore triples $(L,R,\theta)$ as in \eqref{eq:LRtheta}. Let
$\ket v=\ket0\otimes\ket{v_0}+\ket1\otimes\ket{v_1'}$ with $\ket{v_1'}=E\ket{v_1}$,
and let $Stab(v_0)=(L_0,R_0,\theta_0)$ and $Stab(v_1)=(L_1,R_1,\theta_1)$ be known
from the recursive calls. We first show how to compute the stabilizers of $v_1'$.

\subsection*{Conjugating a stabilizer along an edge}

\begin{lemma}\label{lem:conjstab}
  Let $E=(\ell,r)\in\gen{C^{\leq k}R_q,X}$ in the presentation of \Cref{app:group}. Then
  $Stab(v_1')=E\,Stab(v_1)\,E^{-1}=(L_1',R_1',\theta_1')$ with
  \begin{equation*}
    R_1'=R_1,\qquad
    L_1'=\varphi_r(L_1),\qquad
    \theta_1'(b)=\varphi_r(\theta_1(b))\cdot\ell\,\varphi_b(\ell)^{-1}\bmod L_1' ,
  \end{equation*}
  and all three are computable with $O(M^{3})$ operations in $\z_q$.
\end{lemma}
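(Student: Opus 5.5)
The plan is to separate the statement into a group-theoretic identity, a computation inside the semidirect product \eqref{eq:isoX}, and a cost count.

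\emph{Step 1: conjugation.} For $g\in\gen G$ we have $g\ket{v_1'}=\ket{v_1'}$ iff $gE\ket{v_1}=E\ket{v_1}$ iff $E^{-1}gE\in Stab(v_1)$. Hence $Stab(v_1')=E\,Stab(v_1)\,E^{-1}$. This uses only that $E\in\gen G$ is invertible, and that stabilizers are taken inside $\gen G$, without free scalars.

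\emph{Step 2: the triple of the conjugate.} Take a general element $(a,h)=(u\,\theta_1(h),h)$ of $Stab(v_1)$ as in \eqref{eq:LRtheta}, with $u\in L_1$ and $h\in R_1$. Conjugate it by $E=(\ell,r)$ using the multiplication $(a,b)(a',b')=(a\,\varphi_b(a'),bb')$.
\begin{itemize}
\item Since $\gen X\cong\z_2^{\,n}$ is abelian and $r^2=1$, the inverse is $E^{-1}=(\varphi_r(\ell)^{-1},r)$.
\item Then $(\ell,r)(a,h)=(\ell\,\varphi_r(a),rh)$.
\item Multiplying by $E^{-1}$ gives $(\ell\,\varphi_r(a)\,\varphi_{rh}(\varphi_r(\ell))^{-1},\,rhr)$.
\item Since $\varphi$ is an action and $rhr=h$, we get $\varphi_{rh}\circ\varphi_r=\varphi_h$.
\item The diagonal part $\gen{C^{\leq k}R_q}$ (with $d_\emptyset$) is abelian, so the conjugate is
\[
E(a,h)E^{-1}=\bigl(\varphi_r(u)\,\varphi_r(\theta_1(h))\,\ell\,\varphi_h(\ell)^{-1},\ h\bigr).
\]
\end{itemize}
Reading off the triple: the $\gen X$-projection is still $R_1$. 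The kernel part (elements with $h=1$, where $\theta_1(1)\in L_1$) is $\varphi_r(L_1)$. The coset attached to $h$ is $\varphi_r(\theta_1(h))\,\ell\,\varphi_h(\ell)^{-1}\bmod L_1'$.

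One must still check that this triple is a legitimate $(L,R,\theta)$ in the sense of \cite[Theorem 1.3.3]{usenko1991subgroups}. The conjugate of a subgroup is a subgroup, and the triple of a subgroup is unique, so it suffices to identify the three components of the conjugate. The displayed formula does exactly this. Well-definedness modulo $L_1'$ then comes for free, and the crossed-homomorphism identity for $\theta_1'$ need not be verified by hand (it can be, as a sanity check).

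\emph{Step 3: cost.}
\begin{itemize}
\item $\varphi_r$ is $\z_q$-linear and costs $O(M)$ per vector by \Cref{lem:commute}. Applying it to the at most $M$ Howell rows of $L_1$ costs $O(M^2)$, and re-normalizing to Howell form costs $O(M^3)$.
\item For $\theta_1'$: for each of the at most $n\leq M$ generators $b$ of $R_1$, compute $\varphi_r(\theta_1(b))\,\ell\,\varphi_b(\ell)^{-1}$ in $O(M)$. Reduce it modulo the Howell form of $L_1'$ in $O(M^2)$. The total is $O(M^3)$.
\item $R_1'=R_1$ is free.
\end{itemize}

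The main obstacle I expect is Step 2's bookkeeping. One needs the correct inverse in the semidirect product and the correct order of the factors $\varphi_r$ and $\varphi_h$. In particular, one must confirm that the global phase $d_\emptyset$ created by \Cref{lem:commute} is tracked inside $\z_q^{M+1}$, so that the formula for $\theta_1'$ is exact rather than correct only up to scalars.
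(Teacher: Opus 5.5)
Your proposal is correct and follows the paper's proof. The paper also obtains $Stab(v_1')=E\,Stab(v_1)\,E^{-1}$ directly from $\ket{v_1'}=E\ket{v_1}$ and computes conjugation by $(\ell,r)$ as $(a,b)\mapsto(\varphi_r(a)\,\ell\,\varphi_b(\ell)^{-1},b)$, using that the diagonal part is abelian and $r^2=I$; it then reads off the triple and charges a matrix product plus one Howell reduction. You additionally spell out three things the paper leaves implicit: the inverse $E^{-1}=(\varphi_r(\ell)^{-1},r)$, the identity $\varphi_{rh}\circ\varphi_r=\varphi_h$, and the uniqueness-of-triple argument.
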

\begin{proof}
  $\ket{v_1'}=E\ket{v_1}$ gives $Stab(v_1')=E\,Stab(v_1)\,E^{-1}$ directly. Since
  $\gen{C^{\leq k}R_q}$ is abelian and $r^2=I$, conjugation by $(\ell,r)$ sends
  $(a,b)$ to $(\varphi_r(a)\cdot\ell\,\varphi_b(\ell)^{-1},\,b)$, which is the
  displayed triple. Each $\varphi$ is $\z_q$-linear on exponent vectors by
  \Cref{lem:commute}, so each item is a matrix product followed by one Howell
  reduction.
\end{proof}

\subsection*{Computing \texorpdfstring{$Stab(v)$}{Stab(v)}}
Now we show how to compute the stabilizers of $v$.

Let $g\in Stab(v)$ and split it as $g=\ctl(\tilde g)g''X_{(1)}^{x_1}g'''$. By
\Cref{lem:split},
\begin{equation*}
  \ket v=\begin{cases}
    \ket0\otimes g''g'''\ket{v_0}+\ket1\otimes\tilde gg''g'''\ket{v_1'} & x_1=0,\\
    \ket0\otimes g''g'''\ket{v_1'}+\ket1\otimes\tilde gg''g'''\ket{v_0} & x_1=1.
  \end{cases}
\end{equation*}
For $x_1=1$ this forces $\ket{v_0}$ and $\ket{v_1'}$, hence $\ket{v_0}$ and
$\ket{v_1}$, to be $\gen G$-equivalent. Since the children are already canonical,
that means $v_0=v_1$. So if $v_0\neq v_1$ there is no stabilizer with $x_1=1$. For
$x_1=0$ we get $g''g'''\in Stab(v_0)$ and $\tilde gg''g'''\in Stab(v_1')$.

\paragraph*{Compute $L$.} Since $\theta(I)=I$, the subgroup $L$ consists exactly of
the stabilizers with no $\gen X$-part, that is $x_1=0$ and $g'''=I$. The condition
above then reads $g''\in L_0$ and $\tilde gg''\in L_1'$, which is equal to the case without bit-flip as in
\eqref{eq:stabcomap}: we only need to replace $Stab(v),Stab(v_0),Stab(v_1)$ by $L,L_0,L_1'$ respectively. 
As in the proof of \Cref{thm:polytime}, this shows that $L$ is computable in polynomial time.

\paragraph*{Compute $R$.} First the elements with $x_1=0$. From
$g''g'''\in Stab(v_0)$ and $\tilde gg''g'''\in Stab(v_1')$, writing $r:=g'''$, we get
$r\in R_0\cap R_1'$ and, dropping the $\gen X$-parts,
$g''=\ell_0\theta_0(r)$ and $\tilde gg''=\ell_1\theta_1'(r)$ for some
$\ell_0\in L_0$ and $\ell_1\in L'_1$. Combining,
\begin{equation}\label{eq:Rcond}
  \theta'(r)\,\ell'\,\tilde g=I,
  \qquad
  \theta'(r):=\theta_0(r)\theta_1'(r)^{-1},\quad \ell'\in L_0\cdot(L_1')^{-1},
  \quad\tilde g\in\gen{C^{\leq k-1}R_q}.
\end{equation}
Because $\tilde g$ is free in $\gen{C^{\leq k-1}R_q}$, which is exactly the set of
elements supported on the components of arity at most $k$, \eqref{eq:Rcond} is
equivalent to $\pi_a(\theta'(r)\ell')=0$ for every $a$ with $|a|=k+1$. Fix such an
$a$. The map $\chi_a(r):=|a\cap r|\bmod2$ is $\z_2$-linear on $R_0\cap R_1'$, and
$\pi_a\circ\varphi_r$ is the identity when $\chi_a(r)=0$ and inversion otherwise. So
$\pi_a\circ\theta'$ is $\z_q$-linear on $\ker\chi_a$. Choose a basis accordingly:
if $\chi_a\equiv0$ take any basis of $R_0\cap R_1'$; otherwise $\ker\chi_a$ has
index $2$, so take a basis of $\ker\chi_a$ and add one element $r_m$ with
$\chi_a(r_m)=1$. Gaussian elimination over $\z_2$ finds this basis in $O(n^3)$.
With that basis $\pi_a(\theta'(r)\ell')$ is affine in $(\ell',r)$ on each of the two
cosets of $\ker\chi_a$, and its kernel is computed by linear algebra over $\z_q$.
Intersecting over all $a$ of arity $k+1$ and taking $\gen X$-parts gives
\begin{equation*}
  R\cap\gen{X_{(2)},\dots,X_{(n)}}
    =\gen X\cap\bigcap_{|a|=k+1}\ker\bigl(\pi_a(\theta'(\cdot)\,\cdot)\bigr).
\end{equation*}

Now the elements with $x_1=1$. As shown above these exist only when $v_0=v_1$, so
assume that. One such element suffices, since the product of two of them has
$x_1=0$ and is already found. If $E=I$ then $X_{(1)}\in Stab(v)$ and we add it to
$R$. In general, write $u:=g''g'''$ for the part acting on qubits $2,\dots,n$. Since
$X_{(1)}$ and $g'''$ have disjoint support they commute, so the $x_1=1$ split reads
$g=\ctl(\tilde g)\,u\,X_{(1)}$. With $v_0=v_1$ and $\ket{v_1'}=E\ket{v_0}$, the two
conditions of the case distinction above, $u\ket{v_1'}=\ket{v_0}$ and
$\tilde gu\ket{v_0}=\ket{v_1'}$, become
\begin{equation}\label{eq:x1cond}
  uE\in Stab(v_0)
  \qquad\text{and}\qquad
  E^{-1}\tilde gu\in Stab(v_0).
\end{equation}
Setting $s:=uE$, so that $u=sE^{-1}$, the pair \eqref{eq:x1cond} says exactly that
there are $s\in Stab(v_0)$ and $\tilde g\in\gen{C^{\leq k-1}R_q}$ with
$E^{-1}\tilde g\,s\,E^{-1}\in Stab(v_0)$. Conversely, any such $s$ and $\tilde g$
yield $u:=sE^{-1}$ satisfying \eqref{eq:x1cond}, and this $u$ is admissible because
by \Cref{lem:split} the factor $g''g'''$ ranges over the whole group on qubits
$2,\dots,n$ freely; the stabilizer obtained is
\begin{equation*}
  g=\ctl(\tilde g)\,s\,E^{-1}X_{(1)} .
\end{equation*}
That is one coset-membership test in the subgroup generated by $Stab(v_0)$ and
$\gen{C^{\leq k-1}R_q}$, which is a Howell reduction and a residue test, in time
polynomial in $M$.

\paragraph*{Compute $\theta$.} The map $\theta$ needs no separate search; it is read
off the witnesses of the solve that produced $R$. Run the linear solve of
\eqref{eq:Rcond} so that for each generator $r$ of $R$ it returns a witness pair
$(\ell',\tilde g)$ realizing the solution, which Gaussian elimination provides at no
extra cost. Recover the corresponding $g''=\ell_0\theta_0(r)$ and set
\begin{equation*}
  \theta(r):=\ctl(\tilde g)\cdot g'' .
\end{equation*}
By construction $(\theta(r),r)\in Stab(v)$, so $\theta(r)$ is a valid representative,
and two witnesses for the same $r$ differ by an element of $L$, so $\theta(r)$ is
well defined in $\gen{C^{\leq k}R_q}/L$. For the generator $X_{(1)}$, when present,
take the witness produced by the membership test above. Extending $\theta$ from
generators to all of $R$ needs no computation: the crossed-homomorphism law
$\theta(gh)=\theta(g)\varphi_g(\theta(h))$ determines it for every $q$; at $q=2$ the
twist is trivial and the law degenerates to linearity modulo
$\gen{C^{\leq k-1}R_q}$, but nothing below needs that degeneration. This completes
$(L,R,\theta)$.

\subsection*{Minimizing the high edge label}

\Cref{sec:algo} states this minimization as five steps; what follows derives them.
Steps 0 and 2 there are the presentations, Step 1 is the $\gen X$-component
minimized below, Step 3 is the loop over top-arity components, and Step 4 is the
scalar.%

When $v_0=v_1$ the class is the union of the two double cosets of
\Cref{prop:r4class}, so the whole minimization below is run twice --- once from the
base point $E$ and once from the inverse-twisted base point $E^{-1}$ --- and the
smaller of the two answers is taken. That doubles the computational cost and changes nothing
else, so we describe one run.

By \Cref{prop:r4class} we minimize $g_0^{-1}Eg_1$ over $g_i\in Stab(v_i)$ together
with the free factor $\gen{C^{\leq k-1}R_q}$. Writing $E=(\ell,r)$ and
$g_i=(\ell_i\theta_i(r_i),r_i)$ and using
$(a,b)^{-1}(a',b')=(\varphi_b(a^{-1}a'),bb')$ repeatedly,
\begin{equation*}
  g_0^{-1}Eg_1=\bigl(\varphi_{r_0}\bigl(\ell_0^{-1}\theta_0(r_0)^{-1}\ell\,\varphi_r(\ell_1\theta_1(r_1))\bigr),\ rr_0r_1\bigr).
\end{equation*}
Minimize the $\gen X$-component $rr_0r_1$ first. Its minimum is attained on
$r_0=\tilde r_0r'$, $r_1=\tilde r_1r'$ for fixed $\tilde r_0,\tilde r_1$ and
$r'\in R_0\cap R_1$ free. Substituting and collecting the terms that do not depend
on $r'$ into a constant $A$, the objective becomes
\begin{equation}\label{eq:objective}
  \bigl(\varphi_{r'}\bigl[A\cdot\ell'\cdot\theta'(r')\bigr],\ r\tilde r_0\tilde r_1\bigr),
  \qquad
  \ell'\in L':=\gen{\varphi_{\tilde r_0}(L_0),\varphi_{\tilde r_0r}(L_1)},
\end{equation}
with $\theta'(r')=\varphi_{r\tilde r_0\tilde r_1}(\theta_1(r'))\theta_0(r')^{-1}$,
a normal crossed homomorphism computable from the stored representations of $\theta_0,\theta_1$.

Note that only the components
$a$ with $|a|=k+1$ need to be minimized; every component of smaller arity is free by the free factor $\gen{C^{\leq k-1}R_q}$,
and can hence be set to $0$. The components of arity $k+1$ are minimized one at a
time, always in the fixed order \eqref{eq:order}, each one against the domain the
previous one left behind; we describe one pass of that loop. Fix such an $a$ of
arity $k+1$. By \Cref{lem:commute} the map
$\pi_a\circ\varphi_{r'}$ is multiplication by $\chi_a(r')=(-1)^{|a\cap r'|}$ --- the
identity when $|a\cap r'|$ is even and inversion when it is odd; we call it a
\emph{sign} for that reason, and note that all we use is that $\pm1$ are
\emph{units} of $\z_q$. Note that $\varphi_{r'}$ is part of the objective: it is
\begin{equation}\label{eq:obj}
  \pi_a\bigl[\varphi_{r'}(A\,\ell'\,\theta'(r'))\bigr]
  =\chi_a(r')\Bigl(\pi_a[A]+\pi_a[\ell'\theta'(r')]\Bigr) ,
\end{equation}
which is \emph{affine}, not linear, in $(\ell',r')$ --- the constant $\pi_a[A]$ does
not vanish --- and whose only nonlinearity is the sign $\chi_a(r')$.

Minimizing \eqref{eq:obj} is one computation. Choose a basis of
$L'\times(R_0\cap R_1)$ in which at most one $\gen X$-generator $r_m$ has
$\chi_a(r_m)=-1$; Gaussian elimination over $\z_2$ finds it in $O(n^3)$. On the even
coset $\pi_a[\ell'\theta'(\cdot)]$ is a homomorphism into $\z_q$, so its image is a
subgroup $H_a\leq\z_q$, necessarily cyclic; on the odd coset the image is the single
translate $\pi_a[\ell'_m\theta'(r_m)]+H_a$. So the achievable values of
\eqref{eq:obj} are at most two cosets of one cyclic subgroup of $\z_q$, and the
least of them, $A_{(a)}$, is a constant number of residue computations. This is the
only place the union of \Cref{sec:algo} appears, and there it is free.

The new minimization domain is a single coset for every $q$. It is
\begin{equation}\label{eq:dom}
  \Bigl\{(\ell',r')\;\Bigm|\;
    \pi_a\bigl[\varphi_{r'}(A\,\ell'\,\theta'(r'))\bigr]=\pi_a[A]\Bigr\} ,
\end{equation}
translated by any one minimizer. By \eqref{eq:obj}, membership in \eqref{eq:dom}
reads $\pi_a[\ell'\theta'(r')]=0$ when $\chi_a(r')=1$ and
$\pi_a[\ell'\theta'(r')]=-2\pi_a[A]$ when $\chi_a(r')=-1$ --- there the sign flips the
constant as well, so $-\bigl(\pi_a[A]+\pi_a[\ell'\theta'(r')]\bigr)=\pi_a[A]$; the two branches are the
two halves of one condition, and \eqref{eq:dom} is a subgroup of
$L'\times(R_0\cap R_1)$ because it is the kernel of a crossed homomorphism ---
which is a subgroup whether or not the twist is trivial, and whether or not the
\emph{image} is closed. Each branch is one linear solve over $\z_q$, and their union
is presented by the solved form of the even branch together with one extra
generator from the odd branch when that branch is nonempty. Replace $A$ by
$A_{(a)}$ and the domain by that coset, and continue with the next $a$ in the
decreasing-arity order \eqref{eq:order}. After the $\binom{n}{k+1}$ components of
top arity the label is determined, and all remaining components are $0$.

Every step is a Howell reduction or a linear solve over $\z_q$ on matrices of width
$O(M)=O(n^{k+1})$, so R4 runs in time polynomial in $M$ per stored node --- for
every $q$. Together with R1--R3 and R5 this gives the second half of
\Cref{thm:polytime}.

\end{document}